\definecolor{myrefcolor}{rgb}{0.067,0.5,0.5}
\definecolor{myurlcolor}{rgb}{0.1,0,0.9}
\DeclareMathOperator{\poly}{poly}
\newcommand{\pur}[0]{\mathrm{Pur}}
\newtheorem*{theorem*}{Theorem}
\newcounter{thm}
\newtheorem{theorem}[thm]{Theorem}
\newtheorem{lemma}[thm]{Lemma}
\newtheorem{definition}{Definition}
\newtheorem{example}{Example}
\newtheorem{corollary}{Corollary}
\theoremstyle{remark}
\newtheorem{remark}{Remark}
\newcommand{\id}{\mathds{1}}
\newcommand{\sep}{\operatorname{SEP}}
\DeclareMathOperator{\supp}{supp}
\DeclareMathOperator{\rowspan}{rowsp}
\DeclareMathOperator{\otoc}{OTOC}
\newcommand{\sepp}[2]{\operatorname{SEP}(\mathscr{H}_{#1}; \mathscr{H}_{#2})}
\newcommand{\parhead}[1]{\noindent \textbf{\textsf{#1}}}
\newcommand{\cnot}{\mathrm{CNOT}}
\newcommand{\be}{\begin{equation}\begin{aligned}\hspace{0pt}}
\newcommand{\ee}{\end{aligned}\end{equation}}
\newcommand{\ba}{\begin{eqnarray}}
\newcommand{\ea}{\end{eqnarray}}
\newcommand{\haar}[0]{\operatorname{Haar}}
\definecolor{airforceblue}{rgb}{0.36, 0.54, 0.66}
\newcommand{\bb}{\begin{equation}\begin{aligned}\hspace{0pt}}
\newcommand{\bbb}{\begin{equation*}\begin{aligned}}
\newcommand{\eb}{\end{aligned}\end{equation}}
\newcommand{\eeb}{\end{aligned}\end{equation*}}
\begin{document}

\title{Magic-induced computational separation in entanglement theory}

\author{Andi Gu}
\affiliation{Department of Physics, Harvard University, 17 Oxford Street, Cambridge, MA 02138, USA}
\author{Salvatore F.E. Oliviero}
\affiliation{NEST, Scuola Normale Superiore and Istituto Nanoscienze, Consiglio Nazionale delle Ricerche, Piazza dei Cavalieri 7, IT-56126 Pisa, Italy}
\author{Lorenzo Leone}
\affiliation{Dahlem Center for Complex Quantum Systems, Freie Universit\"at Berlin, 14195 Berlin, Germany}

\begin{abstract}
\noindent 
Entanglement serves as a foundational pillar in quantum information theory, delineating the boundary between what is classical and what is quantum. The common assumption is that higher entanglement corresponds to a greater degree of `quantumness'. However, this folk belief is challenged by the fact that classically simulable operations, such as Clifford circuits, can create highly entangled states. The simulability of these states raises a question: what are the differences between `low-magic' entanglement, and `high-magic' entanglement? We answer this question in this work with a rigorous investigation into the role of magic in entanglement theory. We take an operational approach to understanding this relationship by studying tasks such as entanglement estimation, distillation and dilution. This approach reveals that magic has notable implications for entanglement. Specifically, we find an operational separation that divides Hilbert space into two distinct regimes: the entanglement-dominated (ED) phase and magic-dominated (MD) phase. Roughly speaking, ED states have entanglement that significantly surpasses their magic, while MD states have magic that dominates their entanglement. The competition between the two resources in these two phases induces a computational phase separation between them: there are {sample- and time-efficient} quantum algorithms for almost any entanglement task on ED states, while these tasks are {provably computationally intractable} in the MD phase. Our results find applications in diverse areas such as quantum error correction, many-body physics, and the study of quantum chaos, providing a unifying framework for understanding the behavior of quantum systems. We also offer theoretical explanations for previous numerical observations, highlighting the broad implications of the ED-MD distinction across various subfields of physics.
\end{abstract}
\maketitle

\tableofcontents
\clearpage
\section{Introduction}
Entanglement has long stood as a central concept in quantum information, serving as a key differentiating feature between classical and quantum theories. This perspective was both enriched and challenged by the introduction of the stabilizer formalism, which identified a subset of quantum states, known as stabilizer states (and their associated circuits, termed Clifford circuits), which could be simulated efficiently using classical resources~\cite{gottesman_heisenberg_1998}. These classically simulable states could be highly entangled, and hence did not conform to the conventional understanding that entanglement could be the sole metric determining the degree to which a state is truly quantum. Building on this understanding, a new metric, referred to as `magic', or nonstabilizerness, was introduced to quantify the amount of non-Clifford resources necessary to prepare a given quantum state, providing a more nuanced measure of a state's quantumness. Indeed, magic has itself been shown to be intimately connected with a number of fundamentally quantum phenomena \cite{campbell_catalysis_2011,bravyi_trading_2016,beverland_lower_2020, leone_stabilizer_2022,leone_nonstabilizerness_2023,goto_probing_2022,garcia_resource_2023}.

However, there is little work on the interaction of these two measures of quantumness. A previous work showed that entanglement could be calculated exactly for stabilizer states \cite{fattal_entanglement_2004}, and more recent works have shown that the degree of magic in a state is linked to the `entanglement response' of the state \cite{tirrito_quantifying_2023}. In this work, we further explore the intersection of these two notions and find them deeply connected. This exploration leads to implications that extend well beyond the realm of quantum information theory. Our framework provides valuable insights into a wide array of physical phenomena, from the robustness of topological entanglement entropy in quantum error correcting codes to the behavior of entanglement in many-body systems. Moreover, our efficient entanglement estimation protocols find applications in the study of scrambling dynamics and the characterization of quantum chaos. By bridging the gap between these seemingly disparate fields, our work highlights the fundamental role of the entanglement-magic interplay in understanding the complex behavior of quantum systems.

There are many ways to understand entanglement. The approach we take in this work is an \emph{operational} one: we argue that studying the role of magic in entanglement estimation and manipulation tasks leads to a clear understanding of their relationship as a whole. However, a simple operational approach does not yet reveal the full picture. A sequence of recent works~\cite{aaronson_quantum_2023,arnon-friedman_computational_2023} have found that operational characterizations of entanglement can change significantly under the constraint of computational limitations. Taking inspiration from this, we augment our operational approach by studying entanglement tasks under a \emph{computational lens}. This approach allows us to uncover a surprising divide between quantum states, defined by the relationship between magic and entanglement. Specifically, we find that Hilbert can be divided into two distinct phases: the entanglement-dominated (ED) phase and magic-dominated (MD) phase. These phases correspond roughly to states whose entanglement significantly surpasses injected magic, and states where magic dominates entanglement. The boundary between the two phases is demarcated by a computational separation induced by the competition between the two resources. That is, for a number of entanglement detection and manipulation problems, there are {sample- and time-efficient} quantum algorithms that solve these problems for ED states. Conversely, we show that these entanglement manipulation tasks are {provably computationally intractable} for the MD phase.

Our first step in understanding the transition ED-MD is in the setting of entanglement estimability. We find that, in the ED phase, we can efficiently estimate entanglement entropy with an asymptotically (in $n$) vanishing error, {even for volume-law states}. Conversely, we show that in the MD phase, entanglement estimation is inefficient beyond logarithmic entanglement (below this, the swap test allows for efficient estimation). Finding clear evidence of an ED-MD computational separation for entanglement estimation, we then study entanglement distillation. We show that within the class of ED states, we can always efficiently find a polynomial-depth circuit that distills almost all of the entanglement into Bell pairs. We also prove the converse, which says that in the MD phase, it is impossible to find such efficient and optimal distillation protocols. Similarly, in the context of entanglement dilution, we demonstrate that we can always identify an efficient dilution protocol that utilizes an optimal number of Bell pairs to prepare ED states. Conversely, for MD states, we rule out the possibility of an efficient dilution protocol that consumes anything close to an optimal number of Bell pairs. Collectively, these findings provide substantial evidence supporting a computational separation between ED and MD states. This leads us to conclude that there is a stark difference in entanglement structure between these two phases of states (as illustrated by the ED-MD phase diagram in \cref{fig:schematic}). Within the ED phase, entanglement is always structured in such a way that it can be manipulated efficiently and (almost) reversibly. In contrast, for general states in the MD phase, there is almost no structure in the entanglement, making entanglement manipulation inefficient and irreversible.

Given the simple entanglement estimation and manipulation protocols for ED states, one might wonder whether entanglement-dominated states are merely an esoteric theoretical construction. Practically speaking, where can we expect to find ED states? Although the majority of states within the Hilbert space are magic-dominated, we show that ED states appear naturally in a wide range of physical settings, from quantum error correcting codes to many-body systems. Moreover, it is well-known that most states in Hilbert space also cannot be prepared in polynomial time, hence are irrelevant in practice~\cite{Poulin_2011,knill1995approximation,nielsen_quantum_2000}. In contrast, there is a large class of well-known unitary operations which almost always produce ED states, namely circuits dominated by Clifford gates. In particular, Clifford-dominated circuits with $t=o(n)$ non-Clifford gates (importantly, this far exceeds the classical simulability limit $O(\log n)$) produces ED states with overwhelming probability. This means that, in contrast with $t$-doped state simulation algorithms which can only operate efficiently under the assumption that $t$ scales logarithmically in $n$ \cite{aaronson_improved_2004,bravyi_simulation_2019}, our efficiency results regarding ED states hold far beyond this classical simulation limit. However, we emphasize that Clifford-dominated circuits only generate a tiny fraction of all ED states; in fact, most ED states are constructed from far more than just $O(n)$ $T$-gates.

{As applications of our theory, we demonstrate the robustness of entanglement witnesses and topological entanglement entropy for ED states, highlighting their potential for practical quantum error correction and the study of many-body physics. We also develop an efficient testing algorithm which can {classify states} within the ED and MD phases. To conclude, we highlight the relevance of our findings in {many-body physics} by demonstrating the {robustness of topological entanglement entropy} in 3D topological models such as the X-cube model and Haah's code. We also find connections between stabilizer code Hamiltonians and ED states, which extends the applicability of our results to quantum error correcting codes. Lastly, we provide rigorous grounds for previous numerical observations in the literature of Clifford-dominated circuits, such as entanglement-cooling and phase transition in hybrid quantum circuits.

\begin{figure}
    \centering
    \includegraphics[width=0.45\textwidth]{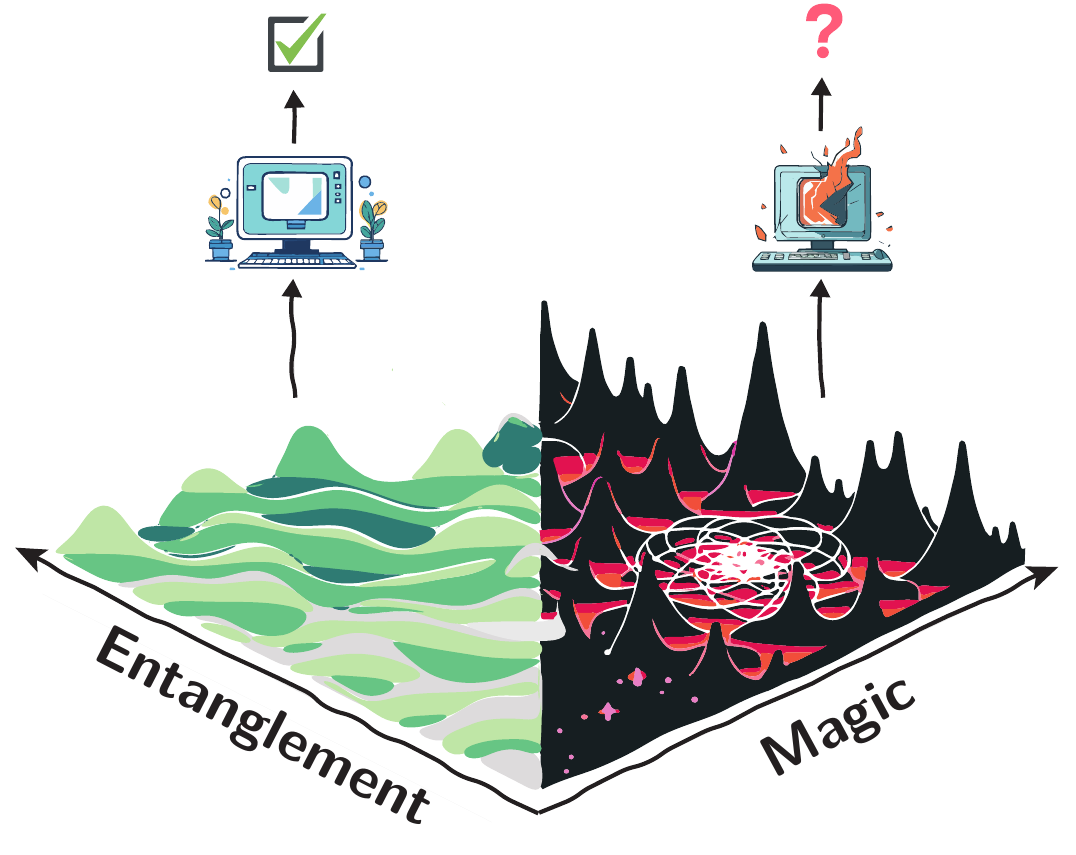}
    \caption{The sharp distinction between entanglement-dominated and magic-dominated states. The landscape visualizes the entanglement structure of states: in the entanglement-dominated phase, entanglement is highly structured and easily manipulable, while in the magic-dominated phase, entanglement can be scrambled in such a complex way that it is completely intractable to measure or manipulate.}
    \label{fig:schematic}
\end{figure}

\subsection{Overview}
This section provides a non-technical summary of our key findings. As highlighted in the introduction, this paper explores the interplay between two fundamental resources that separate quantum computers from classical computers: entanglement and magic. A starting point for our work is the observation that tasks related to entanglement, such as entanglement measurement, witnessing, or manipulation (e.g., EPR-pair extraction, dilution), are sample- and time-efficient for stabilizer states. We investigate how introducing \emph{magic}, through non-Clifford gates in state preparation, affects the entanglement manipulability (hence entanglement structure) of stabilizer states. %

We repeatedly find two classes of states where entanglement-related tasks are either efficiently solvable or provably hard. This stark operational difference splits the Hilbert space into the \textit{entanglement-dominated} phase and \textit{magic-dominated} phase. As previously mentioned, these phases roughly correspond to scenarios where entanglement significantly surpasses injected magic, leading to stabilizer properties dominating, and cases where magic dominates entanglement. For a more precise definition of the two phases, we use resource theoretic monotones --- a natural choice, given the operational approach we take. We use the \textit{stabilizer nullity}~\cite{beverland_lower_2020} as a measure of magic, which has a clear operational meaning in terms of magic-state distillation. On the other hand, we quantify bipartite entanglement in $A|B$ with the \textit{entanglement entropy} $S_{1}(\psi_A)$, which has a clear operational meaning for entanglement distillation and dilution~\cite{horodecki_quantum_2009}. 
\begin{definition}[Entanglement and magic-dominated phases]\label{def:ent-dom}
Let $\ket{\psi}$ be a state with $2^{n-\nu}$ Pauli stabilizers; we say that this state has stabilizer nullity $\nu$~\cite{beverland_lower_2020}. Let $A|B$ an extensive bipartition and let $S(\psi_A)$ be the von Neumann entropy of $\psi_A\coloneqq \Tr_B \ketbra{\psi}{\psi}$. We say $\ket{\psi}$ is entanglement-dominated if $S(\psi_A) = \omega(\nu)$, and it is magic-dominated if $S(\psi_A) = O(\nu)$. We note that this definition depends on the bipartition $A|B$. This choice of bipartition will typically be implicit; whenever it is not clear from the context, we will specify which bipartition we are defining entanglement-dominated or magic-dominated with respect to. Furthermore, we generally assume $n_A,n_B=\omega(1)$, otherwise all states would be trivially magic-dominated.
\end{definition}

We will study the measurability and manipulability of entanglement for these two classes of states. For entanglement manipulation, there are two tasks: entanglement distillation and dilution. The goal of entanglement distillation is to use LOCC (local operations and classical communication) to transform a state $\psi$ into as many Bell pairs $M_+$ as possible. This number is the \emph{distillable entanglement} of $\psi$. For the reverse of entanglement distillation, namely entanglement dilution, the number of Bell pairs $M_{-}$ required to prepare $\psi$ via LOCC is known as its \textit{entanglement cost}. Having outlined the relevant entanglement tasks, we use the table below to summarize the key findings of this work. Throughout the manuscript, we often make use of Bachmann-Landau notation for asymptotics~\footnote{To remind the reader, $f=o(g(n)) \implies \lim_{n \to \infty} {f(n)}/{g(n)}=0$, $f=\omega(g(n)) \implies \lim {f(n)}/{g(n)}=\infty$, $f=O(g(n))\implies \exists C_2>0$ such that $ \lim {f(n)}/{g(n)}\le C_2$, $f=\Omega(g(n))\implies \exists C_1>0$ such that $ \lim {f(n)}/{g(n)}\ge C_1$  and finally $f=\Theta(g(n))$ means $f=\Omega(g(n))$ and $f=O(g(n))$.}.

\begin{table}[H]
    \centering
      \vspace{1em}
    \begin{tblr}{
  colspec = {X[h]||c|c},}
  \centering \textbf{Protocol} & \textbf{Entanglement-dominated} & \textbf{Magic-dominated} \\
         \hline \hline
         An efficient state-agnostic protocol which produces an estimate $\tilde{S}(\psi_A)$ of the true entanglement $S(\psi_A)$. & \large $\frac{\abs{S(\psi_A)-\tilde{S}(\psi_A)}}{S(\psi_A)} = o(1)$ & \large $\frac{\abs{S(\psi_A)-\tilde{S}(\psi_A)}}{S(\psi_A)}=\Omega(1)$ \\ \hline
         An efficient state-agnostic protocol which distills $M_{+}$ Bell pairs from the state $\psi$. & \large $M_+/S(\psi_A) = 1-o(1)$ & \large $M_+/S(\psi_A) = o(1)$ \\ \hline
         An efficient state-agnostic LOCC protocol which uses $M_-$ Bell pairs to prepare a state $\psi$ across $A|B$, where either $A$ or $B$ has local access to polynomial copies of $\psi$. & \large $M_-/S(\psi_A) = 1+o(1)$ & \large $M_-/S(\psi_A) = \omega(1)$
  \end{tblr}
    \label{tab:transition}
\end{table}

\begin{figure}[H]
    \centering
    {
    \def\svgwidth{0.6\textwidth}
    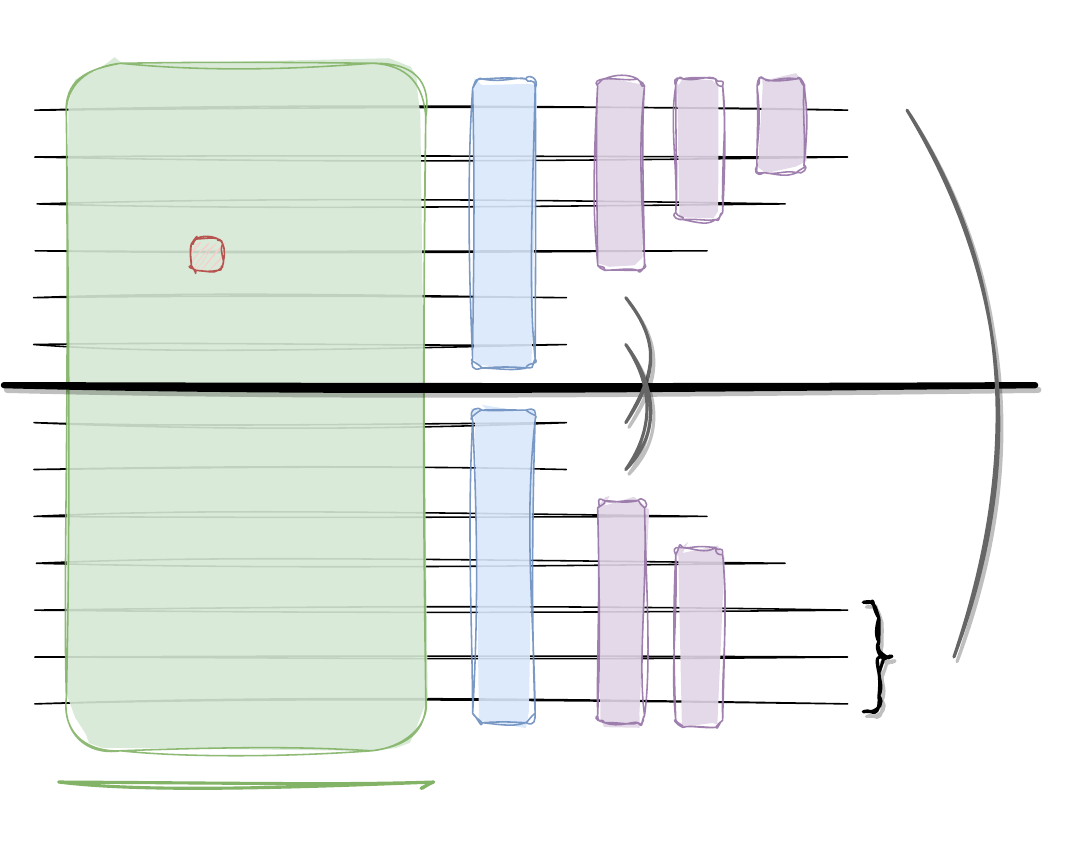
    }
\caption{Schematic of our entanglement distillation protocol (left to right) and dilution protocol (right to left).}\label{fig:distill}
\end{figure}

\begin{theorem*}[Efficient entanglement measurement and manipulation for ED states~{[\cref{thm:stab-approx,eq:detectability,thm:distill,lem:phasetransdistill,thm:dilution,lem:phasetransdilu}]}] Let $\ket{\psi}$ be an unknown entanglement-dominated state across the bipartition $A|B$ with $n_A,n_B=\omega(\log n)$. There exists sample- and time-efficient quantum algorithms for each of the following tasks.
\setlist{nolistsep}
\begin{enumerate}[noitemsep]
    \item \emph{Entanglement estimation:} Estimates $S(\psi_A)$ up to an $o(1)$ relative error.
    \item \emph{Entanglement distillation:} Distills $M_+$ Bell pairs from $\psi$ using LOCC operations, with $M_+/S(\psi_A) = 1-o(1)$. Unlike more conventional protocols~\cite{bennett1996concentrating}, this protocol requires only a single copy of the input state and makes no error: it is a \emph{one-shot, zero error protocol}.
    \item \emph{Entanglement dilution:} Prepares $\psi$ across the bipartition $A|B$ using LOCC, $M_-$ Bell pairs, where $M_-/S(\psi_A) = 1+o(1)$ and a number of bits of classical communication $N_{\mathrm{CC}}= o(S(\psi_A))$.
\end{enumerate}
\end{theorem*}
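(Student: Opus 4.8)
The plan is to build everything on a single structural principle: for a state of stabilizer nullity $\nu$, the entanglement across $A|B$ is controlled by the state's stabilizer group up to additive corrections of order $\nu$, so that in the ED regime ($S(\psi_A)=\omega(\nu)$) the entanglement is, to a $1-o(1)$ factor, pure \emph{stabilizer entanglement}. First I would prove a decomposition theorem analogous to the stabilizer-state entanglement normal form of Fattal et al.: using local Clifford operations determined by the stabilizer group $G$ of $\ket{\psi}$, any nullity-$\nu$ state can be brought into a canonical form consisting of $M$ Bell pairs shared across $A|B$, tensored with local stabilizer registers, together with an $O(\nu)$-qubit ``magic core'' carrying all the non-stabilizer character. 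The entropy then splits as $S(\psi_A)=M+S_{\mathrm{core}}$ with $0\le S_{\mathrm{core}}=O(\nu)$, since each independent non-local Pauli that fails to be a stabilizer contributes at most $O(1)$ to the entropy and there are at most $O(\nu)$ of them. This yields $M/S(\psi_A)=1-o(1)$ in the ED phase, the backbone of all three results.

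The second ingredient is an efficient, state-agnostic subroutine to \emph{learn} the relevant stabilizer data from copies of $\ket{\psi}$. I would use Bell-difference sampling on pairs of copies: measuring $\ket{\psi}\otimes\ket{\psi}$ in the Bell basis returns Pauli operators distributed according to the characteristic function of $\psi$, with members of the stabilizer group $G$ enhanced. With $\poly(n)$ samples one recovers a generating set for $G$, and hence the local subgroups $G_A,G_B$, with high probability; crucially, the sample complexity degrades only gracefully with $\nu$, which is what lets the protocol operate at $\nu=o(n)$, far past the $O(\log n)$ classical-simulation threshold. From $G_A,G_B$ one then reads off, via the canonical-form computation, both the Bell-pair number $M$ and an explicit local Clifford realizing the decomposition above.

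Given these two ingredients, the three tasks follow directly. For \emph{estimation}, output $\tilde S(\psi_A)=M$; the structural theorem gives $\abs{S(\psi_A)-\tilde S(\psi_A)}=S_{\mathrm{core}}=O(\nu)=o(S(\psi_A))$, an $o(1)$ relative error. For \emph{distillation}, apply the learned local Clifford (an LOCC operation) to the single copy, deterministically rotating the stabilizer entanglement into $M=(1-o(1))S(\psi_A)$ Bell pairs, which gives a one-shot, zero-error protocol. For \emph{dilution}, run the same circuit in reverse: the party holding copies of $\psi$ learns the canonical form, and the parties consume $M+S_{\mathrm{core}}=(1+o(1))S(\psi_A)$ Bell pairs to synthesize the Bell-pair register and teleport the $O(\nu)$-qubit magic core, requiring only $N_{\mathrm{CC}}=O(\nu)=o(S(\psi_A))$ bits of classical communication.

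I expect the main obstacle to be the structural decomposition theorem, and in particular the \emph{tightness} of the $O(\nu)$ bound on the non-stabilizer contribution $S_{\mathrm{core}}$: one must show not merely that $G$ determines a large Bell-pair block, but that the residual entanglement outside it cannot exceed the nullity, \emph{uniformly} over all nullity-$\nu$ states. Controlling this residual requires a careful argument about how non-commuting, non-stabilizer Pauli sectors can be packed across the cut, and it is on this step that the clean $1\pm o(1)$ ratios --- and hence the separation from the MD phase --- ultimately rest. A secondary technical point is confirming that Bell-difference sampling stays sample-efficient and robust throughout the full $\nu=o(n)$ range, rather than only in the $O(\log n)$ regime where prior $t$-doped learning guarantees are usually phrased.
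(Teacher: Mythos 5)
Your proposal follows essentially the same route as the paper: the canonical-form decomposition you describe (nonlocal stabilizer pairs rotated by local Cliffords into Bell pairs, plus an $O(\nu)$-qubit non-stabilizer core whose entropy contribution is bounded by the nullity) is exactly the paper's combination of the Fattal-style pairing argument in \cref{thm:distill}, the $O(\nu)$ additive bounds of \cref{lem:salpha-bound,thm:stab-approx}, and nullity distillation, while your Bell-difference-sampling learning step is the cited result of Grewal et al.\ (\cref{thm:stab-learn}), and your dilution-by-reversal-plus-teleportation of the core matches \cref{thm:dilution}. The one step you flag as the main obstacle --- the uniform $O(\nu)$ bound on the residual entropy --- is resolved in the paper by the algebraic structure theorem (\cref{thm:pure-struct}), which bounds the number of non-stabilizer Pauli cosets by $2^{2\nu}$ and hence sandwiches $S_\alpha$ between $S_\infty$ and $S_0$ of the reduced state, rather than by counting $O(1)$ contributions per non-stabilizer Pauli as you sketch.
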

\noindent \big[\![\emph{Sketch of the proof.}  
The only prior knowledge required by each of these protocols is the stabilizer group $G$ of the state $\psi$ (which can be learned efficiently~\cite{grewal_efficient_2023}).
For the entanglement measurement task, generalizing the result of~\cite{fattal_entanglement_2004}, a purely algebraic analysis reveals that the $\alpha$-R\'enyi entropies for states with stabilizer nullity $\nu$ are also equal to $\frac{1}{2} \log_2\abs{G_{AB}}$, up to an \emph{additive error $O(\nu)$} (hence a relative error $o(1)$ in the ED phase), where $G_{AB}$ is the group of stabilizers for $\psi$ which act nontrivially on both $A$ and $B$. This tells us something important for the entanglement distillation task: for ED states, \emph{most of the entanglement is encoded by the nonlocal stabilizers $G_{AB}$ of the state}. Therefore, simply by using local Clifford unitaries to transform these stabilizers to $XX$ and $ZZ$ pairs across the bipartition, we distill the stabilizer portion of entanglement (which is, for entanglement-dominated states, asymptotically equal to \emph{all} the entanglement) into Bell pairs. These unitaries are shown in blue in \cref{fig:distill}. Finally, our dilution protocol effectively works backward from this. After our distillation protocol, we further distill the remaining local stabilizers $G_A$ and $G_B$ into single-qubit $Z$s, resulting in a state on the rightmost side of \cref{fig:distill}, which contains a bipartite entangled state $\ket{\sigma'}$ that holds the (small) nonstabilizer portion of the entanglement. To prepare the original state $\ket{\psi}$, it suffices to prepare this state $\ket{\sigma'}$ across the bipartition, then run the protocol (comprised of purple and blue unitaries) in reverse; we prepare $\ket{\sigma'}$ with a simple state teleportation protocol, which is asymptotically free because $\ket{\sigma'}$ lives on $O(\nu)\leq o(S(\psi_A))$ qubits.]\!\big]

\medskip

The previous theorem demonstrates that entanglement-related tasks, such as entanglement estimation, distillation, and dilution, can be performed efficiently for states in the entanglement-dominated (ED) phase. These positive results highlight the tractability of manipulating and characterizing entanglement in ED states, even for those with high entanglement entropy (i.e., volume-law states). The efficiency of these tasks in the ED phase can be attributed to the structure of entanglement in these states, which allows for effective compression and manipulation using stabilizer operations. In stark contrast, the following theorem establishes the hardness of these same entanglement-related tasks for states in the magic-dominated (MD) phase. It shows that, in general, estimating entanglement entropy or performing entanglement distillation and dilution for MD states cannot be done efficiently, even when allowing for a constant relative error. This computational intractability arises from the complex and unstructured nature of entanglement for general MD states.

\begin{theorem*}[Hardness of entanglement estimation and manipulation for MD states~{[\cref{thm:no-go-magic-dom}]}] 
Take any bipartition $A|B$ that satisfies $n_A,n_B=\omega(\log n)$ and let $\ket{\psi}$ be a pure state with $S(\psi_A)=\omega(\log n)$. There is no efficient protocol which can estimate $S(\psi_A)$ to within a constant relative error for arbitrary MD states. Furthermore, the distillable entanglement of general MD states is $M_+/S(\psi_A) = o(1)$ while the entanglement cost is $M_-/S(\psi_A) = \omega(1)$.
\end{theorem*}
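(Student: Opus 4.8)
The plan is to exhibit an explicit family of magic-dominated states on which all three tasks provably fail, and to reduce every failure to a single source of hardness: the existence of \emph{pseudoentangled} ensembles~\cite{aaronson_quantum_2023,arnon-friedman_computational_2023}. Concretely, I would fix the bipartition $A|B$ and invoke the pseudoentanglement constructions to obtain two efficiently-preparable ensembles $\mathcal{E}_{\mathrm{hi}}$ and $\mathcal{E}_{\mathrm{lo}}$ such that (i) states from $\mathcal{E}_{\mathrm{hi}}$ have $S(\psi_A)=\Theta(n)$; (ii) states from $\mathcal{E}_{\mathrm{lo}}$ can be tuned to any $S(\psi_A)=\omega(\log n)$ with $S(\psi_A)=o(n)$, e.g.\ $\Theta(\log^2 n)$; and (iii) the two ensembles are computationally indistinguishable given any polynomial number of copies. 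Note that the $\omega(\log n)$ threshold in the hypothesis matches exactly the feasibility threshold of pseudoentanglement (below $\log n$, the swap test distinguishes), so both ensembles legitimately fall under the theorem. The first thing to establish is that both ensembles are MD: since pseudorandom phase/subset states are generically far from stabilizer states, I would show their stabilizer nullity is $\nu=\Omega(n)$, so that $S(\psi_A)\le n/2 = O(\nu)$ in either ensemble. A softer consistency check is the contrapositive of the positive theorem: were these states ED, \cref{thm:stab-approx} would estimate their entanglement efficiently and distinguish the ensembles, contradicting (iii); hence they must be MD.

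Given this, estimation hardness is immediate. Suppose an efficient state-agnostic protocol outputs $\tilde S(\psi_A)$ with relative error bounded by some constant $\epsilon<1$ on all MD states. Running it on the two ensembles confines its outputs (with high probability) to the intervals $[(1-\epsilon)\Theta(n),(1+\epsilon)\Theta(n)]$ and $[(1-\epsilon)\Theta(\log^2 n),(1+\epsilon)\Theta(\log^2 n)]$, which become disjoint for large $n$ because the entanglement ratio diverges. Thresholding $\tilde S$ then distinguishes $\mathcal{E}_{\mathrm{hi}}$ from $\mathcal{E}_{\mathrm{lo}}$ with non-negligible advantage, contradicting (iii). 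Since $\epsilon$ was an arbitrary constant, this rules out any constant-relative-error estimator and proves the first claim.

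For distillation and dilution I would apply the same indistinguishability to the \emph{number of Bell pairs} the protocol manipulates, which is an efficiently observable quantity. For distillation, the count $M_+$ of output pairs is directly observable, and LOCC monotonicity of entanglement forces $M_+ \le S(\psi_A)+o(S(\psi_A))$, so on $\mathcal{E}_{\mathrm{lo}}$ we have $M_+ = O(\log^2 n)$; were the protocol to distill $\Omega(n)$ pairs on $\mathcal{E}_{\mathrm{hi}}$, comparing the pair counts would distinguish the ensembles, so by (iii) the count stays $O(\log^2 n)$ on $\mathcal{E}_{\mathrm{hi}}$, giving $M_+/S(\psi_A)=o(1)$ there. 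Dilution is dual: the number $M_-$ of consumed pairs is a function of the input copies, hence its distribution is statistically close across the ensembles, but LOCC monotonicity now forces $M_-\ge S(\psi_A)=\Theta(n)$ on $\mathcal{E}_{\mathrm{hi}}$ for the protocol to succeed, so indistinguishability transfers $M_-=\Omega(n)$ to $\mathcal{E}_{\mathrm{lo}}$, where $S(\psi_A)=\Theta(\log^2 n)$ and therefore $M_-/S(\psi_A)=\omega(1)$. Thus the high-entanglement ensemble is the hard instance for distillation and the low-entanglement ensemble is the hard instance for dilution, both MD, completing the remaining claims.

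The main obstacle I anticipate is twofold. First, pinning down the MD property rigorously: I must control the stabilizer nullity of the concrete pseudoentangled construction (or argue it cleanly via the contrapositive above) and verify that the tunable gap can be realized \emph{simultaneously} with $S(\psi_A)=\omega(\log n)$ in both ensembles and $\nu=\Omega(n)$. Second, making the Bell-pair-counting reductions robust: I need to upgrade the information-theoretic LOCC monotonicity bounds into statements about approximate protocols with small but nonzero error and bounded success probability, and to argue that the pair count is efficiently samplable from the copies available to the protocol so that any gap in this count yields a genuine efficient distinguisher. Handling the ``local access to polynomial copies'' model in dilution additionally requires checking that these locally held copies cannot themselves serve as a source of cross-bipartition entanglement, so that the consumed Bell pairs remain the only ebit resource and the monotonicity bound $M_-\ge S(\psi_A)$ indeed applies.
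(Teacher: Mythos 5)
Your proposal is correct and follows essentially the same route as the paper: both encode a pseudoentangled (subset-state) low-entanglement ensemble alongside a high-entanglement ensemble, verify both are MD, and turn any good estimator, distiller, or diluter into a distinguisher, with the Bell-pair count (read off by swap tests) serving as the efficiently observable statistic for the manipulation tasks. The only substantive differences are that the paper pads the pseudoentangled states into a $\nu$-qubit subsystem as $\ket{\phi}\otimes\ket{0}^{\otimes(n-\nu)}$ so the nullity is tunable to any $\omega(\log n)$ scaling (yielding the stronger ``promised nullity'' clause of \cref{thm:no-go-magic-dom}), and it uses the \emph{statistically} indistinguishable random-subset-phase-state variant of~\cite{aaronson_quantum_2023}, so the no-go is an unconditional sample-complexity bound rather than one resting on cryptographic assumptions --- you should adopt that variant to match the strength of the claim.
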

\noindent \big[\![\emph{Sketch of the proof.} We encode pseudoentangled states~\cite{aaronson_quantum_2023} into MD states. An important conceptual part of this proof is that this encoding is only possible for MD states; if our encoding adds too much entanglement, the pseudoentanglement of the original states is washed out. We then show that a protocol for entanglement estimation with constant relative error could be used to distinguish between these encoded pseudoentangled states and Haar random states, leading to a contradiction. Similarly, a good protocol for either distillation or dilution, in the sense that $M_+,M_- = \Theta(S(\psi_A))$, could be used as a distinguisher, leading to another contradiction. Remarkably, this is the first no-go result for dilution using pseudoentangled quantum states.]\!\big] 

\medskip

We qualify this claim for readers not familiar with hardness proofs. In (almost) every hardness theorem, there are two crucial points to be aware of. The first part of any hardness theorem is the class of problems for which the problem is claimed to be hard. In the theorem above, the only information we are given is that the state is magic-dominated, which means we are forced to consider very general algorithms that work for the broad class of MD states. With more information at hand, in principle, there could be alternative strategies tailored for specific classes of MD states. The theorem merely states that there is no efficient strategy for entanglement-related tasks that works in full generality on \emph{all} MD states (in stark contrast to our results about ED states). This brings us to the second point. Hardness proofs refer to \textit{worst-case hardness}, meaning that among MD states, we guarantee that there is at least one (sub)class of states for which no efficient strategy can perform well. Indeed, the proof of this is constructive: we concretely identify subclasses of states for which this is true (see \cref{sec:ed-vs-md}). In principle, there are probably many more classes spread around the larger set of MD states for which there are no such efficient strategies. This is best understood with a canonical example: it is well-known that finding the ground state of a general classical system is NP-hard. However, does this mean that finding the ground state of \emph{every} classical Hamiltonian is hard? Of course not. However, there are important classical physical systems, such as spin glasses, for which finding the ground state is provably hard. This constitutes a class of systems for which the hardness proof holds true. Given the two axes representing magic and entanglement respectively, we show that the entanglement-related tasks analyzed above are provably efficient and agnostic for \textit{every} ED state. On the other hand, for a wide range of magic, we construct a family of MD states for which entanglement-related tasks are hard. Given that, what is the true meaning of the computational separation found in this work? It means that, in the ED phase, we are guaranteed the existence of efficient and agnostic protocols, while in the MD phase, efficient and agnostic protocols do not exist; hence, tailored protocols for specific classes should be adopted. However, at the same time, there exist classes of states within the MD phase for which no efficient protocols exist at all.

\medskip
\parhead{Multipartite entanglement distillation.} Besides bipartite entanglement distillation, there's also the challenge of multipartite entanglement distillation. In this scenario, if $k$ parties share an entangled state, their goal is to distill some target $k$-partite entangled state (e.g., a GHZ state) using LOCC operations. Interestingly, it has been established that this task is unachievable for the vast majority of states~\cite{PhysRevLett.111.110502,PhysRevX.8.031020}. However, identifying a generalization of the ED phase in $k$-partite setting, in \cref{thm:multipartite} we show that we can \emph{deterministically} distill many copies of a $k$-partite GHZ state from ED states using an efficient LOCC protocol.

\medskip

\parhead{Entanglement witnessing and robustness.} While precisely measuring entanglement can be a challenging and noise-sensitive task, the less ambitious goal of merely witnessing entanglement can be easier and more noise-resilient~\cite{terhal_entanglement_2000,guhne2009entanglement}. We will address this task in~\cref{subsec:witnessingent}. The purpose of a witness is to experimentally validate the presence of genuine entanglement in an imperfectly prepared version of the target state. We define an entanglement witness for ED states that can be measured with $O(1)$ sample complexity. With~\cref{th:multipartiteEentwitness}, we strengthen this result by defining a similar witness for multipartite entanglement. This witness verifies entanglement across $k$ parties --- that is, it rules out the possibility of the state being unentangled across any of the $k$ given partitions. As a corollary of this, we find that entanglement for ED states can be far more robust than the entanglement of generic states, see \cref{cor:robustnessmultipartite}. The reason for this is as follows. The Fannes inequality roughly says that $\abs{S(\rho_A)-S(\psi_A)} \lesssim n_A T,$ where $T$ is the trace distance between $\psi$ and its noisy version $\rho$. If $\psi$ were a generic state with $S(\psi_A) \sim \sqrt{n_A}$, then we would generally need $T < 1/\sqrt{n_A}$ to guarantee that $\rho$ were not separable across $A|B$. On the other hand, if $\psi$ were in the ED phase, we show that we could tolerate up to $T \lesssim 1-2^{-\Omega(\sqrt{n_A})}$, showing that the entanglement within the ED phase is extremely robust.
\medskip

\parhead{Phase classification and testing.} In light of the clear distinction between ED-MD phases, one might ask whether it is possible, given query access to an unknown state $\ket{\psi}$, to determine the phase in which it resides. We formalize this task as a property testing problem and show that the separation between ED-MD phases can indeed be efficiently tested. More precisely, in \cref{th:distinction}, we present a polynomial-time algorithm that can discriminate whether $\ket{\psi}$ is an ED state or it is $\epsilon$-far from any state in the entanglement-dominated phase and, as such, lies in the MD phase.

\medskip

\parhead{Applications to physics.} We conclude by discussing the implications of our results in the context of many-body physics. First, leveraging on the result of Ref.~\cite{gu_hamiltonian_2024}, we stem a potential application of the techniques developed in our work. In~\cref{sec:manybodyphysics}, we show how ED states appear in the context of many-body physics. We demonstrate the robustness of ground state entanglement for such models and study the robustness of topological entanglement entropy to perturbations in models such as the X-cube model or Haah's code. Taking advantage of the fact that the topological entanglement entropy in these models scales extensively, we show that this topological entanglement persists under any perturbation to the Hamiltonian that has a subextensive (i.e., $o(n)$) number of terms, \textit{regardless the strength of the perturbation}. Additionally, in~\cref{sec:lyapunov}, we present a method for quantifying the Lyapunov exponent – the rate at which correlations propagate in unitary dynamics – within random quantum circuits. This algorithm is based on the efficiency of estimating entanglement in the ED phase. Subsequently, leveraging on the theoretical tools developed in this work, in~\cref{sec:phenomenology}, we analytically and rigorously elucidate certain numerical observations in the literature. Specifically, we offer a straightforward proof demonstrating that entanglement in states constructed from Clifford gates, supplemented by, at most, a linear number of non-Clifford gates, can be efficiently and agnostically cooled down. This result was previously observed numerically in Ref.~\cite{true_transitions_2022}. Moreover, we delve into the study of the magic-entanglement transition in hybrid quantum circuits, complementing the analysis from Ref.~\cite{fux_entanglementmagic_2023}. In this context, we rigorously establish the absence of ED states in these circuits, confirming a numerical observation of Ref.~\cite{fux_entanglementmagic_2023}.

\medskip

\parhead{Pseudocodes.} Finally, we provide pseudocode for many of the protocols we have just described. First, we show how to \emph{exactly} compute the $2$-R\'enyi entanglement entropy of a state with stabilizer nullity $\nu$ using a classical algorithm whose runtime is $O(4^\nu n^3)$. We then show how to efficiently monitor the stabilizer entanglement produced by a $t$-doped Clifford circuit using an efficient classical algorithm, and finally show how to efficiently witness entanglement for ED states. These procedures can be of independent interest, so we dedicate a separate section for these pseudocodes in \cref{sec:Pseudocodesfornumerics}.

\subsection{Related works}
This work aims to contribute to the growing body of research within the quantum resource theory that investigates how quantum resources interact with each other. While our work presents a detailed exploration of the relationship between magic and entanglement, similar ideas have been previously explored to some extent, providing a foundation for our investigation. This section discusses and contextualizes our results in light of these earlier contributions.

Recently, there has been a growing interest in understanding how the efficiency of quantum algorithms impacts the feasibility of quantum resource protocols, particularly in the context of entanglement~\cite{aaronson_quantum_2023} and magic~\cite{gu2023little}. To understand these developments, we refer to a seminal work~\cite{bennett1996concentrating}, which showed that the number of Bell pairs that can be distilled per copy of bipartite state $\psi_{AB}$ is asymptotically equal to the von Neumann entanglement $S_1(\psi_A)$ of the state. Furthermore, it was shown that this rate is optimal. Despite the apparent finality of this result, there are several drawbacks associated with the protocol used to accomplish this. First, since distillation techniques make use of the idea of typical sets, it requires \emph{many} copies of the input state -- for even simple cases, up to $10^6$ copies may be required before the desired concentration begins to kick in~\cite{wilde_quantum_2013}. Second, it relies on knowledge of the input state (i.e., it is not input state agnostic). Finally, there are no guarantees on the computational efficiency of the protocol. In fact, the recently introduced concept of quantum pseudoentangled demonstrates rigorously that this protocol cannot be efficient in general~\cite{aaronson_quantum_2023,gu2023little}, a problem elaborated by the framework of \emph{computational entanglement theory}~\cite{arnon-friedman_computational_2023}. Our work presents an entanglement distillation protocol that circumvents all three of these problems for arbitrary entanglement-dominated states: it works just as well on one copy of the state as on many copies, it is agnostic to the input state, and is computationally efficient. The performance tradeoff for these gains is minimal: we are able to distill almost all the entanglement out of entanglement-dominated states. 
Furthermore, we show in a converse result that this is the best one can hope for. No efficient, input state-agnostic and optimal protocol can exist for states that are magic-dominated. Additionally, we generalize these findings to entanglement dilution, proving a magic-induced computational separation in entanglement cost. We offer the first no-go results for entanglement dilution using pseudorandom states. While a computational separation in entanglement cost has been found in Ref.~\cite{arnon-friedman_computational_2023}, this work only ruled out a one-way LOCC dilution protocol. In contrast, our results hold for general LOCC channels, with an arbitrary number of communication rounds.

Beyond the context of quantum resource theories, our results also have strong connections with the quantum information literature on $t$-doped stabilizer states. This class of states is of broad interest because they are widely believed to be the simplest class of states that cannot be classically simulated (for $t=\omega(\log n)$), but can also be prepared on early fault-tolerant quantum hardware. We outline the essential features of these $t$-doped stabilizer states that have been found already in past works. We will also describe how our work relates to these previous results and present directions for future investigation. Unsurprisingly, the first place where $t$-doped stabilizer states primarily emerged was in the context of classical simulation of quantum circuits. The Gottesman-Knill theorem~\cite{gottesman_heisenberg_1998} showed that Clifford circuits can be efficiently represented with $O(n^2)$ bits and classically simulated with runtime $O(n^3)$. The introduction of $t$ non-Clifford gates makes the simulation exponentially harder in $t$. This fact was first explored in Ref.~\cite{aaronson_improved_2004}, and then improved several times, e.g. Refs.~\cite{bravyi_improved_2016,bravyi_simulation_2019}. Despite improvements, the most advanced classical algorithm for simulating $t$-doped Clifford circuits still exhibits exponential scaling in $t$, which has been shown to be optimal~\cite{huang2019explicit}. Therefore, circuits (and the states they generate) which use $t=O(\log n)$ non-stabilizer gates can be efficiently classically simulated. However, once $t=\omega(\log n)$, any classical simulation algorithm requires a super-polynomial amount of resources.

Doped Clifford circuits have been proven to be extremely useful for quantum information processing tasks. In particular, they appear in the context of unitary $k$-designs, which are ensembles of unitaries that reproduce the first $k$ moments of the uniform measure over the unitary group. The Clifford group is a unitary $3$-design~\cite{webb_clifford_2016} and fails to be a unitary $k$-design for any $k\ge 4$~\cite{zhu_clifford_2016}. Therefore, it is natural to ask whether injecting $t$ non-Clifford gates can produce higher $k$-designs. Although Refs.~\cite{leone_quantum_2021,oliviero_transitions_2021} showed that $t=\Theta(n)$ is required for an exact $k$-design, in Ref.~\cite{haferkamp2020QuantumHomeopathyWorks}, it was shown that for $t=O(k^4)$ (independent of $n$), a $t$-doped Clifford circuit successfully produces an \emph{approximate} $k$-design. Since approximate $k$-designs suffice for the vast majority of quantum information processing tasks, this is just one demonstration of the usefulness of $t$-doped Clifford circuits.

Lastly, $t$-doped Clifford circuits and $t$-doped stabilizer states have recently attracted attention in the context of quantum learning theory. The objective of quantum learning is to obtain a classical description of an unknown state or unitary transformation while being restricted to a particular access model, a process known as \emph{tomography}. While the task of learning an arbitrary state $\psi$ or unitary operator $U$ has exponential sample complexity in general, it is widely understood that these learning tasks can become tractable by restricting oneself to learning particular classes of circuits or states. To name just a few examples of this, Ref.~\cite{huang2024learning} recently showed that $O(1)$ depth quantum circuits can be learned efficiently, and Ref.~\cite{cramer_efficient_2010} showed that matrix product states can also be learned with polynomial resource requirements. This pattern holds true for the class of $t$-doped Clifford circuits~\cite{leone_learning_2024,oliviero_unscrambling_2024} and $t$-doped stabilizer states~\cite{leone2023learning,grewal2023efficient,hangleiter_bell_2023,grewal2023efficientII,Chia2024efficientlearningof}: each of these works, using distinct methods and access models, have shown that $t$-doped circuits and states can be learned using $\poly(n, 2^t)$ sample complexity and $\poly(n, 2^t)$ classical postprocessing time. However, this means that learning $t$-doped Clifford circuits and $t$-doped stabilizer states becomes infeasible whenever $t=\omega(\log n)$, exhibiting the same threshold as classical simulation algorithms. Besides suggesting that quantum learning theory has a strong relation to classical simulability of quantum mechanics~\cite{landoncardinal2010efficient,cramer_efficient_2010,grewal_efficient_2023}, this fact raises a question: are there information processing tasks that are both relevant and efficient beyond the seemingly ubiquitous threshold of $t=\omega(\log n)$? 

A concurrent work~\cite{grewal2024pseudoentanglement} made progress along these lines. While their work focuses on bounding entanglement entropy (their Theorem 3.1 resembles our \cref{lem:salpha-bound}), our paper focuses on the operational and computational aspects of the entanglement-magic interplay. Along similar lines, in Refs.~\cite{oliviero_unscrambling_2024,leone_learning_2024}, it was demonstrated that quantum information scrambled by a $t$-doped Clifford circuit could be \emph{unscrambled} and recovered using a carefully constructed Clifford circuit whenever $t<n$. This extends far beyond the classical simulability barrier. Such a result, besides providing insight into information scrambling, is an indication that certain information processing tasks can still exhibit efficiency beyond the classical simulation threshold $t=\omega(\log n)$. In other words, while the task of learning a full description of such $t$-doped stabilizer states might be hard, restricting the objective from learning a full description of the state to merely learning certain characteristics of the state can render the problem feasible even beyond the classical simulability barrier. A particular characteristic of quantum states that occupies a central role in quantum information is its \emph{entanglement}. In this direction, a recent work~\cite{leone_phase_2023} shows how even for highly entangled states, estimating the entanglement entropy of states possessing Clifford symmetries can be significantly improved using techniques developed in~\cite{oliviero_unscrambling_2024,leone_learning_2024}. Here, we present a detailed analysis of learning and manipulating the entanglement of $t$-doped stabilizer states, aiming to extend beyond the conventional limits imposed by classical simulability barriers.

\subsection{Future directions}

This paper thoroughly describes the interplay between two resource theories: the resource theory of magic and the resource theory of entanglement. We discuss how magic influences the efficiency of entanglement tasks, presenting both positive and negative results. This presents new research directions and ideas, a few of which we list below. 
\begin{itemize}
    \item The ED phase includes states that contain up to $t=o(\exp n)$ non-Clifford gates, yet we are able to show that entanglement manipulation, detection, and estimation are efficient. This fact raises the question of whether this class of states is efficiently classically simulable; while we believe that is not the case, as we can always embed intractable problems, none of the known classical-simulation algorithms, e.g.~\cite{bravyi_improved_2016} are specifically designed for entanglement-dominated states. We also need not demand full classical simulability, asking instead the slightly weaker question of whether there are other quantum information-related tasks that can be performed efficiently for ED states.

    \item We explored the synergy between the resource theory of entanglement and the resource theory of magic in the context of entanglement theory. We characterized how magic affects entanglement theory in positive and negative directions. However, the other way around is also possible. For instance, the analogue of entanglement distillation in magic resource theory is magic state distillation, in which one aims to distill a maximal number of copies of the magic state $\ket{B} \propto \ket{0} + e^{-i\pi/4} \ket{1}$ using only Clifford operations. Similarly to entanglement distillation, it is known that at most $O(\mathcal{M}(\rho))$ copies of $\ket{B}$ can be distilled from $\rho$, where $\mathcal{M}(\cdot)$ is an arbitrary magic monotone~\cite{veitch_resource_2014}. It is thus interesting to develop a theory, akin to the one developed here, that explores the flip side of this interplay: how does entanglement affect the efficiency of magic-state distillation? While it is clear that the optimal rate of magic-state distillation can be achieved for product states, how is this rate affected by entanglement? Is it perhaps true that the ED phase is characterized by efficient entanglement distillation, while the MD phase is characterized by efficient magic-state distillation? Answering these questions opens an exciting avenue for future research.

    \item The ED phase is characterized by those states for which the entanglement is dominated by the stabilizer properties of the state and, as such, can be easily manipulated. This concept is akin to ``Gaussification," a process where continuous variable states are analyzed by studying their Gaussian component. Particularly relevant are Refs.~\cite{eisert_distillation_2004,browne_driving_2003,eisert_experimental_2007,Campbell_2012}, where the authors study continuous variable entanglement under this lens. It is, therefore, natural to ask whether a theory similar to the one presented in this work can also be developed in the context of bosons and fermions, analyzing the interplay between the resource theory of fermionic (resp. bosonic) non-Gaussianity.

\item A naturally arising question prompted by our work is the possibility of a computationally-efficient and agnostic distillation protocol for Matrix Product States (MPSs). Our no-go results for the MD phase are applicable only under the condition that the Von Neumann entropy scales strictly faster than logarithmically with the local system size. The inadequacy of these results in capturing MPSs suggests the feasibility of an efficient and agnostic entanglement distillation protocol within this regime. One plausible approach involves the efficient learning of the state, with a sample efficiency suitable for polynomially-large bond dimensions~\cite{Cramer_2010,fanizza2023learning}. Subsequently, an analysis of the computational cost to identify the optimal distillation protocol.
\end{itemize}

\section{Preliminaries}
\noindent We provide below a glossary of notations for convenience.
\begin{table}[H]
    \centering
    \begin{tabular}{c|c}
         \textbf{Symbol} & \textbf{Definition} \\
         \hline \hline
         $\sepp{A}{B}$ & The set of separable states with respect to the bipartition $A|B$ \\ \hline
         $\mathbb{P}_n$ & The Pauli group over $n$ qubits \\ \hline
         $G$ & The Abelian group of stabilizers for a state $\psi$ \\ \hline
         $G_A$ & The Abelian subgroup of stabilizers for $\psi$ that act nontrivially only on subsystem $A$ \\ \hline
         $S$ & A generating set for the full set of stabilizers $S$; always has cardinality $\log_2 \abs{G}$ \\ \hline
         $S_A$ & A generating set for $G_A$ \\ \hline
         $\psi_A$& The partial trace of the state $\psi$ respect to the subsystem $B$ in the bipartion $A|B$\\ \hline
         LOCC & Local operations and classical communication protocols \cite{chitambar_everything_2014} \\ \hline
         $\mathcal{E}(\psi; A|B)$ & The stabilizer entanglement of $\psi$ across the bipartition $A|B$ (\cref{def:stab-ent}) \\ \hline
         $S_\alpha(\rho)$ & The $\alpha$-R\'enyi entropy of a state $\rho$ (\cref{def:renyi}) \\ \hline
         $M_{+}$ & The number of Bell pairs distillable across a bipartition \\ \hline
         $M_{-}$ & The number of Bell pairs consumed to dilute a state across a bipartition \\ \hline
         $t$ & The number of non-Clifford gates used to prepare a $t$-doped stabilizer state \\ \hline
         $l$ & The maximum number of qubits any non-Clifford gate acts on \\ \hline
         $\nu$ & The stabilizer nullity of a state; it is defined by $\abs{S} = n-\nu$, and we can bound $\nu \leq 2lt$ \\ \hline
         $\Pi$ & The stabilizer portion of a $t$-doped stabilizer state (\cref{thm:pure-struct})
    \end{tabular}
\end{table}

Stabilizer states are capable of capturing a key aspect of `quantumness' that other classically simulable states (such as matrix product states) are unable to: they can exhibit any degree of entanglement, ranging from complete separability to full volume law entanglement. Moreover, an early work showed that the entanglement of stabilizer states can be computed exactly in $O(n^3)$ time simply by studying the structure of the generators~\cite{fattal_entanglement_2004}. This work is a representative one, in that it recognized how two important characteristics of stabilizer states (i.e., compact classical representation and ability to exhibit high entanglement) could be combined in powerful ways to better understand the nature of entanglement. Indeed, this combination has made the stabilizer formalism an extremely useful tool for a wide array of applications ranging from quantum error correcting codes to condensed matter physics. However, entanglement is not all there is to quantum computation: early work showed that Clifford circuits (the unitaries that generate stabilizer states) are probably not even universal for purely \emph{classical} computation \cite{aaronson_improved_2004}. This motivated the introduction of quantum magic as a means of quantifying the degree to which a state could be approximated using stabilizer states. In other words, magic measures the amount of non-Clifford resources required to prepare a particular state. Then, simply by definition, entanglement is insufficient for true quantum computational advantage: one must have a nontrivial degree of quantum magic as well.

A stabilizer state $\ket{\psi}$ is a state such that there is a size $2^n$ Abelian subgroup of the Pauli group $\mathbb{P}_n$ whose elements $P$ all satisfy $P \ket{\psi} = \pm \ket{\psi}$. This subgroup, known as the stabilizer group $G$, is finitely generated by $n$ Pauli operators $S=\qty{\phi_j g_j \mid g_j \in \mathbb{P}_n, \phi_j \in \qty{-1,+1}}$, where the generators $g_j$ have an associated phase $\phi_j$ depending on whether $\ket{\psi}$ is a $+1$ or $-1$ eigenvector of $g_j$. In this work, the phase is almost always irrelevant, so to simplify notation, we will absorb the irrelevant phase $\phi_j$ into $g_j$ (so we always assume $g_j \ket{\psi} = \ket{\psi}$). Then, stabilizer states can always be written as
\begin{equation}
    \ketbra{\psi} = \prod_{i=1}^{n} \qty(\frac{\id + g_j}{2}).\label{eq:stab}
\end{equation}
The compact representation of stabilizer states in terms of just $n$ generators immediately lends itself to their efficient simulability
\cite{gottesman_heisenberg_1998}. We note that the generators $g_j$ are not unique -- that is, there are many possible choices of generators $S$ such that $G = \expval{S}$. This can be understood using the tableau representation of the stabilizer group, first introduced in Ref.~\cite{aaronson_improved_2004}, which takes advantage of the isomorphism between the Pauli group and $\mathbb{F}_2^{2n}$. This isomorphism is as follows: any generator $g_j \in \mathbb{P}_n$ can be represented as a length $2n$ binary string $\vec{s}_j = x_{j,1} x_{j,2} \ldots x_{j,n} z_{j,1} z_{j,2} \ldots z_{j,n}$, such that $g_j \propto \prod_{i=1}^n (X_i)^{x_{j,i}} (Z_i)^{z_{j,i}}$ (where the proportionality is up to a factor $i^l$ for some integer $l$). The product of any two Pauli operators $g_j g_k$ corresponds to a binary string that is equal to $\vec{s}_j \oplus \vec{s}_k$, where $\oplus$ is addition modulo 2. We then see that if we choose a particular set of generators $g_j$ with binary string representations as defined above, we can write a tableau representation of the stabilizer group
\begin{equation}
\mathcal{T}_S = \left[
\begin{array}{cccc|cccc}
x_{1,1} & x_{1,2} & \ldots & x_{1,n} & z_{1,1} & z_{1,2} & \ldots & z_{1,n} \\
x_{2,1} & x_{2,2} & \ldots & x_{2,n} & z_{2,1} & z_{2,2} & \ldots & z_{2,n} \\
\vdots & \vdots & \ddots & \vdots & \vdots & \vdots & \ddots & \vdots \\
x_{\abs{S},1} & x_{\abs{S},2} & \ldots & x_{\abs{S},n} & z_{\abs{S},1} & z_{\abs{S},2} & \ldots & z_{\abs{S},n}
\end{array}
\right] \in \mathbb{F}_2^{\abs{S} \times 2n},\label{eq:tableau}
\end{equation}
where each row corresponds to a generator $g_j$. Although we note that $\abs{S}=n$ by definition for all pure stabilizer states, we leave \cref{eq:tableau} in this general form to allow for the possibility that the stabilizer group is not full rank (i.e., $\abs{S} < n$). Since the entire stabilizer group is given by $\rowspan(\mathcal{T}_S)$, the tableau can be changed using any sequence of elementary row operations (i.e., swapping rows or adding rows to other rows), as these operations do not change the row space of a matrix. We will find it useful to define the local stabilizer group $G_A$ ($G_B$), which is the subgroup of stabilizers that act as the identity on the subsystem $B$ ($A$). A tableau representation of the local stabilizer group $G_A$ can be found efficiently as follows. We define a linear map $\mathcal{P}_B: g_A \otimes g_B \to \id \otimes g_B$ which projects out the $A$ component of a given stabilizer. Acting with $\mathcal{P}_B$ on $\mathcal{T}_S$ simply sets all $x_{j,i}=z_{j,i}=0$ for all $j=1,\ldots,\abs{S}$ and any $i \in A$, and leaves the rest of the matrix elements unchanged. Then, $G_A = \ker(\mathcal{P}_B(\mathcal{T}_S)^T)$, so finding a basis for the rows of $\ker(\mathcal{P}_B(\mathcal{T}_S)^T)$ suffices to find a tableau representation for the generators of $G_A$, which we call $S_A$. The procedure for finding $S_B$ follows identically. 

\subsection{Stabilizer nullity and \texorpdfstring{$\nu$}{t}-compressible states}\label{sec:stabnullity}
Closely related to stabilizer states are the Clifford unitaries, which are the group $\mathcal{C}_n$ of unitaries that normalize the Pauli group. That is, $\mathcal{C}_n \coloneqq \qty{U \mid U \mathbb{P}_n U^\dagger = \mathbb{P}_n}$. Since Clifford unitaries map stabilizer states to stabilizer states (they simply act by mapping each of the generators $g_j \to U g_j U^\dagger \in \mathbb{P}_n$), Clifford circuits, initialized with a stabilizer state (such as $\ket{0}^{\otimes n}$) are classically simulable. The canonical gateset for Clifford circuits is comprised of the Hadamard gate $H$, the phase gate $S$, and the $\cnot$ gate. Yet, as discussed previously, Clifford gates alone are insufficient to realize universal quantum computation: one needs to include just one non-Clifford gate to realize full universality. Injection of non-Clifford resources in the state preparation results in a property known as \textit{magic}. There are many choices to reasonably quantify magic~\cite{campbell_catalysis_2011,veitch_resource_2014,beverland_lower_2020,liu_manybody_2022,saxena_quantifying_2022,leone_nonstabilizerness_2023}. In this work, we utilize the \textit{stabilizer nullity}, introduced in Ref.~\cite{beverland_lower_2020}. In this work, this proves to be the most natural choice.
\begin{definition}[Stabilizer nullity] Consider a pure state $\ket{\psi}$. If $\ket{\psi}$ is stabilized by $n-\nu$ algebraically independent Pauli operators $P$ (i.e. $P\ket{\psi}=\ket{\psi}$), then the state has stabilizer nullity $\nu$. Equivalently, the state $\ket{\psi}$ has a stabilizer group, hereby denoted as $G$, of cardinality $\abs{G}=2^{n-\nu}$. This notion can be easily extended to mixed states $\rho$. If $\rho$ commutes with $n-\nu$ algebraically independent Pauli projectors (i.e. $\frac{(\id+P)}{2}\rho \frac{(\id+P)}{2}=\rho$), then we say the state has stabilizer nullity $\nu$.
\end{definition}

Given our earlier claim that these states might be understood as perturbations of exact stabilizer states, one might expect that these states should exhibit a similar structure to stabilizer states. We show that indeed we can write states with stabilizer nullity $\nu$ using an algebraic structure which closely mirrors that of \cref{eq:stab}. 

\begin{theorem}[Algebraic structure of states with stabilizer nullity $\nu$~\cite{leone2023learning}]\label{thm:pure-struct}
Any pure state $\ketbra{\psi}$ with stabilizer nullity $\nu$ can be written
\begin{equation}
    \ketbra{\psi} = \underbrace{\frac{1}{2^{\nu}}\qty(\id + \sum_{i=1}^k \tr(h_i \psi) h_i)}_{H} \underbrace{\qty(\prod_{j=1}^{n-\nu} \qty(\frac{\id + g_j}{2}))}_{\Pi}; \,\, k+1 \leq 2^{2\nu},\, \nu \coloneqq n-\abs{S} \leq 2lt, \label{eq:structure1}
\end{equation}
where $S \coloneqq \qty{g_j \in \mathbb{P}_n \mid j=1,\ldots,\abs{S}}$ is a generating set for the stabilizer group of $\ket{\psi}$. The `bad generators' $h_i \in \mathbb{P}_n$ of $\ket{\psi}$, which do not necessarily obey any special structure, beyond the fact that $\tr(h_i \psi) \neq 0$. $\Pi$ is a rank-$2^{\nu}$ projector, and $H$ is a Hermitian operator whose operator norm is bounded: $\norm{H}_\infty \leq 2^{\nu}$. Finally, $\Pi$ and $H$ commute.
\end{theorem}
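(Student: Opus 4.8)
The plan is to read off the factorization $\ketbra{\psi}=H\Pi$ directly from the Pauli expansion of the state, using the stabilizer projector to annihilate all but a few Pauli components. First I would set $\Pi \coloneqq \prod_{j=1}^{n-\nu}\tfrac{\id+g_j}{2}=\tfrac{1}{2^{n-\nu}}\sum_{g\in G}g$ and record its elementary properties: the $g_j$ are $n-\nu$ independent commuting Hermitian Paulis, so $\Pi$ is a projector with $\tr\Pi = 2^{n}/2^{n-\nu}=2^{\nu}$, hence rank $2^{\nu}$; and since $g_j\ket{\psi}=\ket{\psi}$ for every generator, $\ket{\psi}\in\im\Pi$, giving $\Pi\ketbra{\psi}\Pi=\ketbra{\psi}$. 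I would then expand the state in the (Hermitian) Pauli basis, $\ketbra{\psi}=2^{-n}\sum_{P}\tr(P\psi)P$, and substitute $\ketbra{\psi}=\Pi\ketbra{\psi}\Pi$ to obtain $\ketbra{\psi}=2^{-n}\sum_{P}\tr(P\psi)\,\Pi P\Pi$.

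The crucial reduction is that $\Pi P\Pi$ vanishes unless $P$ normalizes $G$. Indeed, if $P$ anticommutes with some generator $g_j$, then inserting $\Pi=\Pi g_j$ gives $\Pi P\Pi=\Pi g_j P\Pi=-\Pi P g_j\Pi=-\Pi P\Pi$, so $\Pi P\Pi=0$; whereas if $P$ commutes with every $g_j$ then $\Pi P\Pi=P\Pi$. Thus only Paulis in the normalizer $N(G)$ survive, and these fall into the $2^{2\nu}$ cosets of $G$ inside $N(G)$ (the logical Pauli classes). Fixing a Hermitian representative $M_\mu$ of each coset, I would check that every Pauli in coset $\mu$ contributes the identical term $\tr(M_\mu\psi)\,M_\mu\Pi$: writing such an element as $\pm M_\mu g$ with $g\in G$ and using $g\ketbra{\psi}=\ketbra{\psi}$ together with $g\Pi=\Pi$, the two signs cancel. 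Summing the $2^{n-\nu}$ members of each coset and dividing by $2^{n}$ collapses the expansion to
\begin{equation}
\ketbra{\psi}=\frac{1}{2^{\nu}}\sum_{\mu}\tr(M_\mu\psi)\,M_\mu\Pi=\underbrace{\frac{1}{2^{\nu}}\qty(\id+\sum_{\mu\neq 0}\tr(M_\mu\psi)M_\mu)}_{H}\Pi,
\end{equation}
where the $\mu=0$ (stabilizer) coset supplies the $\id$ term. Discarding the vanishing coefficients identifies the bad generators $h_i$ with the nontrivial $M_\mu$ obeying $\tr(M_\mu\psi)\neq 0$; since there are at most $2^{2\nu}-1$ nontrivial cosets, $k+1\le 2^{2\nu}$.

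It remains to verify the stated properties. Hermiticity of $H$ is immediate, since $\id$ and each $h_i$ are Hermitian Paulis and $\tr(h_i\psi)\in\mathbb{R}$; and $[\Pi,H]=0$ because each $h_i\in N(G)$ commutes with every $g_j$, hence with $\Pi$. For the norm bound I would invoke Parseval on the $\nu$-qubit logical space: because $\ket{\psi}$ restricts to a \emph{pure} logical state on $\im\Pi$, one has $\sum_{\mu}\abs{\tr(M_\mu\psi)}^2=2^{\nu}$, so Cauchy--Schwarz over the at most $2^{2\nu}$ cosets gives $\sum_{\mu\neq 0}\abs{\tr(M_\mu\psi)}\le 2^{3\nu/2}$, whence $\norm{H}_\infty\le 2^{-\nu}(1+2^{3\nu/2})\le 2^{\nu}$ by the triangle inequality. (In fact $H$ acts as a unit-norm rank-one projector on each stabilizer syndrome sector, so $\norm{H}_\infty=1$, but the crude bound already suffices.) The nullity bound $\nu\le 2lt$ is logically separate from this algebraic argument and follows by tracking how each $l$-qubit non-Clifford gate can destroy at most $2l$ stabilizer generators. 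The only genuinely delicate point is the coset collapse with its sign bookkeeping; once the dichotomy $\Pi P\Pi\in\{0,\,P\Pi\}$ is in hand, everything else is routine.
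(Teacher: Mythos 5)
Your proof is correct, and it reaches the decomposition by a genuinely different route at the one non-trivial counting step. The paper's proof also partitions $\supp(\psi)$ into cosets $h_iG$ and collapses each coset using $P_j\ket{\psi}=\ket{\psi}$, but it bounds the number of cosets by importing the external bound $M_0\leq \nu$ on the $\alpha=0$ stabilizer R\'enyi entropy, which gives $\abs{\supp(\psi)}\leq 2^{n+\nu}$ and hence at most $2^{n+\nu}/2^{n-\nu}=2^{2\nu}$ cosets. You instead derive the same count self-containedly from the dichotomy $\Pi P\Pi\in\{0,P\Pi\}$: any Pauli with nonzero support must centralize $G$, the centralizer has dimension $n+\nu$ in the symplectic picture, so there are exactly $2^{2\nu}$ cosets of $G$ inside it. This buys you independence from the cited R\'enyi-entropy result (indeed, your argument essentially reproves it) and makes the structural reason for the $2^{2\nu}$ bound transparent --- the surviving $h_i$ are logical Pauli representatives. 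You also explicitly verify the auxiliary claims (rank of $\Pi$, Hermiticity, commutativity, the norm bound via Parseval on the coset coefficients), which the paper's proof leaves implicit; the crude triangle-inequality bound $\norm{H}_\infty\leq 2^{-\nu}(k+1)\leq 2^{\nu}$ would also have sufficed. The only inessential slip is the parenthetical claim that $\norm{H}_\infty=1$: on syndrome sectors other than $\im\Pi$, the operator $H\Pi_s$ need not be a rank-one projector (conjugating by a syndrome-shifting Pauli flips the signs of some $\tr(h_i\psi)$, so it is not simply a conjugate of $\ketbra{\psi}$), but you correctly note that only the crude bound is needed, so this does not affect the proof.
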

\begin{proof}
 By assumption, the size of the stabilizer group $\abs{G}=2^{n-\nu}$. Take any Pauli $h_i$ for which $\tr(h_i \ketbra{\psi}) \neq 0$. Observe that every Pauli in the coset $h_i G$ also has expectation $\tr(h_i \ketbra{\psi})$, as $\tr(h_i P_j \ketbra{\psi})=\tr(h_i \ketbra{\psi})$ for any stabilizer $P_j$, as $P_j \ket{\psi} = \ket{\psi}$. We can then make use of two facts. First, from the above reasoning,  the set of Paulis which have nonzero support on $\ket{\psi}$, namely $\supp(\psi) = \qty{P \in \mathbb{P}_n \mid \tr(P \psi) \neq 0}$, can be partitioned into disjoint cosets $h_i G$ for $i=0,\ldots,k$ and we use the convention that $h_0=\id$. Second, as the $\alpha=0$ stabilizer R\'enyi entropy $M_0$ is upper bounded by $\nu$ \cite{leone_stabilizer_2022}, and $M_0$ is defined as $\log \abs{\supp(\psi)} - \log n$, we immediately have $\abs{\supp(\psi)} \leq 2^{n + \nu}$. These two facts allow us to conclude that there are at most $\frac{\abs{\supp(\psi)}}{\abs{G}}\leq \frac{2^{n+\nu}}{2^{n-\nu}}=2^{2\nu}$ coset representatives $h_i$. This is to say, $k+1 \leq 2^{2\nu}$. In summary, we can write
\begin{equation}
    \ketbra{\psi} = \frac{1}{2^n} \sum_{i=0}^k \tr(h_i \psi) \sum_{P_j \in G} h_i P_j = \frac{1}{2^\nu} \qty(\id + \sum_{i=1}^k \tr(h_i \psi) h_i) \qty(\prod_{j=1}^{n-\nu} \qty(\frac{\id+g_j}{2})).
\end{equation}
\end{proof}
\begin{remark}
The stabilizer component of the state $\Pi$ is a valid state of its own, and importantly, it is a (mixed) stabilizer state. Although it may seem only to be an abstract mathematical tool, this state has a significance of its own. Indeed, in Ref. \cite{bu2023entropy}, it was shown that the state $\Pi$ has a very special relationship with $\ketbra{\psi}$. For one, it is the closest stabilizer state to $\ketbra{\psi}$, as measured by the relative entropy $D(\Pi \parallel \psi) \coloneqq -\Tr \Pi \log \psi - S_1(\Pi)$. Furthermore, $\Pi$ can be realized by `averaging' many copies of $\ketbra{\psi}$ using an operation known as the quantum convolution.
\end{remark}

The structure established by the above theorem shows that states with nullity $\nu$ can be \emph{algebraically} separated into a stabilizer component $\Pi$ and a non-stabilizer component $H$, where each component has a number of terms exponential in $n-\nu$ and $\nu$, respectively. This might lead one to hope that the components can be separated into a tensor product structure; that is, whether $\ketbra{\psi}$ can be factored into two separate \emph{states}, one corresponding to $H$ and one for $\Pi$. The following algorithm establishes this is possible. We term this protocol `nullity distillation', because it essentially distills the non-stabilizer portion of a state $H$ onto a number of qubits equal to the stabilizer nullity (see \cref{fig:alg-struct}). This protocol constructs a Clifford unitary that takes as input an $n$-qubit state $\rho$ with $\abs{S}$ stabilizer generators, and outputs a state $\ketbra{0}^{\otimes \abs{S}} \otimes \rho''$, where $\rho''$ is a state on $n-\abs{S}$ qubits. The idea is simple: we first identify a Clifford unitary that maps the first generator to $Z_1$, after which the state must be $\ketbra{0} \otimes \rho'$ -- this is because if a state is stabilized by $Z_1$, it must have $\Tr(\rho_1 Z_1)=1 \implies \rho_1 = \frac{\id + Z_1}{2} \implies \rho = \ketbra{0} \otimes \rho'$, where $\rho_1$ is the reduced density matrix of $\rho$ on qubit 1. We then find a Clifford (acting on the remaining $n-1$ qubits) that maps the second generator to $Z_2$, and so on, until we have a state $\ketbra{0}^{\otimes \abs{S}} \otimes \rho''$. In \cref{sec:err-correct}, we will see how the nullity distillation circuit can be interpreted as the \emph{decoding circuit} for a stabilizer error correcting code.

\begin{algorithm}[H]
\caption{Nullity distillation}\label{alg:nullity}
\begin{algorithmic}[1]
\Require{A state $\rho$ with stabilizer generators $S=\qty{g_1, g_2, \ldots, g_{\abs{S}}}$}
\Ensure{A state $U\rho U^\dagger = \ketbra{0}^{\otimes \abs{S}} \otimes \rho'$}
\Function{NullityDistillation}{$\rho$}
\State $U \gets \id$
\For{$i \gets 1, \ldots, \abs{S}$}
\State $V \gets $ any Clifford acting on qubits $\qty{i,i+1,\ldots,n}$ such that $V g_i V^\dagger = Z_i$
\State $U \gets U \circ V$
\EndFor
\State \Return{$U \rho U^\dagger$}
\EndFunction
\end{algorithmic}
\end{algorithm}

These considerations motivate us to term states with stabilizer nullity $\nu$ as \textit{$\nu$-compressible states}, consistently reminding the reader that nullity can be distilled away using a simple Clifford operation.

\begin{figure}
    \centering
    {
        \graphicspath{{figures/}}
        \newcommand{\fs}{\scriptsize}
        \newcommand{\fss}{\footnotesize}
    \def\svgwidth{0.65\textwidth}
    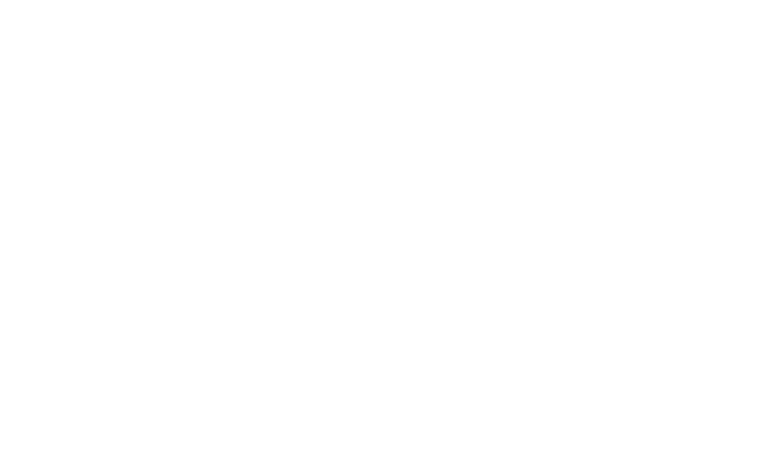
    }
    \caption{We visualize the algebraic structure of $\nu$-compressible states. Any $\nu$-compressible state can be prepared using a general unitary acting on $\nu$ qubits, which depends only on $H$, and some global Clifford unitary which depends only on $\Pi$.}
    \label{fig:alg-struct}
\end{figure}

\begin{definition}[$\nu$-compressible states]\label{cor:nucompressibility} Let $\ket{\psi}$ be a pure quantum state. We say $\ket{\psi}$ is \emph{$\nu$-compressible} if it can be prepared by the application of a unitary $U$ acting on $\nu$ qubits followed by a Clifford circuit $C$. Any state $\ket{\psi}$ is $\nu$-compressible if and only if it has stabilizer nullity $\nu$. The equivalence between stabilizer nullity $\nu$ states and $\nu$-compressible states is established by \Cref{alg:nullity}.
\end{definition}

A well-known class of $\nu$-compressible states is the set of $t$-doped stabilizer states. These are states created by the action of Clifford circuits containing a number $t$ of $l$-qubit non-Clifford gates. Let us formally define them below.

\begin{figure}
    \centering
    {
        \graphicspath{{figures/}}
        \newcommand{\fs}{\scriptsize}
        \newcommand{\fss}{\footnotesize}
    \def\svgwidth{0.7\textwidth}
    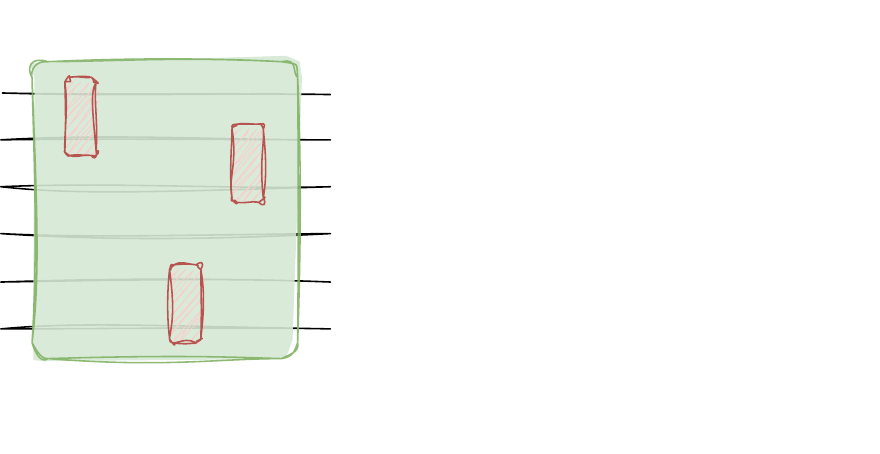
    }
    \caption{In (a), we show a $t$-doped Clifford circuit, where the green background indicates a `substrate' of Clifford gates, which are implicitly assumed to dominate the circuit compared to the doped non-Clifford gates. In this circuit, we have set in particular $t=3$ (the number of non-Clifford gates) and $l=2$ (the number of qubits each non-Clifford acts on). In (b), we show an equivalent representation of the $t$-doped circuit shown in (a): it can always be rewritten as a Clifford $C_1$, with a general (non-Clifford) unitary acting on $\leq 2lt$ qubits, followed by another Clifford $C_2$~\cite{leone_learning_2024,oliviero_unscrambling_2024}.}
    \label{fig:clifft}
\end{figure}

\begin{definition}[$t$-doped stabilizer states]\label{tdopedstabilizerstates}
An $n$-qubit state is a pure $t$-doped stabilizer state if $\ket{\psi} = U \ket{0}^{\otimes n}$, where a $t$-doped Clifford circuit $U$ is a unitary operator containing Clifford gates and at most $t$ non-Clifford $l$-qubit gates with $l=O(1)$ (see \cref{fig:clifft} for a visualization). A mixed $t$-doped stabilizer state is any state $\rho$ for which there exists a purification $\ket{\psi}$ such that $\ket{\psi}$ is a pure $t$-doped stabilizer state, and the purifying system has at most $\lceil \log \rank \rho \rceil$ qubits. For brevity, we will often refer to $t$-doped stabilizer states simply as $t$-doped states.
\end{definition}

\begin{lemma}[$t$-doped stabilizer states are $\nu$-compressible states] Let $\ket{\psi}$ be a $t$-doped stabilizer states. Then $\ket{\psi}$ is $\nu$-compressible with $\nu\le 2lt$.
\begin{proof}
This has been shown in Ref.~\cite{jiang_lower_2021} (see also
Ref.~\cite{leone_learning_2024,oliviero_unscrambling_2024}). The application of any $l$-qubit non-Clifford gate removes at most $2l$ stabilizer generators from a state, so after applying $t$ of these gates, we have at most $\nu \leq 2lt$.
\end{proof}
\end{lemma}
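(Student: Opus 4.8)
The plan is to reduce the claim to a statement purely about stabilizer nullity and then invoke the equivalence recorded in \cref{cor:nucompressibility}: since a state is $\nu$-compressible if and only if it has stabilizer nullity $\nu$ (and $\nu$-compressibility is clearly monotone in $\nu$, as one may always let the general unitary act trivially on extra qubits), it suffices to show that any $t$-doped stabilizer state $\ket{\psi}=U\ket{0}^{\otimes n}$ has stabilizer nullity at most $2lt$. I would track how the number of algebraically independent Pauli stabilizers evolves as the gates of $U$ are applied one at a time, starting from $\ket{0}^{\otimes n}$, which is stabilized by the $n$ independent generators $Z_1,\dots,Z_n$ and hence has nullity $0$.

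The two gate types are handled separately. Clifford gates are easy: a Clifford $C$ conjugates Paulis to Paulis bijectively, so it maps a generating set of the current stabilizer group to a generating set of the same size for the stabilizer group of $C\ket{\psi}$, leaving the nullity unchanged. The entire content of the lemma therefore lies in the non-Clifford gates, so the main step is to show that a single $l$-qubit gate $V$, acting on a qubit set $L$ with $\abs{L}=l$, increases the nullity by at most $2l$.

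For that step, the key observation is that any stabilizer $P$ of the current state that acts as the identity on $L$ survives the gate: writing $P=\id_L\otimes P_{L^c}$, we have $VPV^\dagger = P$ because $V$ acts only on $L$, so $P$ remains a genuine Pauli operator and still stabilizes $V\ket{\psi}$. It then remains to lower-bound how many independent such stabilizers exist. I would use the tableau/symplectic picture: identify the current stabilizer group with an $\mathbb{F}_2$-vector space of dimension $n-\nu$, and consider the restriction homomorphism sending each stabilizer to its Pauli content on $L$, valued in the $2l$-dimensional symplectic space $\mathbb{F}_2^{2l}$ attached to the $l$ qubits of $L$. The kernel of this map is exactly the subgroup of stabilizers trivial on $L$, and by rank--nullity its dimension is at least $(n-\nu)-2l$. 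Hence at least $(n-\nu)-2l$ independent stabilizers survive the gate, so $V\ket{\psi}$ retains at least that many independent generators and its nullity is at most $\nu+2l$.

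Chaining these bounds over the whole circuit finishes the argument: the Clifford gates leave the nullity fixed and each of the $t$ non-Clifford gates adds at most $2l$, so starting from nullity $0$ we obtain $\nu\le 2lt$ for $\ket{\psi}$, and \cref{cor:nucompressibility} upgrades this to $2lt$-compressibility. I expect the only delicate point to be the counting in the middle step — specifically, making precise that the restriction-to-$L$ map is a group homomorphism into a space of dimension exactly $2l$, so that phases can be ignored and the symplectic structure does not inflate the codomain, and confirming that the surviving kernel elements remain algebraically independent as Pauli stabilizers of the new state. Everything else is routine bookkeeping.
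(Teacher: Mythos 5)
Your proposal is correct and follows essentially the same route as the paper, which simply cites Ref.~\cite{jiang_lower_2021} and sketches the identical mechanism: each $l$-qubit non-Clifford gate destroys at most $2l$ independent stabilizer generators, so $t$ such gates give $\nu\le 2lt$. Your rank--nullity argument for the restriction-to-$L$ homomorphism is the standard way to make that sketch rigorous, and your appeal to \cref{cor:nucompressibility} to convert the nullity bound into $2lt$-compressibility matches the paper's intent.
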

However, the class of $\nu$-compressible states is far from being limited to $t$-doped stabilizer states. For instance, unitaries that encode completely general $H$ (see \cref{fig:alg-struct}) can have up to exponential depth in $\nu$, hence up to $\exp(\nu)$ non-Clifford gates. This shows that not only are $\nu$-compressible states strictly more general than $t$-doped stabilizer states, they even include classes of states which are not efficiently preparable.

\smallskip

\parhead{Simulability of $\nu$-compressible states.} $t$-doped stabilizer states are ubiquitous in the quantum information literature (and as we will see in \cref{sec:physical}, they appear in surprising places in physics as well). One aspect of these states that has been closely studied is their classical simulability; existing algorithms generally achieve a time (and space) complexity $\poly(n, 2^{c \cdot t})$ for some constant $c$ \cite{bravyi_improved_2016,bravyi_quantum_2018}. In fact, it has been shown that no simulation algorithm can achieve subexponential runtime in terms of $T$-gate count~\cite{huang2019explicit}. This result holds for single-qubit non-Clifford gates, but more broadly, it is easy to see that simulation of states that have stabilizer nullity up to $\nu$ must have runtimes that scale exponentially in $\nu$. \cref{alg:nullity} makes this clear: the set of general $\nu$-qubit states are in a one-to-many correspondence with states that have stabilizer nullity $\nu$, so a general simulation algorithm for states with stabilizer nullity $\nu$ would immediately imply a simulation algorithm for general $\nu$-qubit states. Since such a general simulation algorithm is widely believed to require exponential (in $\nu$) classical runtime, it follows that simulating arbitrary states with stabilizer nullity $\nu=\omega(\log n)$ cannot be efficient classically.

\subsection{Entanglement-dominated states and where to find them}

\begin{figure}
    \centering
    {
        \graphicspath{{figures/}}
        \newcommand{\fs}{\scriptsize}
        \newcommand{\fss}{\footnotesize}
    \def\svgwidth{0.9\textwidth}
    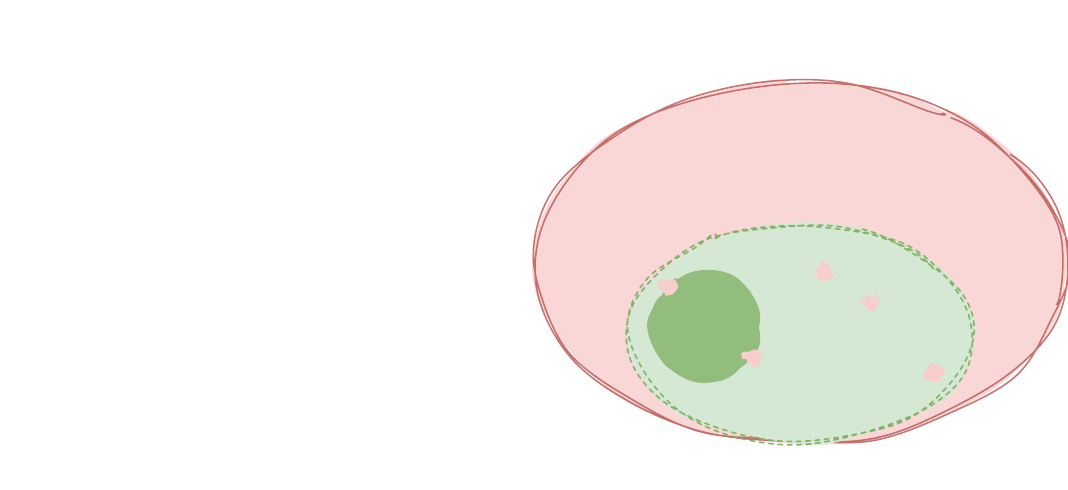
    }
    \caption{In (a), we show a representation of the entanglement- and magic-dominated phase. The entanglement is quantified, given the bipartition, by the entanglement entropy, while magic is quantified by the stabilizer nullity (\cref{def:ent-dom}). The class of matrix product states occupies a special place in our theory, as we are unable to rule out efficient entanglement manipulation protocols for MPS, despite the fact that most of these states are magic-dominated. Indeed, examples of efficient MPS-specific protocols have already been found~\cite{cramer_efficient_2010}. In (b), we show the portion of Hilbert space occupied by $\nu$-compressible states for $\nu=o(n)$. These states occupy a vanishing fraction of the full Hilbert space. Similarly, a counting argument shows that $t$-doped states (which are all $2lt$-compressible states) are also a vanishing fraction of all $2lt$-compressible states. However, for both of these classes of states, the entanglement-dominated phase is typical (see \cref{th:typical-magic,th:typical,lem:typicalitynu}).}
    \label{fig:venn}
\end{figure}
Before exploring the operational distinctions between the entanglement- and magic-dominated phases, a natural question to ask is where entanglement-dominated states can be found in practice. Recall that the two phases were defined in \cref{def:ent-dom} in terms of stabilizer nullity and entanglement entropy. The following lemma states that the vast majority of states are MD. This may seem to throw cold water on the hope of finding any entanglement-dominated states, but for the reasons we discuss below, this is not so.
\begin{lemma}[Typicality of magic-dominated states]\label{th:typical-magic}
For any bipartition $A|B$ with $n_A \leq n_B$, the states in the Haar ensemble are magic dominated with probability 1:
\begin{equation}
    \Pr_{\ket{\psi} \sim \haar}[\ket{\psi} \text{is magic-dominated}\,] = 1.
\end{equation}
\end{lemma}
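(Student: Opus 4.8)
The plan is to show that, with probability $1$ over the Haar measure, a random state $\ket{\psi}$ attains the \emph{maximal} stabilizer nullity $\nu = n$, while its entanglement entropy is bounded above by $n_A \leq n/2$. Once both facts are in hand, the criterion $S(\psi_A) = O(\nu)$ for being magic-dominated (\cref{def:ent-dom}) follows immediately, since then $S(\psi_A)/\nu \leq (n/2)/n = 1/2$.

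The first and only substantive step is to pin down the nullity of a Haar-random state. Recall that the stabilizer group $G = \{P \in \mathbb{P}_n : P\ket{\psi} = \ket{\psi}\}$ (absorbing phases as in the main text) is abelian, and $\nu = n - \log_2\abs{G}$; thus $\nu = n$ is equivalent to $G = \{\id\}$, i.e. $\ket{\psi}$ is stabilized by no nontrivial Pauli. I would argue this by a union bound over Paulis. Fix any Hermitian Pauli operator $P \neq \id$ together with a sign, so that $P$ has eigenvalues $\pm 1$ each with multiplicity $2^{n-1}$. The set of unit vectors satisfying $P\ket{\psi} = \ket{\psi}$ is exactly the unit sphere of the $(+1)$-eigenspace of $P$, a proper subspace of $\mathbb{C}^{2^n}$ of dimension $2^{n-1} < 2^n$, whose unit sphere is a measure-zero subset of the full sphere under the unitarily invariant (Haar) measure. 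Since there are only finitely many Pauli operators (at most $2 \cdot 4^n$ once signs are included), the event that \emph{some} nontrivial Pauli stabilizes $\ket{\psi}$ is a finite union of measure-zero sets, hence itself has measure zero. Therefore $G = \{\id\}$ and $\nu = n$ with probability $1$.

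It remains to bound the entanglement. For \emph{any} pure state and \emph{any} bipartition with $n_A \leq n_B$, the reduced state $\psi_A$ lives on $n_A$ qubits, so the trivial dimension bound gives $S(\psi_A) \leq n_A \leq n/2$. Combining this with $\nu = n$ yields $S(\psi_A) \leq n/2 = \nu/2 = O(\nu)$ almost surely, which is precisely the defining condition for $\ket{\psi}$ to be magic-dominated. This establishes $\Pr_{\ket{\psi} \sim \haar}[\ket{\psi}\ \text{is magic-dominated}] = 1$.

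The main (and essentially only) obstacle is the measure-theoretic argument in the second paragraph: one must verify that ``stabilized by no nontrivial Pauli'' is a probability-$1$ event rather than merely a typical one. This is clean precisely because the set of candidate stabilizers is \emph{finite} (unlike a continuum of operators), so the naive union bound suffices and no concentration inequality or $\epsilon$-net is needed. I expect no further complications, since the entropy bound is completely elementary and the rest is bookkeeping against \cref{def:ent-dom}.
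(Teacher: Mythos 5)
Your proof is correct and follows essentially the same route as the paper's: establish that a Haar-random state has trivial stabilizer group (hence $\nu = n$) with probability $1$, then note $S(\psi_A) \leq n_A \leq n/2 = O(\nu)$. The only difference is that you spell out the measure-zero union-bound argument over the finitely many Paulis, which the paper simply asserts.
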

\begin{proof}
All but a measure zero set of states in the Haar measure have no stabilizers, so they have $\nu=n$ with probability $1$. Since the entanglement of these states is at most $n_A$ (i.e., strictly less than $n$), these states are all magic-dominated.
\end{proof}
In spite of this lemma, the vast majority of states in the Haar measure will also never be observed in practice, as they take exponential time to prepare. Below, we will explore a much more practically relevant setting, where entanglement-dominated states turn out to be ubiquitous: \emph{polynomial-depth random quantum circuits}, or more specifically, unitary $k$-designs. A number of works \cite{haferkamp2020QuantumHomeopathyWorks,leone_quantum_2021,oliviero_transitions_2021} have shown that a very simple form of random circuit forms a unitary $k$-design. These circuits take the form of a sequence of global random Clifford unitaries, interleaved with local Haar random unitaries (which are, importantly, not necessarily Clifford unitaries). The measure over this random ensemble of circuits is $\mu_t$:
\be
\mu_t=\mu_{\text{Cl}_n}*\mu_{\haar_l}*\mu_{\text{Cl}_n} *\mu_{\haar_l}* \underbrace{\cdots}_{t\,\text{times}} *\,\mu_{\text{Cl}_n}*\mu_{\haar_l}*\mu_{\text{Cl}_n}
\ee
where $*$ denotes the convolution between measures, $\mu_{\haar_l}$ is the measure over $l$-qubit Haar random unitaries (acting on qubit the first $l$ qubits) with $l=O(1)$, and $\mu_{\text{Cl}_n}$ is the uniform distribution over $n$-qubit Clifford unitaries. Crucially, this measure generates an approximate $k$-design with just $t=O(\poly k)$. That is, using very few non-Clifford resources, we can quickly generate good $k$-designs~\cite{haferkamp2020QuantumHomeopathyWorks}. This leads us to observe that this ensemble generates states that are almost exclusively entanglement-dominated. We will use $\mu(\mathcal{S}_t)$ to denote the measure over states induced by $\mu_t$: that is, we sample from $\mu(\mathcal{S}_t)$ by sampling a unitary $U \sim \mu_t$ and applying $U \ket{0}^{\otimes n}$.

\begin{lemma}[Typicality of entanglement-dominated phase within $\mu(\mathcal{S}_t)$]\label{th:typical}
For any bipartition $A|B$, most $t$-doped stabilizer states, with $t=o(n_A)$, are entanglement-dominated. That is:
\be
\Pr_{\ket{\psi} \sim \mu(\mathcal{S}_t)}\qty[\ket{\psi}\,\text{is entanglement-dominated}\,]\geq  1-2^{-n_A/2 + 1}
\ee
\begin{proof}
If $t=o(n_A)$, a sufficient condition for $\ket{\psi}$ to be entanglement-dominated is that $S_2(\psi_A) \geq \frac{n_A}{2}$, since $S_1(\psi_A) \geq S_2(\psi_A)$. Recall that $S_{2}(\psi_A)=-\log \Tr(\psi_A^2)$, and applying the Markov inequality, we have
\be
\Pr_{\ket{\psi} \sim \mu(\mathcal{S}_t)}\qty[S_{2}(\psi_A)\ge \frac{n_A}{2}]\ge 1-\frac{\mathbb{E}_{\ket{\psi} \sim \mu(\mathcal{S}_t)}[\pur(\psi_{A})]}{2^{-n_A/2}} \geq 1 - 2^{-n_A/2 + 1},
\ee
where in the second inequality we use the fact that $\mu(\mathcal{S}_t)$ forms at least a $2$-design to conclude that $\mathbb{E}_{\mu(\mathcal{S}_t)}[\pur(\psi_{A})]=\frac{2^{n_A}+2^{n_B}}{2^n+1}\le \frac{2}{2^{n_A}}$. 
\end{proof}
\end{lemma}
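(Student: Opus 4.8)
The plan is to reduce the statement ``$\ket{\psi}$ is entanglement-dominated'' to a concrete bound on a single computable quantity, the subsystem purity $\pur(\psi_A) = \Tr(\psi_A^2)$. Recall from \cref{def:ent-dom} that entanglement-dominance means $S(\psi_A) = \omega(\nu)$. Since $\ket{\psi}$ is a $t$-doped stabilizer state, the preceding lemma gives $\nu \leq 2lt$, and with $l=O(1)$ and $t = o(n_A)$ this forces $\nu = o(n_A)$. It therefore suffices to establish $S(\psi_A) = \Omega(n_A)$ with high probability: combining $S(\psi_A) = \Omega(n_A)$ with $\nu = o(n_A)$ immediately gives $S(\psi_A)/\nu \to \infty$, i.e. $S(\psi_A) = \omega(\nu)$. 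Concretely, I would aim to show $S_1(\psi_A) \geq n_A/2$ except with exponentially small probability, and then pass from $S_1$ to a second-moment quantity via the R\'enyi monotonicity $S_1(\psi_A) \geq S_2(\psi_A) = -\log \pur(\psi_A)$. Under this reduction, $S_1(\psi_A) \geq n_A/2$ is implied by $\pur(\psi_A) \leq 2^{-n_A/2}$, so I only need to control the purity, which is degree two in the state and hence accessible through second moments.

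The key enabling observation is that $\mu(\mathcal{S}_t)$ reproduces the Haar second moment regardless of the injected magic. Because $\mu_t$ contains an independent global Clifford layer and the Clifford group is a unitary $3$-design, for any fixed intermediate state $\ket{\phi}$ produced by the remaining gates one has $\mathbb{E}_{C}[(C\ketbra{\phi}C^\dagger)^{\otimes 2}] = \mathbb{E}_{\haar}[\psi^{\otimes 2}]$, and averaging over the remaining (magic-carrying) gates does not spoil this. Hence $\mu(\mathcal{S}_t)$ is at least a state $2$-design, and the standard formula for the expected subsystem purity of a Haar-random state applies:
\begin{equation}
\mathbb{E}_{\ket{\psi}\sim\mu(\mathcal{S}_t)}[\pur(\psi_A)] = \frac{2^{n_A}+2^{n_B}}{2^n+1} \leq 2^{-n_A+1},
\end{equation}
where the last inequality uses $n_A \leq n_B$.

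Finally I would apply Markov's inequality to the nonnegative random variable $\pur(\psi_A)$:
\begin{equation}
\Pr\!\left[\pur(\psi_A) > 2^{-n_A/2}\right] \leq \frac{\mathbb{E}[\pur(\psi_A)]}{2^{-n_A/2}} \leq \frac{2^{-n_A+1}}{2^{-n_A/2}} = 2^{-n_A/2+1}.
\end{equation}
On the complementary event, $\pur(\psi_A) \leq 2^{-n_A/2}$ yields $S_2(\psi_A) \geq n_A/2$, hence $S_1(\psi_A) \geq n_A/2 = \Omega(n_A) = \omega(\nu)$, establishing entanglement-dominance with probability at least $1 - 2^{-n_A/2+1}$, as claimed.

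The only genuinely conceptual step is the second one: recognizing that the second moment of $\mu(\mathcal{S}_t)$ is pinned to the Haar value by the Clifford layers \emph{alone}, so that the non-Clifford content of the circuit is irrelevant to this particular calculation. Everything else --- the R\'enyi monotonicity $S_1 \geq S_2$, the closed-form purity expectation, and the Markov bound --- is routine. I would expect the main obstacle (such as it is) to lie in cleanly justifying that a single global Clifford layer suffices for the $2$-design property independently of the $t$ injected gates, and in carefully tracking the asymptotic bookkeeping that converts the concrete bound $S_1 \geq n_A/2$ together with $\nu = o(n_A)$ into the required $\omega(\nu)$ relation.
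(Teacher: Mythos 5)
Your proposal is correct and follows essentially the same route as the paper's own proof: reduce entanglement-dominance to $S_2(\psi_A)\geq n_A/2$ via $\nu\leq 2lt=o(n_A)$ and R\'enyi monotonicity, use the $2$-design property of $\mu(\mathcal{S}_t)$ to get $\mathbb{E}[\pur(\psi_A)]=\frac{2^{n_A}+2^{n_B}}{2^n+1}\leq 2^{-n_A+1}$, and finish with Markov's inequality. Your added remark that the global Clifford layer alone pins the second moment to the Haar value is a correct and slightly more explicit justification of the $2$-design step than the paper provides.
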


However, as explained in \cref{sec:stabnullity}, $t$-doped stabilizer states are only one example of $\nu$-compressible states. Indeed, as illustrated in \cref{fig:venn}, they form only a vanishing fraction of those. Therefore, we study the much more general class of $\nu$-compressible states, and find that they too are almost all ED. To establish this result, we choose the uniform measure on $\nu$-compressible states, defined as the convolution of the uniform measure over the Clifford group $\mu_{\mathrm{Cl}_n}$ and $\mu_{\mathrm{Haar}_{\nu}}$ representing the uniform measure on the unitary group. In other words, $\mu_{\nu}= \mu_{\mathrm{Cl}_n}* \mu_{\mathrm{Haar}_{\nu}}$. Recalling \cref{cor:nucompressibility}, we have the following lemma.

\begin{lemma}[Typicality of ED phase for $\nu$-compressible states]\label{lem:typicalitynu} Most $\nu$-compressible states, drawn from the measure $\mu_\nu$ with $\nu=o(n)$, are entanglement-dominated with respect to exponentially many cuts $A|B$, where each half of the bipartition is size $\Omega(n)$.
\begin{proof}
Take any cut $A|B$ with $n_A,n_B=\Omega(n)$. A sufficient condition for $\ket{\psi}$ to be entanglement-dominated is that $S_2(\psi_A) \geq \frac{n_A}{2}$, since $S_1(\psi_A) \geq S_2(\psi_A)$. Recall that $S_{2}(\psi_A)=-\log \Tr(\psi_A^2)$, and applying the Markov inequality, we have
\be
\Pr_{\ket{\psi} \sim \mu_\nu}\qty[S_{2}(\psi_A)< \frac{n_A}{2}]\le\frac{\mathbb{E}_{\ket{\psi} \sim \mu(\mathcal{S}_t)}[\pur(\psi_{A})]}{2^{-n_A/2}} \le 2^{-n_A/2 + 1},
\ee
where in the second inequality we use the fact that $\mu_{\nu}$ forms at least a $2$-design to conclude that $\mathbb{E}_{\mu(\mathcal{S}_t)}[\pur(\psi_{A})]=\frac{2^{n_A}+2^{n_B}}{2^n+1}\le \frac{2}{2^{n_A}}$ (where we chose $n_A\le n_B$ WLOG). Since $n_A=\Omega(n)$, using the union bound, we can tolerate up to exponentially many cuts and still conclude that the probability that $\ket{\psi} \sim \mu_{\nu}$ is ED over exponentially many cuts is $1-o(1)$.
\end{proof}
\end{lemma}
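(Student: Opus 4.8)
The plan is to reduce the entanglement-dominated condition, which requires $S(\psi_A) = \omega(\nu)$, to a single-cut statement about the second R\'enyi entropy, and then to amplify it over all the desired cuts with a union bound. For a fixed bipartition $A|B$ with $n_A \leq n_B$ and $n_A = \Omega(n)$, note that since $S_1(\psi_A) \geq S_2(\psi_A)$ it suffices to show $S_2(\psi_A) \geq n_A/2$. Indeed, under the hypotheses $\nu = o(n)$ and $n_A = \Omega(n)$, the bound $S_1(\psi_A) \geq n_A/2 = \Omega(n)$ immediately yields $S_1(\psi_A) = \omega(\nu)$, which is precisely the entanglement-dominated condition. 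Thus the whole problem collapses to controlling the purity $\Tr(\psi_A^2) = 2^{-S_2(\psi_A)}$ with high probability.

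First I would establish the key structural fact that the state ensemble induced by $\mu_\nu = \mu_{\mathrm{Cl}_n} * \mu_{\mathrm{Haar}_\nu}$ forms at least a unitary $2$-design, which is what controls the single moment we need. The argument is that for any fixed $\nu$-qubit unitary $U_\nu$, the input state $\ket{\phi} = U_\nu \ket{0}^{\otimes n}$ is fixed, and (since the Clifford is applied last in this convolution, consistent with \cref{cor:nucompressibility}) a uniformly random Clifford $C \sim \mu_{\mathrm{Cl}_n}$ produces the ensemble $\{C\ket{\phi}\}$. Because the Clifford group is a $3$-design, hence a $2$-design, the second moment $\mathbb{E}_C[(C \ketbra{\phi}{\phi} C^\dagger)^{\otimes 2}]$ matches the Haar value for \emph{every} fixed $\ket{\phi}$; averaging over $U_\nu \sim \mu_{\mathrm{Haar}_\nu}$ therefore leaves this second moment unchanged. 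I expect this to be the main conceptual point: everything downstream is a routine consequence of matching the Haar second moment.

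With the $2$-design property in hand, the expected purity of the reduced state is given by the standard Haar formula, which I would simply quote:
\be
\mathbb{E}_{\ket{\psi} \sim \mu_\nu}[\Tr(\psi_A^2)] = \frac{2^{n_A}+2^{n_B}}{2^n+1} \leq 2^{-n_A + 1},
\ee
using $n_A \leq n_B$ in the final step. Applying Markov's inequality to the nonnegative random variable $\Tr(\psi_A^2)$ then shows that the event $\Tr(\psi_A^2) > 2^{-n_A/2}$ --- equivalently $S_2(\psi_A) < n_A/2$ --- occurs with probability at most $2^{-n_A/2 + 1}$. Combined with the reduction above, any single cut with $n_A = \Omega(n)$ thus fails to be entanglement-dominated with probability at most $2^{-\Omega(n)}$.

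Finally I would close with a union bound over the family of cuts. Since the per-cut failure probability decays as $2^{-\Omega(n)}$, one can union bound over $2^{cn}$ cuts for a sufficiently small constant $c > 0$ and still conclude that $\ket{\psi} \sim \mu_\nu$ is entanglement-dominated across all of them simultaneously with probability $1 - o(1)$. This delivers the claimed typicality over exponentially many cuts, each of whose halves has size $\Omega(n)$. The only subtlety to watch is bookkeeping: the constant hidden in $n_A = \Omega(n)$ governs both the decay rate and the admissible number of cuts, so these must be chosen consistently for the union bound to go through.
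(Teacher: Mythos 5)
Your proposal is correct and follows essentially the same route as the paper: reduce to the single-cut condition $S_2(\psi_A)\geq n_A/2$, bound the expected purity via the $2$-design property of $\mu_\nu$, apply Markov's inequality, and union bound over exponentially many cuts. The only difference is that you explicitly justify why $\mu_\nu$ inherits the Haar second moment from the Clifford $3$-design property, a step the paper's proof takes for granted.
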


The findings of \cref{th:typical-magic,th:typical,lem:typicalitynu} are summarized in \cref{fig:venn}(b). The majority of states in the Hilbert space are in the MD phase, but the majority of $\nu$-compressible states are ED states. Within $\nu$-compressible states, $t$-doped stabilizer states with a sublinear number of non-Clifford gates are also ED states.

\subsection{Stabilizer entanglement}
When dealing with entanglement, a fundamental question is to understand whether a state is entangled or not, and more specifically, what the degree of entanglement is. There are many different available measures for entanglement. Here, we focus on the well-known $\alpha$-Rényi entropy.
\begin{definition}[$\alpha$-R\'enyi entanglement entropy]\label{def:renyi}
For a pure bipartite state $\ket{\psi}_{AB}$, the $\alpha$-R\'enyi entanglement entropy across the bipartition $A|B$ is
\begin{equation}
    S_\alpha(\psi_A) = \frac{1}{1-\alpha} \log(\tr(\psi_A^\alpha)) = S_\alpha(\psi_B).
\end{equation}
For the special cases $\alpha\in \qty{0,1,\infty}$, the R\'enyi entropy can be defined as the limit of $S_\alpha$ as $\alpha \rightarrow 0$, $1$ (which corresponds to the familiar von Neumann entropy), or $\infty$. Explicitly, for these three cases
\begin{equation}
    S_\alpha(\psi_A) = \begin{cases}
        \log \rank \psi_A &\qq{for $\alpha=0$} \\
        -\tr(\psi_A \log \psi_A) &\qq{for $\alpha=1$} \\
        -\log \norm{\psi_A}_\infty &\qq{for $\alpha=\infty$.}
    \end{cases}
\end{equation}
An important fact about the R\'enyi entanglement entropies is that they are monotonically decreasing in $\alpha$, which is to say that $S_\alpha(\psi_A) \geq S_\beta(\psi_A)$ for any $\alpha \leq \beta$.
\end{definition}
Given the fact that the R\'enyi entropies are defined in terms of the reduced density matrix $\psi_A$, we can use the structure of a $\nu$-compressible state to \emph{exactly} calculate any R\'enyi entropy of a generic state. By using this invariance, as summarized by \cref{eq:structure1}, we show below that the case of $\alpha=2$ is particularly simple.
\begin{lemma}[Exact 2-R\'enyi entropy]\label{lem:exact-2}
For any pure state $\ket{\psi}$, the $2$-R\'enyi entanglement entropy across a bipartition $A|B$ is
\begin{equation}
    S_2(\psi_A) = (n_A - \abs{S_A}) - \log(\sum_{i=0}^k \delta_i \tr^2(h_i \psi)),\label{eq:s2}
\end{equation}
where $S_A$ are the generators for the local stabilizer group $G_A$ (i.e., the subgroup of the original stabilizer group that acts trivially on the $B$ subsystem), $\delta_i \in \qty{0,1}$ is an indicator variable for whether there is some stabilizer $P_i$ of $\ket{\psi}$ such that $P_i h_i$ acts trivially on the $B$ subsystem, and by convention $h_0=\id$, $\delta_0=1$.
\end{lemma}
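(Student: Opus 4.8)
The plan is to compute the purity $\tr(\psi_A^2)$ directly in the Pauli basis and then read off $S_2(\psi_A)=-\log\tr(\psi_A^2)$. First I would expand the global state as $\ketbra{\psi}=2^{-n}\sum_{P\in\mathbb{P}_n}\tr(P\psi)\,P$ and trace out $B$. Since $\Tr_B(P_A\otimes P_B)=\tr(P_B)\,P_A$ vanishes unless $P_B=\id_B$, only Paulis acting trivially on $B$ survive, giving $\psi_A=2^{-n_A}\sum_{P_A}\tr((P_A\otimes\id_B)\psi)\,P_A$. Using Pauli orthogonality $\tr(P_AQ_A)=2^{n_A}\delta_{P_A,Q_A}$, the cross terms collapse and I obtain
\be
\tr(\psi_A^2) = \frac{1}{2^{n_A}}\sum_{P_A\in\mathbb{P}_{n_A}}\tr^2\big((P_A\otimes\id_B)\psi\big),
\ee
so the problem reduces to counting, and weighting, the Hermitian Paulis of the form $P_A\otimes\id_B$ that have nonzero overlap with $\psi$.

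Next I would feed in the coset structure from \cref{thm:pure-struct}: the Paulis with $\tr(P\psi)\neq 0$ partition into disjoint cosets $h_iG$ ($i=0,\dots,k$, $h_0=\id$), and on each coset the expectation is constant, $\tr(h_iP_j\psi)=\tr(h_i\psi)$ for all $P_j\in G$, because $P_j\ket{\psi}=\ket{\psi}$. (Implicitly, any $h_i$ with $\tr(h_i\psi)\neq0$ must commute with all of $G$, so the coset elements are genuinely Hermitian Paulis; otherwise the expectation would be forced to vanish.) The surviving Paulis in \cref{eq:s2} are exactly the elements of $\bigsqcup_i h_iG$ that act trivially on $B$, and each contributes $\tr^2(h_i\psi)$; distinct cosets contribute distinct operators by disjointness, so no double-counting occurs. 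Thus I only need, for each coset, the number $N_i$ of its elements acting trivially on $B$.

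The crux — the one genuinely non-routine step — is the counting claim that $N_i$ is either $0$ or exactly $\abs{G_A}=2^{\abs{S_A}}$. The argument is a coset-within-a-coset observation: if some $h_iP_j^{(0)}$ acts trivially on $B$, then $h_iP_j$ acts trivially on $B$ if and only if $P_j(P_j^{(0)})^{-1}$ does, i.e.\ $P_j(P_j^{(0)})^{-1}\in G_A$; hence the admissible $P_j$ form a single coset of $G_A$ in $G$, of size $\abs{G_A}$. If no element of $h_iG$ is trivial on $B$, then $N_i=0$. This dichotomy is precisely what the indicator $\delta_i$ in the statement records, and $\abs{G_A}=2^{\abs{S_A}}$ because $S_A$ is a set of $\abs{S_A}$ independent generators of $G_A$.

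Assembling the pieces, $N_i=\delta_i\,2^{\abs{S_A}}$, so that
\be
\tr(\psi_A^2)=\frac{2^{\abs{S_A}}}{2^{n_A}}\sum_{i=0}^k\delta_i\,\tr^2(h_i\psi).
\ee
Taking $-\log$ of both sides yields $S_2(\psi_A)=(n_A-\abs{S_A})-\log\!\big(\sum_{i=0}^k\delta_i\tr^2(h_i\psi)\big)$, which is the claimed \cref{eq:s2}. The only place demanding real care is the counting lemma and the accompanying check that the relevant $h_i$ centralize $G$; everything else is a mechanical application of Pauli orthonormality together with the structure theorem.
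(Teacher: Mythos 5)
Your proposal is correct and follows essentially the same route as the paper's proof: both rest on the coset decomposition of $\supp(\psi)$ from \cref{thm:pure-struct} together with Pauli orthogonality, and the key step in each is the observation that the elements of a coset $h_iG$ acting trivially on $B$ form either the empty set or a coset of $G_A$ (hence number $2^{\abs{S_A}}$), which is exactly what $\delta_i$ records. The only cosmetic difference is that the paper first assembles $\psi_A = H_A\Pi_A$ as an operator and then takes its purity, whereas you write the purity directly as $2^{-n_A}\sum_{P_A}\tr^2((P_A\otimes\id)\psi)$ and count surviving Paulis; your explicit coset-within-a-coset counting and the check that the relevant $h_i$ centralize $G$ are correct.
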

\begin{proof}
An equivalent formulation of \eqref{eq:structure1} is
\begin{equation}
    \ketbra{\psi} = \frac{1}{2^{n-\abs{S}}}\sum_{i=0}^k \tr(h_i \psi) h_i \Pi = \frac{1}{2^n} \sum_{i=0}^k \tr(h_i \psi) \sum_{P_j \in G} h_i P_j.
\end{equation}
Note that if there is no stabilizer $P_j \in G$ such that $P_j h_i$ acts trivially on the $B$ subsystem, then $\tr_B(h_i \Pi) = 0$. We will call the surviving indices $\mathcal{I} = \qty{i \mid \exists P_i \in G;\, \tr_B(h_i P_i) \neq 0}$. We can write
\begin{equation}
    \psi_A = \tr_B(\ketbra{\psi}) = \frac{1}{2^{n-\abs{G}}}\sum_{i \in \mathcal{I}} \tr(h_i \psi) \tr_B(h_i P_i \Pi),
\end{equation}
since $P_i \Pi = \Pi$. If we expand $h_i P_i \Pi$ as a linear combination of Paulis $h_i P_i P_j$, we quickly see that since $h_i P_i$ acts as the identity on $B$, $h_i P_i P_j$ survives the partial trace if and only if $P_j$ also acts as the identity on $B$, which is to say that it is generated by $S_A$. That is, $\tr_B(h_i P_i \Pi)=h_i P_i \Pi_A$, where $\Pi_A = \prod_{j=1}^{\abs{S_A}} \qty(\frac{\id+g_j}{2})$. In summary, defining $h_i' \coloneqq h_i P_i$:
\begin{equation}
    \psi_A = \frac{1}{2^{n_A}}\sum_{i \in \mathcal{I}} \tr(h_i \psi) \sum_{P_j \in G_A} h_i' P_j = \underbrace{\frac{1}{2^{n_A-\abs{S_A}}}\qty(\id + \sum_{i\in \mathcal{I}} \tr(h_i \psi) h_i')}_{H_A} \underbrace{\qty(\prod_{j=1}^{\abs{S_A}} \qty(\frac{\id + g_j}{2}))}_{\Pi_A}.\label{eq:structure2}
\end{equation}
The purity of this state is exactly
\begin{equation}
    \tr(\psi_A^2) = \frac{1}{4^{n_A}} \sum_{i \in \mathcal{I}} \tr^2(h_i \psi) 2^{n_A+\abs{S_A}} = \frac{1}{2^{n_A - \abs{S_A}}} \sum_{i=0}^k \delta_i \tr^2(h_i \psi).
\end{equation}
Since $S_2(\psi_A) = -\log \tr(\psi_A^2)$, this shows the desired claim.
\end{proof}

Each of the linear-algebraic manipulations in \cref{alg:2renyi} can be done efficiently using the tableau formalism, including the calculation of $\abs{S_A}$ and $\delta_i$. The primary computational cost is simply the loop over $i=0,\ldots,k$, as $k$ can be as high as $2^{2\nu}$. The overall computational complexity of \cref{alg:2renyi} is $O(4^\nu n^3)$. Therefore, similar to low-rank stabilizer simulation algorithms, this exact calculation of the 2-R\'enyi entropy can be done in polynomial time for $\nu=O(\log n)$.

At this point, we recall our earlier assertion that most of the entanglement structure could be ascertained simply by looking at the stabilizers of a state. Indeed, the form of \cref{eq:s2} is quite suggestive: it appears that there is a `stabilizer' component of the entanglement $n_A-\abs{S_A}$, and a correction due to the non-stabilizer components of the state. The following lemma makes this intuition precise.
\begin{lemma}\label{lem:salpha-bound}
For any pure state $\psi$ and bipartition $A|B$, the following holds for all $\alpha$:
\begin{equation}\label{eq:naga}
    n_A - \abs{S_A} - 2\nu \leq S_\alpha(\psi_A) \leq n_A - \abs{S_A},
\end{equation}
where $\psi_A = \tr_B(\psi)$. For the von Neumann entropy, a tighter bound
\begin{equation}
    n_A - \abs{S_A} - \nu \leq S_1(\psi_A) \leq n_A - \abs{S_A}\label{eq:vn-tight}
\end{equation}
holds. An identical bound holds for any R\'enyi entropy with $\alpha \leq 2$.
\end{lemma}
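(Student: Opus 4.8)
The plan is to combine the explicit form of $\psi_A$ from \cref{eq:structure2} with the monotonicity of the R\'enyi entropies in $\alpha$ (noted in \cref{def:renyi}), which collapses all three inequalities into just three boundary cases. Since $S_0(\psi_A) \geq S_\alpha(\psi_A) \geq S_\infty(\psi_A)$ for every $\alpha$, it suffices to (i) upper bound $S_0$, (ii) lower bound $S_\infty$, and (iii) for the sharper $\alpha \leq 2$ statement, lower bound $S_2$ (because $S_\alpha \geq S_2$ whenever $\alpha \leq 2$). I would handle these three in turn.

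For the upper bound I would use \cref{eq:structure2}, which factors $\psi_A = H_A \Pi_A$ with $\Pi_A = \prod_{j=1}^{\abs{S_A}}\frac{\id + g_j}{2}$ a projector of rank $2^{n_A - \abs{S_A}}$. Since the support of $\psi_A$ is contained in the range of $\Pi_A$, we have $\rank \psi_A \leq 2^{n_A - \abs{S_A}}$, hence $S_0(\psi_A) = \log \rank \psi_A \leq n_A - \abs{S_A}$, and monotonicity pushes this to $S_\alpha(\psi_A) \leq n_A - \abs{S_A}$ for all $\alpha$.

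For the universal lower bound I would bound the operator norm. As $\Pi_A$ is a projector, $\norm{\psi_A}_\infty = \norm{H_A \Pi_A}_\infty \leq \norm{H_A}_\infty$. Writing $H_A$ as a sum of at most $k+1 \leq 2^{2\nu}$ Pauli terms (the count of surviving coset representatives from \cref{thm:pure-struct}), each with coefficient of magnitude at most $2^{-(n_A - \abs{S_A})}$ since $\abs{\tr(h_i\psi)} \leq 1$, and each of operator norm $1$, the triangle inequality yields $\norm{H_A}_\infty \leq 2^{2\nu - (n_A - \abs{S_A})}$. Therefore $S_\infty(\psi_A) = -\log \norm{\psi_A}_\infty \geq n_A - \abs{S_A} - 2\nu$, and monotonicity propagates this to every $\alpha$.

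The sharper bound for $\alpha \leq 2$ is where the main work lies, and it rests on a Pauli sum rule. From the exact formula $S_2(\psi_A) = (n_A - \abs{S_A}) - \log\!\big(\sum_{i=0}^k \delta_i \tr^2(h_i\psi)\big)$ of \cref{lem:exact-2}, it suffices to show $\sum_{i=0}^k \delta_i \tr^2(h_i\psi) \leq 2^\nu$. I would establish the identity $\sum_{i=0}^k \tr^2(h_i\psi) = 2^\nu$ by observing that every Pauli in a coset $h_i G$ shares the value $\abs{\tr(h_i\psi)}$, that each coset contains $\abs{G} = 2^{n-\nu}$ elements, and that the completeness relation gives $\sum_{P \in \mathbb{P}_n} \tr^2(P\psi) = 2^n \tr(\psi^2) = 2^n$ for a pure state; partitioning the left-hand side into the cosets with nonzero overlap and dividing by $\abs{G}$ gives the claim. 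Since $\delta_i \in \{0,1\}$, the $\delta$-weighted sum is at most $2^\nu$, so $S_2(\psi_A) \geq n_A - \abs{S_A} - \nu$, and $S_\alpha \geq S_2$ for $\alpha \leq 2$ (in particular the von Neumann case $\alpha = 1$) completes the argument. The only delicate point throughout is using the direction of monotonicity correctly in each reduction; the sum rule $\sum_i \tr^2(h_i\psi) = 2^\nu$ is the key ingredient that sharpens the loss from $2\nu$ to $\nu$, and everything else follows directly from the algebraic structure already in hand.
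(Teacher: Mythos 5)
Your proof is correct and follows the same skeleton as the paper's: bound $S_0$ from above by $\log\rank\Pi_A$, bound $S_\infty$ from below via $\norm{H_A}_\infty \leq (k+1)\,2^{-(n_A-\abs{S_A})} \leq 2^{2\nu-(n_A-\abs{S_A})}$, and sharpen to $\nu$ for $\alpha\leq 2$ by showing $\sum_{i=0}^k \delta_i \tr^2(h_i\psi)\leq 2^\nu$ in the exact formula of \cref{lem:exact-2}. The one place you genuinely diverge is the proof of that last sum rule: the paper conjugates by a nullity-distillation Clifford to reduce to $\sum_{P\in\mathbb{P}_\nu}\tr^2(P\phi)=2^\nu$ for a $\nu$-qubit pure state, whereas you partition $\supp(\psi)$ into the $k+1$ cosets $h_iG$, use that $\tr^2(h_iP_j\psi)=\tr^2(h_i\psi)$ on each coset, and divide the Pauli completeness identity $\sum_{P\in\mathbb{P}_n}\tr^2(P\psi)=2^n$ by $\abs{G}=2^{n-\nu}$. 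Your route is slightly more self-contained (no appeal to \cref{alg:nullity}) and yields the exact identity $\sum_{i=0}^k\tr^2(h_i\psi)=2^\nu$ rather than an inequality; both arguments are equally rigorous and give the same final bound.
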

\begin{proof}
Recalling  the monotonicity of the $\alpha$-R\'enyi entropies, we have $S_\infty(\psi_A) \leq S_\alpha(\psi_A) \leq S_0(\psi_A)$, so it suffices to lower bound $S_\infty(\psi_A)$ and upper bound $S_0(\psi_A)$. Referring to the algebraic structure of $\psi_A$ in \cref{eq:structure2}, the lower bound proceeds as follows: $-S_\infty(\psi_A) = \log \norm{\psi_A}_\infty = \log \norm{H_A \Pi_A}_\infty \leq \log \norm{H_A}_\infty \leq \log(\frac{k+1}{2^{n_A - \abs{S_A}}}) \leq -(n_A - \abs{S_A}) + 2\nu$, so we have $S_\infty(\psi_A) \geq n_A - \abs{S_A} - 2\nu$. For the upper bound, $S_0(\psi_A) = \log \rank \psi_A = \log \rank(\Pi_A H_A) \leq \log \rank \Pi_A = n_A - \abs{S_A}$.

The proof of the second inequality follows by lower bounding $S_2(\psi_A)$ using \cref{lem:exact-2}. Specifically, it suffices to show that $\log(\sum_{i=0}^k \delta_i \tr^2(h_i \psi)) \leq \nu$. Observe that by applying a nullity distillation unitary $C$ to the state $\psi$, we get $\ket{0}^{\otimes (n-\nu)} \otimes \ket{\phi}$, where $\ket{\phi}$ is a $\nu$-qubit state. We have that
\begin{equation}
    \sum_{i=0}^k \tr^2(h_i \psi) = \sum_{i=0}^k \tr^2((C h_i C^\dagger) (C \psi C^\dagger)) = \sum_{i=0}^k \tr^2((\id_{n-\nu} \otimes P_i) (\ketbra{0}^{\otimes(n-\nu)} \otimes \ketbra{\phi})) \leq \sum_{P \in \mathbb{P}_\nu} \tr^2(P \phi) = 2^\nu,
\end{equation}
where the final equality holds because  $\ket{\phi}$ is a pure $\nu$-qubit state. We then have
\begin{equation}
    \log(\sum_{i=0}^k \delta_i \tr^2(h_i \psi)) \leq \log(\sum_{i=0}^k \tr^2(h_i \psi)) = \log(\sum_{P \in \mathbb{P}_\nu} \tr^2(P \phi)) = \nu,
\end{equation}
as desired.
\end{proof}

It may at first appear that $n_A - \abs{S_A}$ is a fundamental quantity for computing the entanglement of a state. Indeed, the above lemma recovers the fact that, as shown in Ref. \cite{fattal_entanglement_2004}, the entanglement of a stabilizer state across a bipartition $A|B$ is given exactly by $n_A - \abs{S_A}$. However, for general states, the quantity $n_A - \abs{S_A}$ is not necessarily symmetric across the bipartition. That is, unlike stabilizer states, we do not necessarily have $n_A - \abs{S_A} = n_B - \abs{S_B}$. This can be seen by a simple example: consider $T_1 \ket{+}^{\otimes 2}$, where $T_1$ is the $T$-gate acting on the first qubit. This state has only one stabilizer: $X_2$. Therefore, if $A$ is the first qubit and $B$ is the second, we will have $n_A-\abs{S_A}=1$ while $n_B-\abs{S_B}=0$. This problem motivates us to introduce a closely related quantity, termed the `stabilizer entanglement', that is manifestly symmetric across the bipartition. We will show that this stabilizer entanglement not only captures the entanglement just as well as the quantity $n_A - \abs{S_A}$, but it also serves as an extremely useful entanglement monotone for restricted classes of LOCC protocols, namely stabilizer LOCC protocols.
\begin{definition}[Stabilizer entanglement]\label{def:stab-ent}
For a given bipartition $A|B$, the stabilizer entanglement $\mathcal{E}(\rho;A|B)$ is 
\begin{equation}
    \mathcal{E}(\rho;A|B) \coloneqq \frac{\abs{S_{AB}}}{2} = \frac{\abs{S} - (\abs{S_A} + \abs{S_B})}{2},
\end{equation}
where $S_{AB}$ are the generators for the subgroup $G_{AB} \cong G/(G_A \cdot G_B)$ of stabilizers that are unaccounted for by the local stabilizer groups $G_A$ and $G_B$. Note that $\rho$ may be a mixed or pure state.
\end{definition}
\begin{theorem}[Bounded approximation error]\label{thm:stab-approx}
For a pure state, the difference between $\mathcal{E}(\psi;A|B)$ and $S_\alpha(\psi_A)$ is $O(\nu)$. Specifically, the following holds for all $\alpha$:
\begin{equation}
    \mathcal{E}(\psi;A|B) - 3\nu/2 \leq S_\alpha(\psi_A) \leq \mathcal{E}(\psi;A|B) + \nu/2.
\end{equation}
For the von Neumann entropy,
\begin{equation}
    \mathcal{E}(\psi;A|B) - \nu/2 \leq S_1(\psi_A) \leq \mathcal{E}(\psi;A|B) + \nu/2.
\end{equation}
\end{theorem}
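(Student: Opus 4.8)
The plan is to reduce everything to \cref{lem:salpha-bound} by applying that asymmetric bound to \emph{both} halves of the bipartition and then exploiting the fact that, for a pure state, $S_\alpha(\psi_A) = S_\alpha(\psi_B)$. The first step is purely bookkeeping: introduce the shorthand $a \coloneqq n_A - \abs{S_A}$ and $b \coloneqq n_B - \abs{S_B}$. Using $\abs{S} = n - \nu$ and $n = n_A + n_B$, a direct substitution into \cref{def:stab-ent} gives the clean identity
\begin{equation}
    \mathcal{E}(\psi;A|B) = \frac{\abs{S} - \abs{S_A} - \abs{S_B}}{2} = \frac{a + b - \nu}{2}.
\end{equation}
This is the pivotal reformulation: it expresses the (symmetric) stabilizer entanglement in terms of the two (individually asymmetric) quantities $a$ and $b$ that appear in \cref{lem:salpha-bound}.

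Next I would invoke \cref{lem:salpha-bound} once for each subsystem. Applied to $A$ it yields $a - 2\nu \leq S_\alpha(\psi_A) \leq a$, and applied to $B$ it yields $b - 2\nu \leq S_\alpha(\psi_B) \leq b$. Because $\ket{\psi}$ is pure we have $S_\alpha(\psi_A) = S_\alpha(\psi_B)$, so both pairs of inequalities constrain the \emph{same} number $S_\alpha(\psi_A)$. Combining them gives the two-sided sandwich
\begin{equation}
    \max(a,b) - 2\nu \leq S_\alpha(\psi_A) \leq \min(a,b).
\end{equation}
The upper bound now follows from $\min(a,b) \leq \tfrac{a+b}{2} = \mathcal{E}(\psi;A|B) + \tfrac{\nu}{2}$, and the lower bound from $\max(a,b) - 2\nu \geq \tfrac{a+b}{2} - 2\nu = \mathcal{E}(\psi;A|B) - \tfrac{3\nu}{2}$. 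This establishes the general-$\alpha$ claim.

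For the von Neumann statement I would repeat the argument verbatim but feed in the tighter bound $n_A - \abs{S_A} - \nu \leq S_1(\psi_A) \leq n_A - \abs{S_A}$ (and its $B$-analogue) from the second half of \cref{lem:salpha-bound}. This upgrades the sandwich to $\max(a,b) - \nu \leq S_1(\psi_A) \leq \min(a,b)$, whence $S_1(\psi_A) \leq \tfrac{a+b}{2} = \mathcal{E} + \tfrac{\nu}{2}$ and $S_1(\psi_A) \geq \tfrac{a+b}{2} - \nu = \mathcal{E} - \tfrac{\nu}{2}$, exactly the stated tighter window. I do not anticipate a genuine obstacle here, since all the analytic content is already packaged in \cref{lem:salpha-bound}; the only conceptual point worth flagging is the symmetrization trick. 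The individual quantity $n_A - \abs{S_A}$ is not symmetric across the cut (as the $T_1\ket{+}^{\otimes 2}$ example preceding the theorem shows), so neither $a$ nor $b$ alone can pin down $S_\alpha$ to within $O(\nu)$ on the correct (symmetric) side. It is precisely the use of purity to equate $S_\alpha(\psi_A)$ with $S_\alpha(\psi_B)$ — thereby intersecting the $A$-constraint with the $B$-constraint — that produces a symmetric target $\tfrac{a+b}{2}$ and justifies the $\min/\max$ step. Verifying the two elementary inequalities $\min(a,b) \leq \tfrac{a+b}{2} \leq \max(a,b)$ is the entirety of the remaining ``work.''
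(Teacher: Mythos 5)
Your proof is correct and follows essentially the same route as the paper's: both apply \cref{lem:salpha-bound} to each half of the cut, use purity to equate $S_\alpha(\psi_A)$ with $S_\alpha(\psi_B)$, and relate the symmetrized quantity $\tfrac{(n_A-\abs{S_A})+(n_B-\abs{S_B})}{2}$ to $\mathcal{E}(\psi;A|B)+\nu/2$ via $\abs{S}=n-\nu$. The only cosmetic difference is that you pass through $\min/\max$ where the paper simply averages the two sandwiches, which yields the identical final bounds.
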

\begin{proof}
For brevity, let us define $z \coloneqq \frac{(n_A - \abs{S_A}) + (n_B - \abs{S_B})}{2} = \frac{n-\abs{S_A}-\abs{S_B}}{2}$. Using the fact that $S_\alpha(\psi_A) = S_\alpha(\psi_B)$, applying \cref{eq:naga}, we have $z - 2\nu \leq S_\alpha(\psi_A) \leq z$. Now, since $\abs{S} = n-\nu \implies z = \mathcal{E}(\psi;A|B) + \frac{\nu}{2}$, so
\begin{equation}
    \mathcal{E}(\psi;A|B) - 3\nu/2 \leq S_\alpha(\psi_A) \leq \mathcal{E}(\psi;A|B) + \nu/2.
\end{equation}
The bound for von Neumann entropy holds with identical reasoning, simply by using the tighter bound from \cref{eq:vn-tight}.
\end{proof}

\begin{definition}[Stabilizer LOCC protocols]
Stabilizer protocols are all procedures that are comprised of the following four operations (i) evolution by Clifford unitaries, (ii) inclusion of an arbitrary number of ancillae initialized in $\ket{0}$, (iii) discarding any subset of qubits, (iv) measurement in the computational basis, and (v) any combination of the first four operations conditioned on these measurement outcomes. Stabilizer LOCC protocols with respect to a bipartition $A|B$ are stabilizer protocols wherein the Clifford unitaries are allowed only to act locally on either $A$ or $B$, but no entangling unitaries are allowed across the bipartition. Classical communication is still allowed for stabilizer LOCC protocols, so operations on either side of the bipartition can be conditioned on measurement outcomes from the other side of the bipartition. In short, stabilizer LOCC protocols are to general stabilizer protocols as LOCC protocols are to general quantum channels.
\end{definition}
Note that this definition of stabilizer protocols allows for protocols conditioned on classical randomness, as $m$ random bits can be generated by including $m$ ancillae, acting with a Hadamard on each (to get $\ket{+}^{\otimes m}$), and measuring in the computational basis. 

\begin{theorem}[Monotonicity under stabilizer LOCC]\label{thm:monotone}
The stabilizer entanglement $\mathcal{E}$ is non-increasing under stabilizer LOCC operations.
\end{theorem}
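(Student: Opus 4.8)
The plan is to reduce the claim to the elementary operations (i)--(v) defining a stabilizer LOCC protocol and to show $\mathcal{E}(\cdot;A|B)$ is non-increasing under each. Since $\mathcal{E}$ depends only on the \emph{unsigned} stabilizer group $G$ through the generator counts $\abs{S},\abs{S_A},\abs{S_B}$, and since a computational-basis measurement alters $G$ only through signs, the unsigned group is independent of the measurement outcome; it therefore suffices to verify non-increase branch-by-branch, and in particular conditioning later local operations on earlier outcomes (classical communication) is automatically harmless. Throughout I work in the $\mathbb{F}_2$ picture of \cref{eq:tableau}, viewing $G$ as a subspace $V\subseteq\mathbb{F}_2^{2n}$, with $V_A\coloneqq V\cap(\text{$A$-supported vectors})$ and $V_B$ defined analogously, so that $2\mathcal{E}=\abs{S_{AB}}=\dim V-\dim V_A-\dim V_B$; equivalently $2\mathcal{E}=\dim\pi_A(V)-\dim V_A$, where $\pi_A$ projects onto the $A$-coordinates.

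The two trivial operations come first. A local Clifford on $A$ acts as an invertible symplectic map on the $A$-coordinates and fixes every $B$-supported vector, so it maps $V\to V$, $V_A\to V_A$, $V_B\to V_B$ bijectively, leaving all three dimensions, hence $\mathcal{E}$, unchanged (likewise for local Cliffords on $B$). Appending an ancilla $\ket{0}$ to $A$ adjoins the single $A$-local generator $Z$ on the new qubit, incrementing both $\dim V$ and $\dim V_A$ by one while fixing $\dim V_B$, so $\mathcal{E}$ is unchanged.

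The substance is in discarding and measurement, which I treat (without loss of generality, local to $A$) through their action on $G$. Discarding a qubit $q\in A$ replaces $V$ by the restriction to $\bar q$ of $W\coloneqq\{g\in V:g|_q=\id\}$, the subgroup acting trivially on $q$; this restriction is injective, so $\dim V'=\dim W$. A direct count gives $\dim V_B'=\dim V_B$ and $\dim V_A'=\dim(V_A\cap W)$, whence, writing $a\coloneqq\dim V-\dim W$ and $b\coloneqq\dim V_A-\dim(V_A\cap W)$, one finds $\abs{S_{AB}}-\abs{S_{AB}'}=a-b$. Now $a$ and $b$ are precisely the ranks of the \emph{same} pair of $\mathbb{F}_2$-linear functionals (the $X$-part and $Z$-part on $q$) evaluated on $V$ and on the subspace $V_A\subseteq V$; since restricting a linear map to a subspace cannot increase its rank, $b\le a$ and so $\abs{S_{AB}'}\le\abs{S_{AB}}$. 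Measuring $Z_q$ is the same bookkeeping: if some stabilizer anticommutes with $Z_q$, the standard update swaps that generator for the $A$-local operator $Z_q$, keeping $\dim V$ fixed while $\dim V_A',\dim V_B'\ge\dim V_A,\dim V_B$; if instead $Z_q$ commutes with all of $G$, then either $G$ is unchanged or $Z_q$ is adjoined as a new $A$-local generator, incrementing $\dim V$ and $\dim V_A$ together. In every branch $\mathcal{E}$ does not increase, completing the reduction.

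The main obstacle is making this bookkeeping rigorous for \emph{non-stabilizer} states: because $\rho$ may carry nontrivial nullity, I must confirm that its stabilizer group $G=\{P\in\mathbb{P}_n:P\rho=\rho\}$ transforms under local measurement and partial trace exactly by the stabilizer rules invoked above --- that is, that the non-stabilizer component $H$ of \cref{eq:structure1} neither obstructs these updates nor produces spurious stabilizers, and, crucially, that no operation local to $A$ can create a stabilizer crossing the cut (any emergent stabilizer must be $A$-local, which only strengthens the bound by raising $\dim V_A'$). I expect this to follow from the covariance of the defining relation $P\rho=\rho$ under the relevant completely positive maps, together with the physical fact that a channel local to $A$ cannot manufacture $A|B$ correlations; establishing it directly for arbitrary nullity, rather than borrowing the pure-stabilizer update rules wholesale, is the one step that demands genuine care.
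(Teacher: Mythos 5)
Your proposal follows essentially the same route as the paper's proof: reduce to the elementary operations (local Cliffords, ancilla inclusion, partial trace, computational-basis measurement) and track the generator counts $\abs{S},\abs{S_A},\abs{S_B}$ under each. Your treatment of the discard step --- comparing the ranks of the $(x_q,z_q)$ functionals on $V$ and on the subspace $V_A\subseteq V$ --- is in fact more explicit than the paper's one-line assertion that tracing out can only discard stabilizers, and your measurement bookkeeping matches the paper's case analysis (choose the anticommuting partner from $S_A$ if possible, otherwise from $S_{AB}$).

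The step you leave open is, however, precisely the step the paper's proof also skips, and you are right that this is where the difficulty lives: the paper applies the pure-stabilizer update rules to the generators and never verifies that the \emph{true} stabilizer group of the output state contains no additional generators crossing the cut. Moreover, the lemma you hope to invoke (``no operation local to $A$ can create a stabilizer crossing the cut'') is false for conditional post-measurement states of non-stabilizer inputs. Take $\ket{\psi}=\tfrac{1}{\sqrt{2}}\left(\ket{0}_a\otimes\ket{\phi_+}^{\otimes k}+\ket{1}_a\otimes\ket{\eta}\right)$ with $a\in A$, the $k$ Bell pairs split across $A|B$, and $\ket{\eta}$ a generic (non-stabilizer) state on the same $2k$ qubits: this state has trivial stabilizer group, hence $\mathcal{E}(\psi;A|B)=0$, yet measuring $Z_a$ and obtaining outcome $0$ leaves $\ket{0}_a\otimes\ket{\phi_+}^{\otimes k}$, which has $\mathcal{E}=k$. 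So what both you and the paper actually establish is monotonicity of the stabilizer entanglement computed from the \emph{inherited} (tracked) generators, not from the full stabilizer group of every conditional output. Rather than trying to close your gap by proving the emergent-stabilizer claim in full generality --- which cannot be done --- the honest fix is to restate the monotonicity at the level of the tracked group (or for the unconditioned output ensemble) and to check that this weaker statement suffices for the downstream uses such as \cref{cor:bounds}.
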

\begin{proof}
It suffices to show that $\abs{S_{AB}}$ is non-increasing under (i) local Clifford unitaries, (ii) inclusion of $m$ ancillae in $\ket{0}^{\otimes m}$, (iii) partial trace, and (iv) computational basis measurements. 
\begin{enumerate}[label=(\roman*)]
    \item Consider a Clifford unitary $U = U_A \otimes \id$, which acts nontrivially only on subsystem $A$ (the proof for a Clifford acting on subsystem $B$ follows identically). By definition, for any generator $g_A \otimes g_B = g \in \mathbb{P}_n$, $U g U^\dagger = U_A g_A U_A^\dagger \otimes g_B$. Therefore, any generator which acts trivially only on subsystem $B$ must still act trivially on $B$ after applying $U_A$, and the same goes for generators which act trivially on $A$. This is to say, $\abs{S_A}$ and $\abs{S_B}$ cannot decrease, and since $U$ is a Clifford unitary, $\abs{S}$ remains unchanged. Therefore, $\abs{S_{AB}}$ cannot increase.
    \item Adding $m$ ancillae initialized in $\ket{0}^{\otimes m}$ adds $m$ generators which are single qubit $Z$ operators on each of the ancilla sites. However, since these generators act locally on either side of the bipartition, this does not change the number of generators $\abs{S_{AB}}$ spanning $A$ and $B$.
    \item Tracing out a subset of the qubits can only act to discard stabilizers of the state. 
    \item A measurement on site $Z_i$ in the computational basis can have one of three effects.
    \begin{enumerate}
        \item If $Z_i$ is already a stabilizer of $\ket{\psi}$, there is no change whatsoever.
        \item If $Z_i$ commutes with each of the stabilizers of $\ket{\psi}$ but is not itself a stabilizer of $\ket{\psi}$, it is added to the list of generators for the stabilizer group of the state, with a $\pm$ phase depending on the measurement outcome. This does not change the number of generators $\abs{S_{AB}}$ that spans $A$ and $B$.
        \item If $Z_i$ anticommutes with one of the generators of the stabilizer group for $\ket{\psi}$, the general update procedure is to pick an arbitrary anticommuting stabilizer $g_j$ and update all the other stabilizers by multiplying them with $g_j$, then replace $g_j$ with $Z_i$. To show that $\abs{S_{AB}}$ does not increase, assume we have written our generators in a form $S = S_A \cup S_B \cup S_{AB}$. There are two possibilities. First, if there is a $g_j \in S_A$ that anticommutes with $Z_i$, we will need to multiply some generators in $S_A$ and $S_{AB}$ by $g_j$, but none in $S_B$ (since all of the generators in $S_B$ commute with $Z_i$). Multiplying a generator in $S_A$ by $g_j$ does not change its locality, and neither does replacing $g_j$ by $Z_i$, so $\abs{S_{AB}}$ cannot increase. In the second case, all the $g_j \in S_A$ commute with $Z_i$, so there must be some $g_j \in S_{AB}$ that anticommutes with $Z_i$. The update step simply comprises multiplying some generators in $S_{AB}$ by $g_j$, and replacing $g_j$ by $Z_i$ -- again, since the other generating sets $S_A$ and $S_B$ are untouched, this does not increase $\abs{S_{AB}}$.
    \end{enumerate}
\end{enumerate}
\end{proof}

\section{Entanglement theory through stabilizer protocols}

In this section, we explore entanglement manipulation and detection tasks, examining the significance of stabilizer nullity in these tasks.
 We focus on three core tasks: entanglement distillation, entanglement dilution, and the witnessing of multipartite entanglement, all through the unifying framework of the stabilizer entanglement. Starting with entanglement distillation, we investigate the ability to purify entangled states from noisy states with stabilizer nullity $\nu$, crucial for quantum communication and computation. Conversely, entanglement dilution addresses the synthesis of entangled states from a set of more basic quantum resources, including shared entangled pairs and local non-Clifford resources. This discussion is framed within the constraints imposed by the nullity $\nu$, highlighting the interplay between entanglement, magic, and the operational limits of stabilizer LOCC protocols. Finally, we move away from entanglement manipulation to the mere task of verifying the presence of entanglement, known as entanglement witnessing. We develop a framework for witnessing genuine multipartite entanglement for $\nu$-compressible states, which we use to prove the robustness of entanglement of these states against noise and imperfections. We note that each of the algorithms we present below will assume a priori knowledge of the stabilizer generators $S$ of the state.  This assumption is not necessary: as we discuss in \cref{sec:unknownstates}, $S$ is readily available in almost any setting we care to consider.

\subsection{Entanglement distillation and dilution}
\parhead{Entanglement distillation.} Maximally entangled states are a powerful resource in quantum information theory, with many protocols leveraging the properties of such a state. However, noise makes it harder to synthesize a maximally entangled state without errors. Entanglement distillation is based on the idea that, given multiple copies of a mixed state, it is possible to convert the entanglement in the noisy state into a certain number of pure, useful (i.e., maximally entangled) states~\cite{horodecki_quantum_2009,khatri_principles_2020}. The specific entanglement distillation task we focus on in this work is as follows. Two parties, $A$ and $B$, are given a noisy state $\rho_{AB}$, and their goal is to use some LOCC protocol to distill as many two-qubit maximally entangled states (i.e., Bell pairs, commonly known as `ebits') as possible from $\rho_{AB}$. Given the theme of this work, we will further restrict our attention to stabilizer LOCC protocols, for which the framework of $\nu$-compressible states gives us a very rich set of tools to construct good entanglement distillation protocols. In this section, we will describe a single-shot, deterministic, and error-free entanglement distillation protocol for $\nu$-compressible states. Importantly, our algorithm requires only one copy of the input state, and does not rely on using many copies of a state and projecting into some typical set; this is motivated by practical concerns, as it has been shown that one may need up to $\sim 10^6$ copies \cite{wilde_quantum_2013} to achieve the necessary concentration for the asymptotically optimal rates found in Ref.~\cite{bennett1996concentrating}.

To start, we state a no-go result that circumscribes the optimal performance of any stabilizer LOCC protocols for entanglement distillation.
\begin{lemma}[Distillability bounds]\label{cor:bounds}
No stabilizer LOCC can distill more than $\mathcal{E}(\psi;A|B)$ Bell pairs from a state $\psi$.
\end{lemma}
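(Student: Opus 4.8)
The plan is to combine the monotonicity of $\mathcal{E}$ under stabilizer LOCC (\cref{thm:monotone}) with a direct evaluation of $\mathcal{E}$ on the target state consisting of $M$ Bell pairs. The guiding observation is that \emph{distilling $M$ Bell pairs is itself a stabilizer LOCC transformation}: by definition it takes $\psi$ to an output state carrying $\ket{\phi_+}^{\otimes M}$ on $2M$ designated qubits (one member of each pair in $A$, the other in $B$), possibly tensored with residual qubits. Since adding ancillae in $\ket{0}$ (operation (ii)) leaves $\abs{S_{AB}}$ unchanged, as shown in the proof of \cref{thm:monotone}, we may take $\mathcal{E}(\psi;A|B)$ as the initial value of the monotone and run the distillation protocol from there.

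First I would reduce to the clean target state $\ket{\phi_+}^{\otimes M}$. Given any stabilizer LOCC protocol that produces a state containing $M$ Bell pairs, I append a partial trace over every qubit \emph{except} the $2M$ qubits carrying the Bell pairs. Discarding qubits is the allowed operation (iii), so the composite procedure is still a valid stabilizer LOCC protocol, and its output is exactly $\ket{\phi_+}^{\otimes M}$. Invoking \cref{thm:monotone}, which states that $\mathcal{E}$ is non-increasing along any stabilizer LOCC protocol, gives immediately
\begin{equation}
    \mathcal{E}(\ket{\phi_+}^{\otimes M}; A|B) \leq \mathcal{E}(\psi; A|B).
\end{equation}

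The remaining step is to compute $\mathcal{E}(\ket{\phi_+}^{\otimes M}; A|B)$ exactly. The stabilizer group of $M$ Bell pairs is generated by the $2M$ operators $\qty{X_{A_i} X_{B_i},\, Z_{A_i} Z_{B_i}}_{i=1}^{M}$, so $\abs{S} = 2M$. I would then argue that no nontrivial element of this group is local: restricted to qubit $B_i$, a general product $\prod_i (X_{A_i} X_{B_i})^{a_i} (Z_{A_i} Z_{B_i})^{b_i}$ acts (up to phase) as $X^{a_i} Z^{b_i}$, which is the identity if and only if $a_i = b_i = 0$. Hence the only element acting trivially on $B$ (and likewise on $A$) is the identity, so $S_A = S_B = \emptyset$ and $\abs{S_{AB}} = \abs{S} = 2M$. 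By \cref{def:stab-ent}, $\mathcal{E}(\ket{\phi_+}^{\otimes M}; A|B) = \abs{S_{AB}}/2 = M$, and combining with the inequality above yields $M \leq \mathcal{E}(\psi; A|B)$.

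The main care point, rather than a deep obstacle, is bookkeeping: one must ensure the monotonicity theorem is applied to the \emph{entire} composite protocol (distillation followed by the partial trace down to the Bell-pair register, with any ancillae folded in at the start), and one must verify that every Bell-pair stabilizer and their products are genuinely nonlocal so that $\abs{S_A} = \abs{S_B} = 0$ — this locality check is precisely what pins $\mathcal{E}$ at the value $M$ and makes the bound tight.
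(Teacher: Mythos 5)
Your proof is correct and takes essentially the same route as the paper's: both rest on the monotonicity of $\mathcal{E}$ under stabilizer LOCC (\cref{thm:monotone}) together with the fact that $M$ Bell pairs contribute $2M$ nonlocal stabilizer generators, so that $\mathcal{E}$ of the output is at least $M$. The only cosmetic difference is that you trace down to the clean state $\ket{\phi_+}^{\otimes M}$ and evaluate $\mathcal{E}=M$ exactly, whereas the paper argues directly by contradiction that the output state already has $\abs{S_{AB}'}\geq 2M$.
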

\begin{proof}
Assume there were some stabilizer protocol $\mathcal{S}$ that could distill $M >\mathcal{E}(\psi;A|B)$ Bell pairs from $\psi$. Then, $\psi' \coloneqq \mathcal{S}(\psi)$ is a state with at least $2M$ stabilizers in $S_{AB}'$, as each Bell pair contributes $2$ generators $XX$ and $ZZ$ spanning the bipartition $A|B$. However, then $\mathcal{E}(\psi';A|B) \geq M > \mathcal{E}(\psi;A|B)$, which contradicts the above theorem.
\end{proof}
Remarkably, in \cref{thm:distill}, we show that we can almost saturate this upper bound. First, we need the following lemma.

\begin{lemma}\label{lem:bell}
If a state $\rho$ is stabilized by $X_i X_j$ and $Z_i Z_j$, it must be of the form $\rho = \ketbra{\phi_+}_{ij} \otimes \sigma$, where $\ket{\phi_+} = \frac{1}{\sqrt{2}} \qty(\ket{00} + \ket{11})$ and $\sigma$ is some arbitrary mixed state.
\end{lemma}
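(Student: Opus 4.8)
The plan is to exploit the fact that $X_iX_j$ and $Z_iZ_j$ commute, so that the two stabilizer projectors $\Pi_X \coloneqq \frac{\id + X_iX_j}{2}$ and $\Pi_Z \coloneqq \frac{\id + Z_iZ_j}{2}$ may be composed freely, and their product pins the $ij$ qubits to the Bell state $\ket{\phi_+}$. First I would verify the commutation $[X_iX_j, Z_iZ_j] = 0$: the two anticommutations picked up on sites $i$ and $j$ cancel, so $\Pi_X$ and $\Pi_Z$ are commuting projectors and $\Pi_X\Pi_Z$ is itself the projector onto the intersection of their ranges. Restricting to the two-qubit factor on $i,j$, a direct check on the Bell basis shows that $\ket{\phi_+}$ is the unique joint $+1$ eigenvector of $X_iX_j$ and $Z_iZ_j$, since each of $\ket{\phi_-},\ket{\psi_+},\ket{\psi_-}$ acquires a $-1$ from one of the two operators. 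Hence $\Pi_X\Pi_Z = \ketbra{\phi_+}_{ij}\otimes \id_{\overline{ij}}$, where $\overline{ij}$ denotes all qubits other than $i,j$.

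Next I would convert the mixed-state stabilizer hypothesis into a support statement. The definition of $\rho$ being stabilized by a Pauli $P$ is $\Pi_P \rho \Pi_P = \rho$ with $\Pi_P = \frac{\id+P}{2}$. Taking the trace and using $\Tr\rho = 1$ gives $\Tr((\id - \Pi_P)\rho) = 0$; since $\id - \Pi_P \geq 0$ and $\rho \geq 0$, expanding $\rho$ in its eigenbasis forces every eigenvector of nonzero weight to lie in the range of $\Pi_P$, so in fact $\Pi_P \rho = \rho \Pi_P = \rho$. Applying this to both $P = X_iX_j$ and $P = Z_iZ_j$ and composing the resulting identities yields $\rho = \Pi_X \Pi_Z \rho = (\ketbra{\phi_+}_{ij}\otimes\id)\rho$, and symmetrically $\rho\,(\ketbra{\phi_+}_{ij}\otimes\id)=\rho$. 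Thus $\rho = \Pi\rho\Pi$ for the rank-one (on the $ij$ factor) projector $\Pi = \ketbra{\phi_+}_{ij}\otimes\id_{\overline{ij}}$.

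Finally, any operator of the form $\Pi\rho\Pi$ with $\Pi$ projecting the $ij$ factor onto a single state factorizes: setting $\sigma \coloneqq \bra{\phi_+}\rho\ket{\phi_+}$ (the partial matrix element on the $ij$ subsystem, an operator on the complement), one has $\Pi\rho\Pi = \ketbra{\phi_+}_{ij}\otimes\sigma$. Because $\rho$ is a density matrix, $\sigma$ inherits positivity and unit trace, hence is a legitimate mixed state, giving the claimed form $\rho = \ketbra{\phi_+}_{ij}\otimes\sigma$.

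The only delicate step — and the one I would be most careful with — is the passage from the conjugation-based condition $\Pi_P\rho\Pi_P = \rho$ to the one-sided identity $\Pi_P\rho=\rho$; this relies on the positivity of $\rho$ and is precisely what lets the two commuting projectors be merged into the single rank-one projector $\ketbra{\phi_+}_{ij}$. Everything else is the routine Bell-basis eigenvalue computation and the elementary factorization of a projected operator.
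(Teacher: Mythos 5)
Your proof is correct, but it takes a somewhat different route from the paper's. The paper works at the level of the reduced density matrix: from the stabilizer hypothesis it extracts $\tr(\rho_{ij}X_iX_j)=\tr(\rho_{ij}Z_iZ_j)=1$, observes that the unique density matrix on two qubits with both expectation values equal to $1$ is $\ketbra{\phi_+}$ (the operator $X_iX_j+Z_iZ_j$ attains its maximal eigenvalue $2$ only on $\ket{\phi_+}$), and then invokes the standard fact that a pure marginal cannot be entangled with its complement, forcing the tensor-product form. You instead stay entirely at the level of projectors: you convert the mixed-state stabilizer condition $\Pi_P\rho\Pi_P=\rho$ into the one-sided identity $\Pi_P\rho=\rho$ via positivity of $\rho$ and of $\id-\Pi_P$, compose the two commuting projectors into the rank-one (on the $ij$ factor) projector $\ketbra{\phi_+}_{ij}\otimes\id$, and read off the factorization $\Pi\rho\Pi=\ketbra{\phi_+}_{ij}\otimes\sigma$ directly. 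Your version is slightly longer but has the virtue of engaging explicitly with the paper's conjugation-based definition of stabilization for mixed states (the passage from $\Pi_P\rho\Pi_P=\rho$ to $\Pi_P\rho=\rho$ is exactly the point the paper glosses over when it asserts $\tr(\rho_{ij}P)=1$ "by assumption"), and it avoids appealing to the purity-implies-product fact by producing $\sigma=\bra{\phi_+}\rho\ket{\phi_+}$ constructively; the paper's version is shorter and leans on two standard facts that it leaves to the reader.
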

\begin{proof}
Let $\rho_A$ be the reduced density matrix of $\rho$ on sites $i$ and $j$. By assumption $\tr(\rho_A X_i X_j) = \tr(\rho_A Z_i Z_j) = 1$. There is only one density matrix that satisfies this, which is $\rho_A = \ketbra{\phi_+}$. Since $\rho_A$ is pure, it must be unentangled with the rest of the system, which implies $\rho = \rho_A \otimes \sigma = \ketbra{\phi_+} \otimes \sigma$ for some mixed state $\sigma$.
\end{proof}
\begin{theorem}[Efficient bipartite entanglement distillation]\label{thm:distill}
For any bipartite (pure or mixed) state $\rho$, there exists a stabilizer LOCC protocol $\mathcal{S}(\rho) \coloneqq (U_A \otimes U_B) \rho (U_A \otimes U_B)^\dagger$ comprised solely of two Clifford unitaries $U_A$ and $U_B$ such that
\begin{equation}
    \mathcal{S}(\rho) = \ketbra{\phi_+}^{\otimes M}_{A' \cup B'} \otimes \sigma_{E};\,\, M = \left\lfloor \mathcal{E}(\rho;A|B) - \nu/2 -S_0(\rho)/2 \right\rfloor, \label{eq:distill}
\end{equation}
where $A'$ and $B'$ are size-$M$ subsystems of $A$ and $B$, respectively, and $E\coloneqq \overline{A' \cup B'}$ is the rest of the system. Moreover, a depth $O(n)$ circuit representation of $U_A$ and $U_B$ can be found classically in $O(n^3)$ time.
\end{theorem}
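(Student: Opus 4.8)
The plan is to exploit the structural fact behind \cref{thm:stab-approx}, namely that for a state of low nullity almost all entanglement across $A|B$ is carried by the nonlocal stabilizer subgroup $G_{AB}$. I would therefore use \emph{purely local} Clifford unitaries to rotate a maximal independent set of nonlocal stabilizers into the canonical Bell-pair form $X_{a_i}X_{b_i}$, $Z_{a_i}Z_{b_i}$ on disjoint pairs $(a_i,b_i)$ with $a_i\in A$, $b_i\in B$, and then invoke \cref{lem:bell} repeatedly: a state stabilized by $\{X_{a_i}X_{b_i},Z_{a_i}Z_{b_i}\}_{i=1}^M$ must factorize as $\ketbra{\phi_+}^{\otimes M}\otimes\sigma_E$. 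The whole problem thus reduces to (i) deciding how many such pairs can be produced and (ii) exhibiting the local Cliffords that produce them, with \cref{cor:bounds} already supplying the matching upper bound $M\le\mathcal{E}(\rho;A|B)$.

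First I would run Gaussian elimination on the tableau to obtain a generating set $S=S_A\cup S_B\cup S_{AB}$, so that $\abs{S_{AB}}=2\mathcal{E}(\rho;A|B)$ generators act nontrivially on both sides. Writing each nonlocal generator as $g=P\otimes Q$ with $A$-restriction $P$ and $B$-restriction $Q$, I note the $A$-restrictions $P_1,\dots,P_{\abs{S_{AB}}}$ are linearly independent modulo phase in $\mathbb{F}_2^{2n_A}$, since any trivial product would place a nonlocal generator in $G_A\cdot G_B$, contradicting independence in $G/(G_A G_B)$. Because the full generators commute, their commutation indicators satisfy $\omega_A(P_i,P_j)=\omega_B(Q_i,Q_j)$ for \emph{all} $i,j$, so the symplectic Gram matrices of the $A$-parts and the $B$-parts coincide. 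A symplectic Gram--Schmidt (Witt decomposition) of $\operatorname{span}\{P_i\}$ then splits the generators into $r$ hyperbolic (anticommuting) pairs together with an isotropic remainder of dimension $s$, with $\abs{S_{AB}}=2r+s$; by the coincidence of Gram matrices the $B$-parts inherit exactly the same pairing. For each hyperbolic pair I build $U_A$ mapping its two $A$-parts to $X_{a_i},Z_{a_i}$ on a fresh qubit $a_i$, and $U_B$ mapping the matching $B$-parts (which anticommute) to $X_{b_i},Z_{b_i}$, fixing residual signs with local Paulis absorbed into $U_A,U_B$; the existence of such Cliffords is the standard transitivity of the Clifford group on symplectic bases. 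The transformed state is then stabilized by $\{X_{a_i}X_{b_i},Z_{a_i}Z_{b_i}\}_{i=1}^r$, and \cref{lem:bell} yields $M=r$ Bell pairs with the isotropic remainder absorbed into $\sigma_E$.

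The crux is lower bounding $r=\mathcal{E}(\rho;A|B)-s/2$, i.e.\ controlling the isotropic defect $s$; I expect this degeneracy count to be the main obstacle, as it is precisely where magic and mixedness obstruct distillation. A radical element is a nonlocal stabilizer $g=P\otimes Q$ whose $A$-part commutes with every element of $\operatorname{im}\pi_A$, so that $P\otimes\id$ (and $\id\otimes Q$) commutes with all of $G$, i.e.\ lies in the centralizer $G^{\perp}$. The assignment $g\mapsto[P\otimes\id]$ is injective from the radical into $G^{\perp}/G$ — a coincidence of classes would force $g_i g_j\in G_A\cdot G_B$ — and its image is isotropic there, since radical elements commute pairwise on $A$. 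Since $G$ is an isotropic subspace of dimension $\abs{S}=n-\nu$, the quotient $G^{\perp}/G$ is a nondegenerate symplectic space whose dimension is governed by $\nu$ (and, after purifying a rank-$2^{S_0}$ mixed input, by an additional $S_0$), so any isotropic subspace has dimension at most $\nu+S_0$. Hence $s\le \nu+S_0$ and $M=r\ge \lfloor \mathcal{E}(\rho;A|B)-\nu/2-S_0/2\rfloor$, as claimed.

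For the efficiency statement, every ingredient — the tableau reduction producing $S_A,S_B,S_{AB}$, the symplectic Gram--Schmidt, and the synthesis of $U_A,U_B$ sending the identified Paulis to single-qubit $X$'s and $Z$'s — is linear algebra over $\mathbb{F}_2$ executable in $O(n^3)$ time, and the resulting Clifford operators admit the well-known depth-$O(n)$ canonical circuit realizations, giving the stated complexity.
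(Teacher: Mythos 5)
Your proposal is correct and follows essentially the same route as the paper: bring the generators into the canonical form $S_A\cup S_B\cup S_{AB}$, pair up the nonlocal generators whose local restrictions anticommute, map each pair to $X_{A_i}X_{B_i},Z_{A_i}Z_{B_i}$ with local Cliffords, and invoke \cref{lem:bell}. The only difference is in packaging: you bound the number of unpaired generators by injecting the radical into an isotropic subspace of the symplectic quotient $G^{\perp}/G$ of dimension $2(\nu+S_0)$, whereas the paper reaches the identical bound $z\le \nu+\lceil S_0(\rho)\rceil$ by adapting the counting inequalities of Fattal et al.; these are the same dimension count in different clothing.
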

\begin{proof}
For any bipartition $A | B$, the generators $G$ for the stabilizer group of any state can always be brought into a `canonical form' $G =G_A \cup G_B \cup G_{AB}$. Ref. \cite{fattal_entanglement_2004} showed that for a complete set of stabilizers, $G_{AB}$ could be written as exactly $M=\frac{\abs{G_{AB}}}{2}$ pairs $(g_k, \bar{g}_k)$ such that the projections of $g_k$ and $\bar{g}_k$ onto $A$ (or $B$) anticommute, but commute with the projections of all other generators. As seen in Lemma 3 of Ref. \cite{fattal_entanglement_2004}, the fact that there are no `unpaired' stabilizers in $S_{AB}$ for a stabilizer state follows from the fact there are exactly $n$ generators for a stabilizer state. Our case is slightly different: for a pure state $\psi$ with nullity $\nu$, there can be $n-\nu$ generators. For a mixed state, the purified state has at least $n+n^*-\nu$ generators, where the purifying system has at most $n^* = \lceil S_0(\rho) \rceil$ qubits (by definition of a mixed state). By row reduction on the tableau representation of the stabilizers, the generating set can always be written in a form where the first $2n^*$ generators act nontrivially on both the purifying system and the original system, whereas the last $n-\nu-n^*$ act trivially on the purifying system. Therefore, there are at least $n-\nu-\lceil S_0(\rho) \rceil$ generators for the stabilizer group of $\rho$. Adapting the analysis of Ref.~\cite{fattal_entanglement_2004}, we find that the number of unpaired generators $z$ obeys
\begin{subequations}\label{eq:unpair-bound}
    \begin{gather}
        \abs{S_A}+\abs{S_B}+2M+2z \leq n \\
        \abs{S_A}+\abs{S_B}+2M+z \geq n-\nu-\lceil S_0(\rho) \rceil.
    \end{gather}
\end{subequations}
This clearly gives $z \leq \nu + \lceil S_0(\rho) \rceil$. Therefore, there are at least $M \geq \mathcal{E}(\rho;A|B) - \nu/2 - \lceil S_0(\rho) \rceil/2$ `paired stabilizers' in $S_{AB}$. As we show below, these paired stabilizers are precisely those which are useful for entanglement distillation.

We label the paired stabilizers in $S_{AB}$ as $(g[i], \bar{g}[i])$ for $i=1,\ldots,M$. We will denote the restriction of $g[i]$ to subsystem $A$ as $g_A[i]$, with analogous definitions for $g_B[i],\bar{g}_A[i],\bar{g}_B[i]$. It suffices to design our Cliffords $U_A$, $U_B$ as follows. We require
\begin{subequations}\label{eq:distill-cond}
\begin{equation}
U_A g_A[i] U_A^\dagger = X_{A_i}\qc U_B g_B[i] U_B^\dagger = X_{B_i},    
\end{equation}
and for the other stabilizer in each pair, we require    
\begin{equation}
U_A \bar{g}_A[i] U_A^\dagger = Z_{A_i}\qc U_B \bar{g}_B[i] U_B^\dagger = Z_{B_i},
\end{equation}
\end{subequations}
where $A_i$ and $B_i$ are the $i$th sites of $A$ and $B$, respectively. This can always be done: the resulting stabilizers have the same commutation and anti-commutation relations as the original stabilizers. That is, $g_A[i]$ anticommutes with $\bar{g}_A[i]$, but commutes with all other stabilizers. For convenience, let us abbreviate $U = U_A \otimes U_B$. Now, to see, for instance, that $X_{A_i} X_{B_i}$ stabilizes $\rho' \coloneqq U \rho U^\dagger$, we observe that $X_{A_i} X_{B_i} = U g[i] U^\dagger$. Therefore, using the fact that $g[i] \Pi = \Pi$:
\begin{equation}
    X_{A_i} X_{B_i} \rho'  = (U g[i] U^\dagger) (U \Pi H U^\dagger) = U (g[i] \Pi H) U^\dagger = U \Pi H U^\dagger = \rho'.
\end{equation}
A similar proof holds for $Z_{A_i} Z_{B_i}$. Finally, applying \cref{lem:bell}, we see that we have distilled a product of $M$ Bell pairs on sites $(A_i \cup B_i)$, which are all unentangled with the rest of the system.

Finally, to see that these two unitaries $U_A$ and $U_B$ can be designed in $O(n^3)$ time, we simply use the fact that the constraints \cref{eq:distill-cond} can be satisfied using linear-algebraic manipulations (Gaussian elimination) using the tableau representation of the stabilizer generators. The efficiency of the two unitaries $U_A$ and $U_B$ follows from the fact that they are Clifford unitaries, which can all be implemented with $O(n^2 /\log n)$ gates and depth $O(n)$ \cite{aaronson_improved_2004}.
\end{proof}

This demonstrates how the \emph{bipartite} entanglement of a state can be distilled into Bell pairs. We can consider an even more ambitious task: distilling \emph{multipartite} entanglement. That is, if we have a state that is split across $k > 2$ parties $A_1,A_2,\ldots,A_k$, can we distill some target $k$-partite entangled state using stabilizer LOCC? To begin answering this question, we first need to establish a meaningful measure of multipartite entanglement. In Ref. \cite{fattal_entanglement_2004}, a measure of multipartite entanglement for stabilizer states was proposed. For a $k$-partition $\qty{A_1, A_2, \ldots, A_k}$, the authors first defined a local subgroup $G_{\text{loc}}$:
\begin{equation}
    G_{\text{loc}} \coloneqq \bigcup_{i=1}^k G_{\bar{A}_i},
\end{equation}
which is the union of all stabilizers that act trivially on at least one of the partitions $A_i$. As this is an Abelian subgroup of $\mathbb{P}_n$, it has a generating set $S_{\text{loc}}$ of size $|S_{\text{loc}}|=\log_2 \abs{G_{\text{loc}}}$. The multipartite entanglement measure proposed by Ref. \cite{fattal_entanglement_2004} was
\begin{equation}
    \mathcal{E}(\psi; A_1|\ldots|A_k) = \abs{S} - \abs{S_{\text{loc}}}.\label{eq:multi}
\end{equation}
It was later shown that this entanglement measure has a convenient operational interpretation: it is exactly equal to the number of GHZ states distillable from the stabilizer state. Remarkably, we find that we can naively extend \cref{eq:multi} to any state and achieve an almost identical operational interpretation. In this setting, we will assume that the input state has no stabilizers which act entirely locally on any given partition $A_i$ -- this is called the assumption of `full local rank' in \cite{bravyi_ghz_2006}. This is an easy assumption to satisfy in practice; if our state has any such local stabilizers, we can always preprocess our state by applying some local Clifford that maps these local stabilizers to a single qubit $Z$. The state then separates into a tensor product $\ket{0} \otimes \ket{\psi'}$, so we trace this qubit out, as it cannot contribute to any form of entanglement distillation anyways.

\begin{theorem}[Efficient multipartite entanglement distillation]\label{thm:multipartite}
Consider a $k$-partition $\qty{A_1, \ldots, A_k}$ and a pure state $\ket{\psi}$ with stabilizers that act purely locally on any particular $A_i$. Then, there are $k$ local Clifford unitaries $\qty{U_{A_i} \mid i=1,\ldots,k}$ such that $\ket{\psi'} \coloneqq \bigotimes_{i=1}^k U_{A_i} \ket{\psi}$ is a state with at least
\begin{equation}
    p\coloneqq \mathcal{E}(\psi; A_1|\ldots|A_k)-\nu
\end{equation}
copies of the $k$-partite GHZ states shared across all $k$ parties. Furthermore, if $k=3$ (the special case of a tripartition), $\ket{\psi'}$ can also be guaranteed to contain at least
\begin{itemize}
    \item $(\abs{S}-\abs{S_A} - p)/2 - \nu$ shared Bell pairs between $B$ and $C$,
    \item $(\abs{S}-\abs{S_B} - p)/2 - \nu$ shared Bell pairs between $A$ and $C$, and
    \item $(\abs{S}-\abs{S_C} - p)/2 - \nu$ shared Bell pairs between $A$ and $B$.
\end{itemize}
\end{theorem}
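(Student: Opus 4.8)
The plan is to reduce the problem to the GHZ/Bell extraction theory for stabilizer states and then lift the conclusion to the $\nu$-compressible state $\ket{\psi}$ exactly as in the proof of \cref{thm:distill}. The starting observation is that the full stabilizer group $G$ of $\ket{\psi}$ (with $\abs{S}=n-\nu$ generators) is an Abelian subgroup of $\mathbb{P}_n$, so the canonical-form theory of Refs.~\cite{fattal_entanglement_2004,bravyi_ghz_2006} applies directly to $G$. That theory uses \emph{local} Clifford unitaries $U_{A_i}$ on each party to organize a generating set for $G$ into blocks: $k$-partite GHZ blocks (each generated by an $X^{\otimes k}$-type operator together with $k-1$ two-local $Z$-type ``difference'' operators), pairwise Bell blocks (generated by $XX$ and $ZZ$ across two parties), purely local generators, and --- because $G$ is rank-deficient relative to a full $n$-generator stabilizer state --- a bounded number of ``leftover'' generators that fail to complete a block.

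First I would establish the multipartite analogue of \cref{lem:bell}: if a state is stabilized by $X_1 \cdots X_k$ together with $Z_i Z_{i+1}$ for $i=1,\ldots,k-1$ (the $k$ stabilizers of $\ket{\mathrm{GHZ}_k}$ on one qubit per party), then its reduced density matrix on those $k$ qubits is exactly $\ketbra{\mathrm{GHZ}_k}$ and factors off as $\ketbra{\mathrm{GHZ}_k} \otimes \sigma$; the proof is identical in spirit to \cref{lem:bell}, since those $k$ commuting Paulis determine a unique pure $k$-qubit state, which must be unentangled with the remainder. This lemma is what passes the structure from the stabilizer group to the \emph{actual} state: each block generator $g$ lies in $G$, so $g\Pi = \Pi$, and writing $\ketbra{\psi} = H\Pi$ (\cref{thm:pure-struct}) gives $(U g U^\dagger)(U \ketbra{\psi} U^\dagger) = U \ketbra{\psi} U^\dagger$ with $U = \bigotimes_i U_{A_i}$, exactly as in \cref{thm:distill}. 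Thus each \emph{completed} GHZ (resp.\ Bell) block of the transformed stabilizer group forces a genuine GHZ (resp.\ Bell) state to appear in $\ket{\psi'}$, disentangled from everything else.

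The counting then mirrors the bipartite argument. For a pure stabilizer state the canonical form is block-complete, and the number of GHZ blocks equals the multipartite measure $\mathcal{E}(\psi; A_1|\cdots|A_k) = \abs{S} - \abs{S_{\mathrm{loc}}}$: each GHZ block contributes exactly one genuinely global generator ($X^{\otimes k}$) plus $k-1$ generators to $S_{\mathrm{loc}}$, while each Bell or local block contributes only to $S_{\mathrm{loc}}$, so the net contribution to $\mathcal{E}$ is $1$ per GHZ block and $0$ otherwise. The $\nu$ rank deficiency of $G$ can spoil at most $\nu$ of these blocks --- each absent generator obstructs at most one block from completing --- which yields the guaranteed count $p \ge \mathcal{E}(\psi; A_1|\cdots|A_k) - \nu$, in direct analogy with the bound $z \le \nu + \lceil S_0 \rceil$ on unpaired generators in \cref{thm:distill}. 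For $k=3$ I would invoke the stronger statement of Ref.~\cite{bravyi_ghz_2006} that the tripartite stabilizer canonical form consists \emph{solely} of GHZ and pairwise-Bell blocks (no higher-order tripartite structures exist). Counting the generators supported on each two-party subsystem then gives the Bell-pair yields: the generators trivial on $A$ split into the $Z$-difference legs of the $p$ GHZ blocks and the $XX, ZZ$ pairs of the $BC$-Bell blocks, so that after identifying $\abs{S}-\abs{S_A}$ with this count the number of $BC$ Bell pairs is $(\abs{S}-\abs{S_A}-p)/2$, again with a $\nu$ correction absorbing the rank deficiency, and symmetrically for the other two sectors.

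The main obstacle is the rank-deficiency accounting. Unlike the bipartite case, where ``unpaired'' generators were bounded by a one-line inequality, here a missing generator can interfere with the block structure in several ways (collapsing a GHZ block to a Bell block, leaving a dangling leg, or merging local and nonlocal generators), so the claim that each of the $\nu$ absent generators costs at most one GHZ block --- and, for $k=3$, at most one Bell pair per sector --- requires a careful induction on the canonical-form reduction rather than a direct count. A secondary technical point, which follows as in \cref{thm:distill} from Gaussian elimination over $\mathbb{F}_2$ on the tableau, is confirming that the local Cliffords realizing the canonical form can be computed efficiently.
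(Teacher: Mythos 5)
Your overall mechanism is the right one and matches the paper's in spirit: local Cliffords reorganize the stabilizer group into GHZ/Bell blocks, a multipartite analogue of \cref{lem:bell} promotes each \emph{completed} block of genuine stabilizers to an actual disentangled GHZ or Bell factor of $\ket{\psi'}$ (the paper uses exactly this step, implicitly, via $g\Pi = \Pi$), and the loss relative to $\mathcal{E}(\psi;A_1|\cdots|A_k)$ is charged to the $\nu$ missing generators. However, the step you yourself flag as the ``main obstacle'' is in fact a genuine gap, and it sits at the crux of the theorem. The canonical block decomposition of Refs.~\cite{fattal_entanglement_2004,bravyi_ghz_2006} is derived for \emph{full-rank} (maximal Abelian) stabilizer groups; you apply it directly to the rank-deficient group $G$ and then assert that each of the $\nu$ absent generators ``obstructs at most one block,'' while conceding that a missing generator can collapse a GHZ block to a Bell block, leave dangling legs, or merge local and nonlocal generators, and that the accounting ``requires a careful induction.'' Without that induction, neither the existence of the block structure for $G$ nor the bound $p \geq \mathcal{E} - \nu$ (nor the tripartite Bell counts) is established.

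The paper avoids this entirely with a \emph{stabilizer completion}: iteratively adjoin commuting Paulis to $S$ until one obtains a full set $T$ of $n$ generators with $S \subseteq T$, defining a pure stabilizer state $\ket{\sigma}$. The canonical-form and GHZ-extraction theorems of Ref.~\cite{bravyi_ghz_2006} then apply to $\ket{\sigma}$ verbatim, yielding local Cliffords that distill $\abs{T}-\abs{T_{\text{loc}}}$ GHZ states from $\ket{\sigma}$; since at most $n-\abs{S}=\nu$ generators were added, $\abs{T_{\text{loc}}} \leq \abs{S_{\text{loc}}} + \nu$, so $\mathcal{E}(\sigma;\cdot) \geq \mathcal{E}(\psi;\cdot)$. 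Running these \emph{same} unitaries on $\ket{\psi}$, the only blocks that can fail are those containing a ``phantom'' generator in $T \setminus G$; since the blocks partition the generators, the $\nu$ phantoms spoil at most $\nu$ blocks, and the surviving blocks consist entirely of genuine stabilizers of $\ket{\psi}$, so their GHZ factors appear exactly as in your \cref{lem:bell}-type argument. This reduction to the full-rank case is the missing idea that makes your ``one block per missing generator'' count trivial rather than something requiring a bespoke induction; I would recommend restructuring your proof around it.
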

\begin{proof}
For a given state $\ket{\psi}$, we will construct a pure (exact) stabilizer state $\ket{\sigma}$ that we term the stabilizer completion of $\ket{\psi}$. We will then construct a set of $k$ local Clifford unitaries $\qty{U_{A_i} \mid i=1,\ldots,k}$ that are designed to distill multipartite entanglement from $\ket{\sigma}$. However, we will finally show that nevertheless, these unitaries work very well on $\ket{\psi}$ as well.

We first define the stabilizer completion. We take the stabilizer generators $S$ of the state and find any Pauli $P$ (which is not a stabilizer of $\ket{\psi}$) which simultaneously commutes with each of the generators, then add this Pauli to the stabilizer generators. This can be done efficiently using the tableau formalism: the kernel of $\mathcal{T}_S$ is equal to exactly all those Paulis which commute with all of the generators in $S$; from there, it is easy identify a Pauli which is not in the rowspace of $\mathcal{T}_S$ (i.e., is independent of all the existing generators). This process of expanding the stabilizer generators $S$ is repeated iteratively until we have constructed a full set of stabilizer generators $T$ of size $n$ (we note here that $S \subseteq T$ by construction). The new set of generators $T$ corresponds to some pure stabilizer state $\ket{\sigma}$. 

For pure stabilizer states, $\mathcal{E}(\sigma; A_1|\ldots|A_k) = \abs{T} - \abs{T_{\text{loc}}}$ GHZ states can be distilled from $\ket{\sigma}$ using some local Clifford unitaries $\qty{U_{A_1},\ldots,U_{A_k}}$~\cite[Theorem 3]{bravyi_ghz_2006}. In the process of building $T$ up from $S$, we added at most $n-\abs{S}$ stabilizers which act trivially on at least one of the partitions $A_i$. That is, $\abs{T_{\text{loc}}} \leq \abs{S_{\text{loc}}} + (n-\abs{S})$. Also, since $\ket{\sigma}$ is a pure stabilizer state, $\abs{T}=n$. Therefore, $\mathcal{E}(\sigma;A_1|\ldots|A_k) \geq n-\qty(\abs{S_{\text{loc}}} + (n-\abs{S}))=\abs{S} - \abs{S_{\text{loc}}}=\mathcal{E}(\psi; A_1|\ldots|A_k)$. Now, we ask what happens when we naively apply the local Clifford unitaries on the original state: $\ket{\psi'} \coloneqq \bigotimes_{i=1}^k U_{A_i} \ket{\psi}$? These unitaries acted on the stabilizer completion $\ket{\sigma}$ by mapping $k$-tuples of stabilizers $\qty{g_1, g_2, \ldots, g_k} \subseteq T$ onto the $k$ stabilizers for a $k$-partite GHZ state. However, we note that some of these stabilizers may be `phantom' stabilizers, as may not necessarily have been present in the original set of stabilizers $S$ (i.e., they were added to $T$ in the stabilizer completion process). In this case, the desired GHZ state would not have been created, and some garbage may be output. However, there are at most $\nu$ `phantom' stabilizers. Therefore, at most $\nu$ GHZ states (that would have been created if $\ket{\psi}$ were equal to $\ket{\sigma}$) will not be distilled correctly. However, the remaining GHZ states are unaffected by these `defective' GHZ states, so we then arrive at the desired claim that at least $\mathcal{E}(\psi; A_1|\ldots|A_k)-\nu$ GHZ states have been distilled with these Clifford unitaries. The claim about the simultaneous distillability of Bell pairs in the tripartite case follows identically from \cite[Theorem 5]{bravyi_ghz_2006}, and applying our reasoning about `phantom' stabilizers (i.e., the $\nu$ phantom stabilizers destroy at most $\nu$ Bell pairs).
\end{proof}

\parhead{Entanglement dilution.} Now, consider the converse task of entanglement distillation, known as entanglement dilution. The objective is to prepare an entangled target state $\ket{\psi}$ shared between two parties, Alice and Bob. Since this state is entangled, it cannot be prepared with LOCC alone: Alice and Bob must start with some amount of shared entanglement. Namely, we typically assume that they start with some number of Bell pairs shared between them. Since we will also restrict ourselves to stabilizer LOCC, we will also allow for some amount of magic resources in preparing the state. More precisely, we will allow one of the parties to hold a non-stabilizer state of a bounded size to start with.  Given these constraints, entanglement dilution can be accomplished using a number of resources depending on the nullity $\nu$ of $\ket{\psi}$. The bare minimum amount of resources one might expect is that we require $\mathcal{E}(\psi;A|B)$ shared ebits (and, if we are unreasonably optimistic, \emph{no} local magic states or classical communication). We will show that these optimistic expectations are only off-target by $O(\nu)$. 

\begin{theorem}[Efficient entanglement dilution]\label{thm:dilution}
If Bob has local access to a pure bipartite state $\ket{\psi_{AB}}$, then Alice and Bob can together prepare the state $\ket{\psi_{AB}}$ across the bipartition $A|B$ via a stabilizer LOCC protocol that uses at most
\begin{equation}
    \begin{gathered}
        \mathcal{E}(\psi;A|B) + \nu/2 \qq{shared ebits and} \\
        \nu \qq{bits of classical communication.}
    \end{gathered}
\end{equation}

\end{theorem}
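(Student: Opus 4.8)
The plan is to obtain the dilution protocol by running the distillation protocol of \cref{thm:distill} in reverse. First I would bring $\ket{\psi_{AB}}$ into a fully reduced canonical form using only local Cliffords. Applying \cref{thm:distill} to the pure state (so that $S_0(\psi)=0$) produces local Cliffords $U_A,U_B$ with $(U_A \otimes U_B)\ket{\psi} = \ket{\phi_+}^{\otimes M} \otimes \ket{\sigma'}$, where $M$ counts the \emph{paired} nonlocal stabilizers of $S_{AB}$. I would then apply the per-side nullity distillation of \cref{alg:nullity} separately on $A$ and on $B$ to map the local stabilizer groups $G_A,G_B$ onto single-qubit $\ket{0}$'s, yielding $\ket{\phi_+}^{\otimes M} \otimes \ket{0}^{\otimes(\abs{S_A}+\abs{S_B})} \otimes \ket{\sigma'}$. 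The residual $\ket{\sigma'}$ now carries only the \emph{unpaired} nonlocal stabilizers together with the genuinely non-stabilizer (magic) content. The key structural input, lifted from the proof of \cref{thm:distill}, is that the number of unpaired generators obeys $z \leq \nu$ and hence $\mathcal{E}(\psi;A|B) = M + z/2$, while $\ket{\sigma'}$ occupies only $r = n - 2M - \abs{S_A} - \abs{S_B} \leq z + \nu \leq 2\nu$ qubits.

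Reversing this decomposition gives the dilution protocol. Alice and Bob first establish $M$ shared Bell pairs, and every $\ket{0}$ ancilla is supplied locally for free. The one nontrivial step is preparing the residual $\ket{\sigma'}$ across the cut, and this is exactly where Bob's local access to $\ket{\psi_{AB}}$ is used: Bob reconstructs all of $\ket{\sigma'}$ inside his own lab (by running the above local Cliffords on his private copy of $\psi$), locally compresses its $A$-destined qubits onto their Schmidt support, and teleports only those qubits to Alice. Finally both parties apply $U_A^\dagger$ and $U_B^\dagger$ locally, reassembling $\ket{\psi_{AB}}$ across $A|B$. Since the local Cliffords are free, teleportation uses only Bell measurements and Pauli corrections (hence is legitimate stabilizer LOCC), and the shared Bell pairs plus the teleported residual account for all the entanglement, this is a valid stabilizer LOCC dilution protocol.

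The resource count is then bookkeeping on the canonical form. The $M$ paired generators are supplied directly by $M$ shared ebits; each of the $z$ unpaired nonlocal stabilizers must instead be created from a full shared Bell pair (its $ZZ$-type correlation is used while its would-be partner is discarded), contributing $z/2$ ebits beyond the naive count. The total is therefore $M + z/2 = \mathcal{E}(\psi;A|B) + z/2 \leq \mathcal{E}(\psi;A|B) + \nu/2$ shared ebits. The classical communication arises solely from the teleportation of the compressed $A$-part of $\ket{\sigma'}$; since $\ket{\sigma'}$ is confined to $O(\nu)$ qubits (its magic content to $\nu$ qubits by \cref{thm:pure-struct}, its classical correlations to the $z \leq \nu$ unpaired stabilizers), the transmitted qubit count is $O(\nu)$, which a careful tally of the $A$-side support bounds by the stated $\nu$ bits.

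I expect the main obstacle to be precisely this tight bookkeeping of the residual $\ket{\sigma'}$: separating its qubits into the part carrying unpaired stabilizer correlations and the part carrying magic, and showing that after local Schmidt compression the teleportation of its $A$-side costs no more than the claimed $\nu/2$ extra ebits and $\nu$ bits of classical communication. This hinges on combining $z \leq \nu$ from \cref{thm:distill} with the confinement of the non-stabilizer content to $\nu$ qubits, and on verifying that the Schmidt rank of $\ket{\sigma'}$ (and hence the number of qubits Bob must actually transmit) is controlled by $\min(r_A,r_B) = O(\nu)$ rather than the full residual size. A secondary check is that applying \cref{alg:nullity} locally on each side does not disturb the already-distilled Bell pairs, which follows because those steps act only within $A$ or within $B$ and preserve the paired $X_{A_i}X_{B_i}$, $Z_{A_i}Z_{B_i}$ stabilizers.
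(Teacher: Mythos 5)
Your proposal follows essentially the same route as the paper's proof: invert the distillation protocol of \cref{thm:distill}, compress the local stabilizers and residual via nullity distillation so that $\ket{\sigma'}$ lives on at most $2\nu$ qubits, have Bob prepare $\ket{\sigma'}$ entirely in his lab from his local copy of $\ket{\psi_{AB}}$, teleport its $A$-side (at most $\nu$ qubits) to Alice, and then undo the local Cliffords, giving $M + \nu \leq \mathcal{E}(\psi;A|B) + \nu/2$ ebits in total. The only blemish is the bookkeeping in your third paragraph --- the $z$ unpaired stabilizers are already carried inside the teleported residual $\ket{\sigma'}$ rather than requiring separate shared Bell pairs, and your identity should read $M + z/2 = \mathcal{E}(\psi;A|B)$ rather than $\mathcal{E}(\psi;A|B) + z/2$ --- but this does not change the final resource count.
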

\begin{proof}
Recall the output of the entanglement distillation protocol \eqref{eq:distill}:
\begin{equation}
    \mathcal{S}(\ketbra{\psi}) = \ketbra{\phi_+}^{\otimes M}_{A' \cup B'} \otimes \sigma_{E},
\end{equation}
where $E = (A \setminus A') \cup (B \setminus B')$. We will define $A'' \coloneqq A\setminus A'$ and $B'' \coloneqq B \setminus B'$ for brevity. We observe that if we are able to prepare $\sigma_E$ across $A''$ and $B''$, we will be able to prepare $\ket{\psi}$ simply by inverting $\mathcal{S}$, which comprises applying $U_A^\dagger$ and $U_B^\dagger$ locally on $A$ and $B$, respectively.

First, we develop the structure of the state $\sigma_E$. $\sigma_E$ must have at least $n-2M-\nu$ stabilizers, and from \cref{eq:unpair-bound}, there are at most $\nu$ stabilizers that act nontrivially on $A''$ and $B''$ (these are the `unpaired' stabilizers in $G_{AB}$ that were useless for entanglement dilution). Therefore, there are at least $n-M-2\nu$ stabilizers which act entirely locally on $A''$ or $B''$. Using nullity distillation, we can find two Clifford unitaries $V_{A''}$ and $V_{B''}$ such that $\qty(V_{A''} \otimes V_{B''}) \sigma_E \qty(V_{A''} \otimes V_{B''})^\dagger = \ketbra{0}^{\otimes(n-2M-2\nu)} \otimes \sigma'$, where $\sigma'$ is a state spanning $A$ and $B$ on at most $2\nu$ qubits.
\begin{figure}[H]
    \centering
    {
        \graphicspath{{figures/}}
        \newcommand{\fs}{\scriptsize}
        \newcommand{\fss}{\footnotesize}
    \def\svgwidth{0.5\textwidth}
    \input{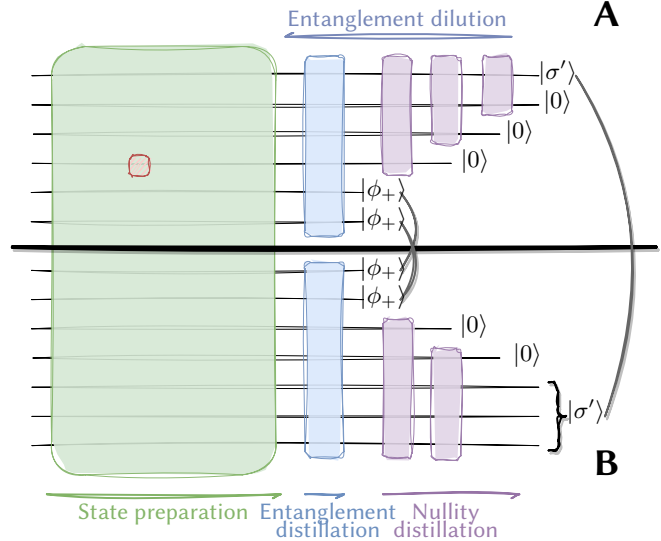}
    }
    \caption{State preparation (green), followed by entanglement distillation (blue), and finally `nullity distillation' (purple). The nullity distillation Clifford unitaries $V_{A''},V_{B''}$ produce $\ket{0}$ states and a state $\ket{\sigma'}$ on at most $2\nu$ qubits.}
    \label{fig:dilute}
\end{figure}
Our entanglement dilution protocol is then as follows. Assume WLOG that $\ket{\sigma'}$ is supported on more qubits in $B$ than in $A$ (so, it acts on at most $\nu$ sites in $A$). Then, we first prepare $\ket{\sigma'}$ completely locally on $B$, simply by using the procedure shown in \cref{fig:dilute}. Specifically, Bob uses his copy of $\ket{\psi}$ locally. Having prepared $\ket{\psi}$, Bob then executes the rest of the protocol entirely locally, which results in some number of local Bell pairs (which Bob can throw out) and some $\ket{0}$ states, which he also throws out. He is left with the state $\ket{\sigma'}$. Having prepared $\ket{\sigma'}$, $B$ then sends $A$'s portion of $\ket{\sigma'}$ using state teleportation, which costs at most $\nu$ ebits and $\nu$ bits of classical communication. Having received $\ket{\sigma'}$, and assuming $A$ and $B$ share $M$ additional ebits, the nullity distillation (purple) and entanglement distillation (blue) unitaries in \cref{fig:dilute} can be inverted. With this, $A$ and $B$ now share $\ket{\psi}$.
\end{proof}

\subsection{Witnessing multipartite entanglement}\label{subsec:witnessingent}
An entanglement witness $\mathcal{W}$ is an observable defined with respect to a target state $\psi$. The purpose of a witness is to experimentally validate the presence of genuine entanglement in an imperfectly prepared version of the target state $\psi$, which we call $\rho$. Mathematically, an observable $\mathcal{W}$ serves as an entanglement witness with respect to a particular bipartition $A|B$ when
\begin{equation}
    \tr(\mathcal{W} \rho) < 0 \implies \rho \not\in \sepp{A}{B}, \label{eq:witness-crit}
\end{equation}
where $\sepp{A}{B}$ is the set of states that is separable with respect to the bipartition $A|B$. More precisely, $\sepp{A}{B}$ is the convex hull of the set of unentangled pure states
\begin{equation}
    \qty{\ket{\psi_A} \otimes \ket{\psi_B} \mid \ket{\psi_A} \in \mathscr{H}_A, \ket{\psi_B} \in \mathscr{H}_B}.
\end{equation}
A typical form for an entanglement witness is
\begin{equation}
    \mathcal{W} = \alpha \id - \ketbra{\psi}\qc \alpha \coloneqq \max_{\rho \in \sepp{A}{B}} \tr(\rho \ketbra{\psi}).
\end{equation}
This observable satisfies \cref{eq:witness-crit} by definition of $\alpha$. The usefulness of this entanglement witness is measured by $\alpha$: for instance, in the extreme case $\alpha=1$, the value of $\tr(\mathcal{W}\rho)$ does not tell us anything about the entanglement of $\rho$, as $\tr(\mathcal{W} \rho)$ will then always be positive.

In this section, we will aim to develop an even more useful tool: a singular entanglement witness that simultaneously detects entanglement across a large number of bipartitions $\mathcal{B} = \qty{(A_i,B_i) \mid i=1,\ldots,\abs{\mathcal{B}}}$. That is, 
\begin{equation}
    \tr(\mathcal{W} \rho) < 0 \implies \rho \not\in \sepp{A_i}{B_i}\quad  \forall i=1,\ldots,\abs{\mathcal{B}}.
\end{equation}
We call this a witness of genuine \emph{multipartite} entanglement. In fact, we will construct something even stronger than this: we will define a class of entanglement witnesses that is not only able to rule out separability, but can rule out `approximate' separability. To be precise, for any fixed level of entanglement $E$, we define the set of $E$-entangled states $\sep^{(E)}(\mathscr{H}_A; \mathscr{H}_B)$ as the convex hull of the set of pure states $\mathcal{X}_E \coloneqq \qty{\ketbra{\psi} \mid S_{1/2}(\psi_A) \leq E}$, noting that $\sep^{(E=0)}(\mathscr{H}_A; \mathscr{H}_B)=\sepp{A}{B}$ recovers the original set of exactly separable states. Moreover, it holds that $\sep^{(E)}(\mathscr{H}_A; \mathscr{H}_B)\supset\sepp{A}{B}$. We will say that a state $\rho$ exhibits genuine multipartite $E$-entanglement with respect to a set of bipartitions $\mathcal{B}$ if $\rho \not\in \sep^{(E)}(\mathscr{H}_{A_i}; \mathscr{H}_{B_i})$ for all $i=1,\ldots,\abs{\mathcal{B}}$. The main result of this section will be a simple and efficient construction for a multipartite $E$-entanglement witness $\mathcal{W}^{(E)}$, satisfying
\begin{equation}
    \tr(\mathcal{W}^{(E)} \rho) < 0 \implies \rho \not\in \sep^{(E)}(\mathscr{H}_{A_i}; \mathscr{H}_{B_i})\quad \forall i=1,\ldots,\abs{\mathcal{B}}.
\end{equation}
Again, we show that the stabilizer component $\Pi$ of the target state $\psi$ contains almost all the relevant information for this multipartite entanglement detection. However, before this, we first need the following lemma.
\begin{lemma}\label{lem:overlap-bound}
For a given bipartition $A|B$ where both halves of the system have size $M$, and for some level of entanglement $E$,
\begin{equation}
    \max_{\rho \in \sep^{(E)}(\mathscr{H}_A; \mathscr{H}_B)} \tr(\rho \ketbra{\phi_+}^{\otimes M}) \leq 2^{-M+E},
\end{equation}
where $\ket{\phi_+}$ is a Bell pair spanning $A$ and $B$.
\end{lemma}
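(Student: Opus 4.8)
The plan is to exploit the linearity of the functional $\rho \mapsto \tr(\rho\,\ketbra{\phi_+}^{\otimes M})$ together with the convex-hull structure of $\sep^{(E)}(\mathscr{H}_A;\mathscr{H}_B)$. Since $\mathcal{X}_E = \qty{\ketbra{\psi} \mid S_{1/2}(\psi_A) \leq E}$ is compact (the constraint is a closed condition on the compact set of pure states) and a linear functional on a convex hull attains its maximum at an extreme point, the extremizing $\rho$ may be taken to be a pure state $\ket{\psi}$ with $S_{1/2}(\psi_A)\le E$. Thus it suffices to bound the single-state overlap $\abs{\braket{\phi_+^{\otimes M}}{\psi}}^2 = \tr(\ketbra{\psi}\,\ketbra{\phi_+}^{\otimes M})$, and the entire problem reduces to one clean estimate.

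First I would put the target into canonical form. Writing $d \coloneqq 2^M$, the state $\ket{\phi_+}^{\otimes M}$ is exactly the maximally entangled state of Schmidt rank $d$ across the grouped cut $A|B$: pairing the $i$-th qubit of $A$ with the $i$-th qubit of $B$, it reads $\ket{\phi_+}^{\otimes M} = \frac{1}{\sqrt d}\sum_{x\in\qty{0,1}^M} \ket{x}_A\ket{x}_B$. Expanding an arbitrary pure state in the same product basis as $\ket{\psi} = \sum_{x,y} C_{xy}\ket{x}_A\ket{y}_B$, with coefficient matrix $C$ normalized by $\tr(CC^\dagger)=1$, the overlap collapses to a trace, $\braket{\phi_+^{\otimes M}}{\psi} = \frac{1}{\sqrt d}\tr(C)$.

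The crux is then to control $\abs{\tr(C)}$ by the sum of singular values of $C$ (its trace norm). Taking a singular value decomposition $C = U\Sigma V^\dagger$ with $\Sigma = \operatorname{diag}(\sigma_1,\ldots,\sigma_d)$, we have $\abs{\tr(C)} = \abs{\tr(\Sigma V^\dagger U)} \le \sum_i \sigma_i \abs{(V^\dagger U)_{ii}} \le \sum_i \sigma_i$, since $V^\dagger U$ is unitary and hence has diagonal entries of modulus at most $1$. The singular values $\sigma_i$ of $C$ are precisely the Schmidt coefficients of $\ket{\psi}$, i.e. $\sigma_i = \sqrt{\lambda_i}$ where $\lambda_i$ are the eigenvalues of $\psi_A$, so $\sum_i \sigma_i = \tr(\psi_A^{1/2})$. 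This is exactly the quantity governed by the $1/2$-Rényi entropy: by \cref{def:renyi}, $S_{1/2}(\psi_A) = 2\log\tr(\psi_A^{1/2})$, hence $\tr(\psi_A^{1/2}) = 2^{S_{1/2}(\psi_A)/2}$.

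Combining these steps yields
\begin{equation}
    \abs{\braket{\phi_+^{\otimes M}}{\psi}}^2 = \frac{1}{d}\abs{\tr C}^2 \le \frac{1}{d}\qty(\tr(\psi_A^{1/2}))^2 = 2^{-M}\,2^{S_{1/2}(\psi_A)} \le 2^{-M+E},
\end{equation}
using $d=2^M$ and $S_{1/2}(\psi_A)\le E$ in the final step, which is the claimed bound. The only genuine obstacle is the trace-versus-trace-norm inequality of the third step, together with the conceptual recognition that $S_{1/2}$ is the natural entropy controlling overlap with maximally entangled states; everything else is routine bookkeeping about Schmidt decompositions. The one point to verify carefully is that $\ket{\phi_+}^{\otimes M}$ is indeed maximally entangled of rank exactly $d=2^M$ across the grouped $A|B$ cut, so that both the canonical form and the normalization factor $1/\sqrt d$ are correct.
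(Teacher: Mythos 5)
Your proof is correct, and it reaches the bound by a somewhat more elementary route than the paper. Both arguments begin identically, reducing the maximization over $\sep^{(E)}(\mathscr{H}_A;\mathscr{H}_B)$ to pure states in $\mathcal{X}_E$ by linearity and convexity. From there the paper splits the optimization into an outer maximization over reduced states $\psi_A$ with $S_{1/2}(\psi_A)\le E$ and an inner maximization over purifications, evaluates the inner one via Uhlmann's theorem as the fidelity $\mathcal{F}(\tr_B(\ketbra{\phi_+}^{\otimes M}),\psi_A)$, and uses $\tr_B(\ketbra{\phi_+}^{\otimes M})=2^{-M}\id$ to land on $2^{-M}\tr(\sqrt{\psi_A})^2$. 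You instead bound the overlap of an arbitrary fixed pure state directly: writing the coefficient matrix $C$ and using $\abs{\tr C}\le\sum_i\sigma_i=\tr(\psi_A^{1/2})$ via the singular value decomposition, you arrive at exactly the same quantity $2^{-M}\tr(\psi_A^{1/2})^2$ and conclude with $S_{1/2}(\psi_A)\le E$. Your trace-versus-trace-norm step is in effect a hands-on proof of the special case of Uhlmann's theorem the paper invokes (fidelity with the maximally mixed marginal), and it is tight, so nothing is lost. What the paper's route buys is brevity and a statement that manifestly identifies the inner supremum over purifications as a fidelity; what yours buys is self-containedness — no appeal to Uhlmann — and an explicit identification of where each factor comes from. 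Your closing caveat is also handled correctly: $\ket{\phi_+}^{\otimes M}$ regrouped across $A|B$ is indeed the rank-$2^M$ maximally entangled state with the normalization $2^{-M/2}$ you use.
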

\begin{proof}
First, we observe that we only need to do our maximization over states $\ket{\psi_{AB}} \in \mathcal{X}_E$, as $\tr(\rho \ketbra{\phi_+}^{\otimes M})$ is linear in $\rho$, and $\sep^{(E)}(\mathscr{H}_A; \mathscr{H}_B)$ is simply the convex hull of $\mathcal{X}_E$. We first write the optimization over all states in $\mathcal{X}_E$ by doing a trivial transformation, wherein we maximize first over all reduced density matrices $\psi_A$ that have bounded entropy, and then maximize over all possible purifications $\ket{\psi_{AB}}$ of that state.
\begin{equation}
    \max_{\ket{\psi_{AB}} \in \mathcal{X}_E} \abs{\braket{\psi_{AB}}{\phi_+}^{\otimes M}}^2 = \max_{\substack{\psi_A \\ S_{1/2}(\psi_A) \leq E}}  \quad \max_{\substack{\ket{\psi_{AB}} \\ \tr_B(\ketbra{\psi_{AB}}) = \psi_A}} \abs{\braket{\psi_{AB}}{\phi_+}^{\otimes M}}^2.
\end{equation}
Note that the inner optimization is known from Uhlmann's theorem to be equal exactly to $\mathcal{F}(\tr_B(\ketbra{\phi_+}^{\otimes M}), \psi_A)$, where $\mathcal{F}(\rho,\sigma) \coloneqq \tr(\sqrt{\sqrt{\rho} \sigma \sqrt{\rho}})^2$ is the fidelity. Since $\tr_B(\ketbra{\phi_+}^{\otimes M}) = 2^{-M} \id$,
\begin{equation}
    \max_{\ket{\psi_{AB}} \in \mathcal{X}_E} \abs{\braket{\psi_{AB}}{\phi_+}^{\otimes M}}^2 = \max_{\substack{\psi_A \\ S_{1/2}(\psi_A) \leq E}} 2^{-M} \tr(\sqrt{\psi_A})^2.
\end{equation}
Since $\tr(\sqrt{\psi_A})^2 = \exp(S_{1/2}(\psi_A))$, the maximum possible value is at most $2^E$ by assumption that $S_{1/2}(\psi_A) \leq E$. Therefore, the overall maximum is bounded by $2^{-M+E}$.
\end{proof}

\begin{theorem}[Multipartite $E$-entanglement witness]\label{th:multipartiteEentwitness}
Consider a set of bipartitions $\mathcal{B} = \qty{(A_i,B_i) \mid i=1,\ldots,\abs{\mathcal{B}}}$, a pure state $\psi$, and some level of entanglement $E$. Recall, from \cref{thm:pure-struct}, that $\Pi$ is the stabilizer projector of the target state $\psi$: $\Pi = \prod_{g_j \in S} \qty(\frac{\id+g_j}{2})$, where $S$ is a generating set for the stabilizer group of $\psi$. The following observable is a multipartite $E$-entanglement witness with respect to the bipartitions $\mathcal{B}$:
\begin{equation}
    \mathcal{W}^{(E)} = 2^{-M(\mathcal{B})+ E} \id - \Pi\qc M(\mathcal{B}) \coloneqq \min_i \lfloor\mathcal{E}(\psi;A_i|B_i) - \nu/2\rfloor,
\end{equation}
where $\mathcal{E}(\psi;A_i|B_i)$ is the stabilizer entanglement with respect to the bipartition $A_i|B_i$. When $E=0$, we recover the conventional definition of a multipartite entanglement witness.
\end{theorem}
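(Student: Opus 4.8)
The plan is to prove the contrapositive of the witness condition one bipartition at a time. Fix any single $A_i|B_i \in \mathcal{B}$; I will show that every $\rho \in \sep^{(E)}(\mathscr{H}_{A_i};\mathscr{H}_{B_i})$ obeys $\tr(\Pi\rho) \le 2^{-M(\mathcal{B})+E}$, so that $\tr(\mathcal{W}^{(E)}\rho) \ge 0$. Since this holds separately for each $i$, the strict inequality $\tr(\mathcal{W}^{(E)}\rho) < 0$ simultaneously rules out membership in $\sep^{(E)}(\mathscr{H}_{A_i};\mathscr{H}_{B_i})$ for all $i$, which is exactly the claimed multipartite $E$-witness property; specializing $E=0$ recovers the ordinary witness.

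The first step is to expose a maximally entangled block inside $\Pi$. I would apply \cref{thm:distill} to the \emph{pure} target $\psi$ (so $S_0(\psi)=0$), producing local Cliffords $U_{A_i},U_{B_i}$ that carry $M_i = \lfloor \mathcal{E}(\psi;A_i|B_i) - \nu/2 \rfloor \ge M(\mathcal{B})$ paired stabilizers of $G$ onto $X_{A_k}X_{B_k}, Z_{A_k}Z_{B_k}$. Since $\Pi$ carries the same stabilizer group $G$ as $\psi$, applying \cref{lem:bell} to each pair factorizes $U\Pi U^\dagger = \ketbra{\phi_+}^{\otimes M_i}_{A_i' \cup B_i'} \otimes \Pi''$ with $U = U_{A_i}\otimes U_{B_i}$, where $\Pi''$ is a stabilizer projector on the complement, hence $\Pi'' \le \id$. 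Because local Cliffords preserve Schmidt spectra they map $\sep^{(E)}(\mathscr{H}_{A_i};\mathscr{H}_{B_i})$ onto itself, so $\tr(\Pi\rho) = \tr(U\Pi U^\dagger\, U\rho U^\dagger) \le \tr\big((\ketbra{\phi_+}^{\otimes M_i}_{A_i'\cup B_i'} \otimes \id)\,\rho'\big)$ with $\rho' = U\rho U^\dagger$ still $E$-separable. The task is thereby reduced to bounding the overlap of an $E$-separable state with a maximally entangled state supported only on the sub-bipartition $A_i'|B_i'$.

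Here lies the main obstacle, and the reason \cref{lem:overlap-bound} cannot be quoted verbatim: the Bell block lives on $A_i'|B_i'$, a \emph{strict} sub-bipartition of $A_i|B_i$, while the constraint $S_{1/2}\le E$ is imposed on the full cut. I would reduce to pure $\ket{\psi_j}\in\mathcal{X}_E$ by convexity, then expand $\id_{A_i''B_i''} = \sum_{u,v}\ketbra{u}_{A_i''}\otimes\ketbra{v}_{B_i''}$ to write $\bra{\psi_j}(\ketbra{\phi_+}^{\otimes M_i}\otimes\id)\ket{\psi_j} = \sum_{u,v}\abs{\braket{\phi_+^{\otimes M_i}}{\theta_{uv}}}^2$, where $\ket{\theta_{uv}} = (\bra{u}_{A_i''}\otimes\bra{v}_{B_i''})\ket{\psi_j}$ is the (subnormalized) post-measurement slice. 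Letting $\mathcal{C}$ be the Schmidt coefficient matrix of $\ket{\psi_j}$ across $A_i|B_i$ and $\mathcal{C}^{(uv)}$ the block restricting rows to $a''=u$ and columns to $b''=v$, one has $\abs{\braket{\phi_+^{\otimes M_i}}{\theta_{uv}}}^2 = 2^{-M_i}\abs{\tr\mathcal{C}^{(uv)}}^2 \le 2^{-M_i}\norm{\mathcal{C}^{(uv)}}_1^2$, while $\norm{\mathcal{C}}_1^2 = (\tr\sqrt{(\psi_j)_{A_i}})^2 = 2^{S_{1/2}((\psi_j)_{A_i})} \le 2^E$. The crux is then the purely linear-algebraic block inequality $\sum_{u,v}\norm{\mathcal{C}^{(uv)}}_1^2 \le \norm{\mathcal{C}}_1^2$.

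I expect this block inequality to be the real work, and I would establish it by nuclear/operator-norm duality. Picking a polar partial isometry $W_{uv}$ with $\tr(W_{uv}^\dagger \mathcal{C}^{(uv)}) = \norm{\mathcal{C}^{(uv)}}_1$ and weights $c_{uv} = \norm{\mathcal{C}^{(uv)}}_1 / (\sum_{u'v'}\norm{\mathcal{C}^{(u'v')}}_1^2)^{1/2}$, I assemble the global operator $W = \sum_{u,v} c_{uv}\,(\id_{A_i'}\otimes\ket{u}_{A_i''})\,W_{uv}\,(\id_{B_i'}\otimes\bra{v}_{B_i''})$. A direct computation gives $\tr(W^\dagger\mathcal{C}) = (\sum_{u,v}\norm{\mathcal{C}^{(uv)}}_1^2)^{1/2}$, and a short estimate using orthogonality of the embeddings over $u$, Cauchy--Schwarz, and $\sum_{u,v} c_{uv}^2 = 1$ shows $\norm{W}_\infty \le 1$; duality $\tr(W^\dagger\mathcal{C}) \le \norm{W}_\infty\norm{\mathcal{C}}_1$ then closes the inequality. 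Equivalently, this is precisely the statement that the reduced state $\rho'_{A_i'B_i'}$ stays in $\sep^{(E)}(\mathscr{H}_{A_i'};\mathscr{H}_{B_i'})$, after which \cref{lem:overlap-bound} with $M=M_i$ could be invoked directly. Chaining the bounds yields $\tr(\Pi\rho) \le 2^{-M_i+E} \le 2^{-M(\mathcal{B})+E}$, completing the proof.
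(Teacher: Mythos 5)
Your proposal follows essentially the same route as the paper's proof: use the distillation Cliffords of \cref{thm:distill} to bring $\Pi$ into a form containing $M_i \geq M(\mathcal{B})$ explicit Bell-pair projectors, discard the residual local/unpaired projector, exploit local-unitary invariance of $\sep^{(E)}$, and reduce to the overlap bound of \cref{lem:overlap-bound}. The one substantive difference is that the paper passes from $\max_{\rho \in \sep^{(E)}(\mathscr{H}_{A_i},\mathscr{H}_{B_i})} \tr\big((\ketbra{\phi_+}^{\otimes M_i}\otimes \id)\rho\big)$ to $\max_{\rho' \in \sep^{(E)}(\mathscr{H}_{A_i'},\mathscr{H}_{B_i'})}\tr\big(\ketbra{\phi_+}^{\otimes M_i}\rho'\big)$ in a single step, implicitly assuming that restriction to the sub-bipartition preserves $E$-separability, whereas you correctly flag this as the non-obvious point and close it with the block trace-norm inequality $\sum_{u,v}\norm{\mathcal{C}^{(uv)}}_1^2 \leq \norm{\mathcal{C}}_1^2$, whose duality proof checks out. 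Your version is therefore a more carefully justified rendering of the same argument rather than a different one.
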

\begin{proof}
Let us fix a particular bipartition $A | B$. It suffices to show that 
\begin{equation}
    \max_{\rho \in \sep^{(E)}(\mathscr{H}_A, \mathscr{H}_B)} \tr(\Pi \rho) \leq 2^{-M+\lceil E \rceil},
\end{equation}
where we abbreviate $M(\mathcal{B})$ to just $M$ for brevity.

We can freely transform $\Pi$ by a product of local unitaries without changing the maximum value of the above, as this is equivalent to transforming $\rho$ by local unitaries (which leaves $\sep^{(E)}(\mathscr{H}_A, \mathscr{H}_B)$ invariant), and the maximum is over all possible $\rho \in \sep^{(E)}(\mathscr{H}_A, \mathscr{H}_B)$ anyways. Therefore, we use the machinery we developed in entanglement distillation, wherein we showed that $\Pi$ is equivalent, under a local unitary $U_A \otimes U_B$, to a projector onto a stabilizer group with generators $S_A \cup S_B \cup S_{AB}$. Furthermore, the generators in $S_{AB}$ contained at least $M$ pairs $X_{A_i} X_{B_i}, Z_{A_i}, Z_{B_i}$. Denoting $\tilde{S}_{AB}$ to be the remaining unpaired generators in $S_{AB}$, we have:
\begin{equation}
    (U_A \otimes U_B)^\dagger \Pi (U_A \otimes U_B) = \qty[\prod_{g \in S_A \cup S_B \cup \tilde{S}_{AB}} \frac{\id+g}{2}] \qty[\prod_{i=1}^M \frac{\id+X_{A_i} X_{B_i}}{2}] \qty[\prod_{i=1}^M \frac{\id+Z_{A_i} Z_{B_i}}{2}].
\end{equation}
Since for any two projectors $\Pi_1$ and $\Pi_2$, $\tr(\Pi_1 \Pi_2 \rho) \leq \tr(\Pi_2 \rho)$, we simply drop the first projector (containing terms in $S_A,S_B,\tilde{S}_{AB}$) and find
\begin{subequations}
    \begin{align}
    \max_{\rho \in \sep^{(E)}(\mathscr{H}_A, \mathscr{H}_B)} \tr(\Pi \rho) &\leq \max_{\rho \in \sep^{(E)}(\mathscr{H}_A, \mathscr{H}_B)} \tr(\qty[\prod_{i=1}^M \frac{\id+X_{A_i} X_{B_i}}{2}] \qty[\prod_{i=1}^M \frac{\id+Z_{A_i} Z_{B_i}}{2}] \rho).
        \intertext{Denoting the first $M$ sites of $A$ as $A'$ (with a similar definition for $B'$), we see that $\qty[\prod_{i=1}^M \frac{\id+X_{A_i} X_{B_i}}{2}] \qty[\prod_{i=1}^M \frac{\id+Z_{A_i} Z_{B_i}}{2}] = \ketbra{\phi_+}^{\otimes M}_{A' \cup B'} \otimes \id_{E}$, where $\phi_+$ is a Bell pair spanning the bipartition $A|B$ and $E=\overline{(A' \cup B')}$.}
        &\leq \max_{\rho' \in \sep^{(E)}(\mathscr{H}_{A'}, \mathscr{H}_{B'})} \tr(\ketbra{\phi_+}^{\otimes M} \rho')\\
        &\leq 2^{-M+E},
    \end{align}
\end{subequations}
where the last bound follows from \cref{lem:overlap-bound}.
\end{proof}

\begin{corollary}[Robustness of multipartite entanglement]\label{cor:robustnessmultipartite}
For any set of bipartitions $\mathcal{B}$ and any target pure state $\ketbra{\psi}$, if an imperfectly prepared state $\rho$ satisfies
\begin{equation}
    \frac{1}{2} \norm{\rho-\ketbra{\psi}}_1 < 1-2^{-M(\mathcal{B})/2},\label{eq:robustness}
\end{equation}
then $\rho$ exhibits genuine multipartite entanglement with respect to $\mathcal{B}$.
\end{corollary}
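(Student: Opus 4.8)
The plan is to deduce the corollary directly from \cref{th:multipartiteEentwitness} specialized to $E=0$, for which the multipartite $E$-entanglement witness $\mathcal{W}^{(0)} = 2^{-M(\mathcal{B})}\id - \Pi$ certifies genuine (exact) multipartite entanglement: by that theorem, $\tr(\mathcal{W}^{(0)}\rho) < 0$ implies $\rho \notin \sepp{A_i}{B_i}$ for every $i$. Since $\tr(\mathcal{W}^{(0)}\rho) = 2^{-M(\mathcal{B})} - \tr(\Pi\rho)$, the entire task reduces to establishing the single scalar inequality $\tr(\Pi\rho) > 2^{-M(\mathcal{B})}$ from the trace-distance hypothesis $\tfrac12\norm{\rho-\ketbra{\psi}}_1 < 1 - 2^{-M(\mathcal{B})/2}$.

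I would prove this using two ingredients. First, I would observe that $\ket{\psi}$ lies in the range of its own stabilizer projector $\Pi$ (from \cref{thm:pure-struct}, $\ketbra{\psi} = H\Pi$ with $\Pi\ket{\psi}=\ket{\psi}$), so $\Pi - \ketbra{\psi}$ is itself a projector and hence positive semidefinite. This gives the operator inequality $\Pi \succeq \ketbra{\psi}$, and therefore $\tr(\Pi\rho) \geq \expval{\rho}{\psi}$; it thus suffices to lower bound the pure-state overlap $\expval{\rho}{\psi}$. Second, I would invoke the Fuchs--van de Graaf inequality in the form $1 - \sqrt{\expval{\rho}{\psi}} \leq \tfrac12\norm{\rho - \ketbra{\psi}}_1$, which relates the (square-root) fidelity of $\rho$ with the pure state $\psi$ to their trace distance.

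Combining these, I would write $\sqrt{\expval{\rho}{\psi}} \geq 1 - \tfrac12\norm{\rho-\ketbra{\psi}}_1 > 1 - \left(1 - 2^{-M(\mathcal{B})/2}\right) = 2^{-M(\mathcal{B})/2}$, where the strict inequality uses the hypothesis \eqref{eq:robustness}. Squaring yields $\expval{\rho}{\psi} > 2^{-M(\mathcal{B})}$, so $\tr(\Pi\rho) \geq \expval{\rho}{\psi} > 2^{-M(\mathcal{B})}$ and consequently $\tr(\mathcal{W}^{(0)}\rho) < 0$, which is exactly the witness condition certifying genuine multipartite entanglement across all of $\mathcal{B}$.

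The only genuine subtlety I anticipate is bookkeeping around the fidelity convention and the direction of the Fuchs--van de Graaf bound: I must use the \emph{lower} bound $1 - \sqrt{\expval{\rho}{\psi}} \leq T$ (not the upper bound $T \leq \sqrt{1-\expval{\rho}{\psi}}$) so that a small trace distance forces a large overlap, and I must track that $\expval{\rho}{\psi}$ is the squared (paper-convention $\mathcal{F}$) fidelity while the Fuchs--van de Graaf inequality is phrased in terms of its square root. Everything else is a short chain of elementary estimates, and the passage $\Pi \succeq \ketbra{\psi}$ is the conceptual point that lets the stabilizer projector $\Pi$ stand in for the full target $\ketbra{\psi}$, keeping the sample complexity controlled by $M(\mathcal{B})$ rather than by the nonstabilizer part of $\psi$.
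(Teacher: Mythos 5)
Your proof is correct and follows essentially the same route as the paper's: Fuchs--van de Graaf converts the trace-distance hypothesis into $\tr(\rho\ketbra{\psi}) > 2^{-M(\mathcal{B})}$, the overlap with $\ketbra{\psi}$ is then transferred to the stabilizer projector $\Pi$, and the $E=0$ witness of \cref{th:multipartiteEentwitness} finishes the argument. The only cosmetic difference is that you obtain $\tr(\Pi\rho) \geq \tr(\rho\ketbra{\psi})$ directly from the operator inequality $\Pi \succeq \ketbra{\psi}$ (valid since $\Pi\ket{\psi}=\ket{\psi}$ and $\Pi$ is a projector), whereas the paper proves the same implication by a contrapositive argument bounding $\norm{\Pi\rho\Pi}_\infty$.
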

\begin{proof}
The condition \cref{eq:robustness} implies $\tr(\rho \ketbra{\psi}) > 2^{-M}$. It then suffices to show that $\tr(\rho \ketbra{\psi}) > 2^{-M} \implies \tr(\rho \Pi) > 2^{-M}$, as we showed that $\tr(\rho \Pi) \leq 2^{-M}$ for all separable $\rho$. To show this, we show the contrapositive. Assume $\tr(\rho \Pi) \leq 2^{-M}$. Defining $\rho' \coloneqq \Pi \rho \Pi$, we see that $\tr(\rho') = \tr(\rho \Pi) \leq 2^{-M} \implies \norm{\rho'}_\infty \leq 2^{-M}$, so $\tr(\rho \ketbra{\psi})=\tr(\rho' \ketbra{\psi}) \leq \norm{\rho'}_\infty \leq 2^{-M}$.
\end{proof}

\begin{corollary}[Robustness of multipartite $E$-entanglement]\label{cor:Erobustnessmultipartite}
For any set of bipartitions $\mathcal{B}$ and any target pure state $\ketbra{\psi}$, if an imperfectly prepared state $\rho$ satisfies
\begin{equation}
    \frac{1}{2} \norm{\rho - \ketbra{\psi}}_1 < 1-2^{(-M(\mathcal{B})+E)/2},
\end{equation}
then $\rho$ exhibits genuine multipartite $E$-entanglement with respect to $\mathcal{B}$.
\begin{proof}
The proof is identical to the one of \cref{cor:robustnessmultipartite} and follows entirely from \cref{lem:overlap-bound} and \cref{th:multipartiteEentwitness}.
\end{proof}
\end{corollary}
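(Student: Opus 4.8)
The goal is to certify that $\rho$ lies outside $\sep^{(E)}(\mathscr{H}_{A_i};\mathscr{H}_{B_i})$ simultaneously for every bipartition $(A_i,B_i)\in\mathcal{B}$. The plan is to route everything through the witness $\mathcal{W}^{(E)} = 2^{-M(\mathcal{B})+E}\id - \Pi$ of \cref{th:multipartiteEentwitness}: by that theorem it suffices to establish $\tr(\mathcal{W}^{(E)}\rho) < 0$, i.e. $\tr(\Pi\rho) > 2^{-M(\mathcal{B})+E}$, since this single scalar inequality rules out $E$-separability across all of $\mathcal{B}$ at once (the $\min_i$ in the definition of $M(\mathcal{B})$ is exactly what makes one threshold valid for every cut, the threshold itself coming from \cref{lem:overlap-bound}). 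Abbreviating $M \coloneqq M(\mathcal{B})$, I would therefore reduce the corollary to the chain of implications: the trace-distance hypothesis $\Rightarrow$ a large overlap $\tr(\rho\ketbra{\psi})$ $\Rightarrow$ a large $\tr(\Pi\rho)$.

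For the first implication I would invoke the Fuchs--van de Graaf inequality specialized to a pure target state: writing $p \coloneqq \tr(\rho\ketbra{\psi})$, one has $\tfrac12\norm{\rho-\ketbra{\psi}}_1 \geq 1-\sqrt{p}$. The hypothesis $\tfrac12\norm{\rho-\ketbra{\psi}}_1 < 1 - 2^{(-M+E)/2}$ then forces $1-\sqrt{p} < 1 - 2^{(-M+E)/2}$, i.e. $\sqrt{p} > 2^{(-M+E)/2}$, which upon squaring gives $\tr(\rho\ketbra{\psi}) > 2^{-M+E}$. This is the only place the precise exponent $(-M+E)/2$ in the hypothesis is used, and it is engineered so that squaring lands exactly on the witness threshold $2^{-M+E}$.

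The second implication is the projector-replacement argument already used in \cref{cor:robustnessmultipartite}, which I would reproduce in contrapositive form. Assume $\tr(\Pi\rho)\leq 2^{-M+E}$ and set $\rho' \coloneqq \Pi\rho\Pi$. Since $\Pi$ is a projector, $\rho'$ is positive semidefinite with $\tr(\rho') = \tr(\Pi\rho\Pi) = \tr(\rho\Pi) \leq 2^{-M+E}$, so its largest eigenvalue obeys $\norm{\rho'}_\infty \leq \tr(\rho') \leq 2^{-M+E}$. Because $\ket{\psi}$ lies in the range of its own stabilizer projector, $\Pi\ket{\psi} = \ket{\psi}$, whence $\tr(\rho\ketbra{\psi}) = \bra{\psi}\rho\ket{\psi} = \bra{\psi}\Pi\rho\Pi\ket{\psi} = \bra{\psi}\rho'\ket{\psi} \leq \norm{\rho'}_\infty \leq 2^{-M+E}$. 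This is precisely the contrapositive of the second implication, so combining the two yields $\tr(\Pi\rho) > 2^{-M+E}$, and the witness detects genuine multipartite $E$-entanglement.

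As for difficulty, there is no substantive obstacle: the statement is a corollary that inherits all its content from \cref{lem:overlap-bound,th:multipartiteEentwitness}. The only points demanding care are the bookkeeping of the exponent --- ensuring the factor of two between the trace-distance threshold $2^{(-M+E)/2}$ and the overlap/witness threshold $2^{-M+E}$ threads consistently through the Fuchs--van de Graaf step --- and confirming that projecting onto $\Pi$ can never decrease the relevant overlap, which is guaranteed by $\Pi\ket{\psi}=\ket{\psi}$ together with the positivity of $\rho'$.
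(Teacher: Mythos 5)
Your proposal is correct and follows exactly the route the paper intends: the paper's proof simply defers to \cref{cor:robustnessmultipartite}, whose argument is precisely your chain (Fuchs--van de Graaf to get $\tr(\rho\ketbra{\psi})>2^{-M+E}$, then the contrapositive projector-replacement step $\rho'=\Pi\rho\Pi$ with $\norm{\rho'}_\infty\le\tr(\rho')$ to lift this to $\tr(\Pi\rho)>2^{-M+E}$, then \cref{th:multipartiteEentwitness}). You have merely made explicit the two steps the paper leaves implicit, with the threshold $2^{-M}$ replaced throughout by $2^{-M+E}$.
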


\begin{lemma}[Efficient measurement protocol]\label{lem:efficienwitnessing}
Assume we have target pure state $\ketbra{\psi}$. For any state $\rho$ that is bounded in trace distance from $\ket{\psi}$ (i.e., $\frac{1}{2} \norm{\rho - \ketbra{\psi}}_1 \leq \epsilon$), and some maximum failure probability $\delta$,
\begin{equation}
    N = \left\lceil\frac{2 \log(2/\delta)}{(1-(\epsilon + 2^{-M(\mathcal{B})}))^2}\right\rceil
\end{equation}
Pauli measurements suffice to detect genuine multipartite entanglement with respect to the bipartitions $\mathcal{B}$.
\end{lemma}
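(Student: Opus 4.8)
The plan is to reduce the whole task to estimating a single scalar, $\tr(\Pi\rho)$, and comparing it against the threshold $2^{-M(\mathcal{B})}$. Indeed, by \cref{th:multipartiteEentwitness} specialized to $E=0$, the witness is $\mathcal{W} = 2^{-M(\mathcal{B})}\id - \Pi$, and $\tr(\mathcal{W}\rho)<0$ — equivalently $\tr(\Pi\rho) > 2^{-M(\mathcal{B})}$ — already certifies genuine multipartite entanglement with respect to every bipartition in $\mathcal{B}$. So it suffices to estimate $\tr(\Pi\rho)$ accurately enough to confirm it lies above threshold. The crucial structural observation is that $\Pi$ is an equal-weight average over the stabilizer group: writing $G=\langle S\rangle$ with $\abs{G}=2^{\abs{S}}$, we have $\Pi = \prod_{g_j\in S}\frac{\id+g_j}{2} = \frac{1}{\abs{G}}\sum_{P\in G} P$, so $\tr(\Pi\rho) = \mathbb{E}_{P\sim\mathrm{Unif}(G)}[\tr(P\rho)]$ is a uniform average of Pauli expectation values. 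This immediately furnishes an unbiased, bounded estimator built from Pauli measurements.

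First I would set up the estimator. Sample a uniformly random group element $P\in G$ — realized efficiently by drawing $b\in\mathbb{F}_2^{\abs{S}}$ uniformly and forming $P=\prod_j g_j^{b_j}$ — and measure the Pauli $P$ on a fresh copy of $\rho$, obtaining an outcome $x\in\{-1,+1\}$ with $\mathbb{E}[x\mid P]=\tr(P\rho)$. Each trial costs a single Pauli measurement. Averaging $N$ independent trials yields $\bar X = \frac{1}{N}\sum_{i=1}^N x_i$, an unbiased estimator with $\mathbb{E}[\bar X]=\tr(\Pi\rho)$ and $x_i\in[-1,1]$.

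Next I would establish the separation between the true value and the threshold. Since $\Pi$ is a projector with $\Pi\ket{\psi}=\ket{\psi}$, we have $\tr(\Pi\ketbra{\psi})=1$; combining this with the variational characterization $\tr\!\big(E(\rho-\sigma)\big)\leq \tfrac12\norm{\rho-\sigma}_1$, valid for any $0\leq E\leq\id$ and applied with $E=\Pi$, gives $\tr(\Pi\rho) = 1 + \tr\!\big(\Pi(\rho-\ketbra{\psi})\big) \geq 1-\epsilon$. Hence the gap between the true value and the decision threshold is at least $\tau\coloneqq (1-\epsilon)-2^{-M(\mathcal{B})} = 1-(\epsilon+2^{-M(\mathcal{B})})$, which is strictly positive precisely in the regime where the statement is meaningful.

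Finally I would apply concentration. For the bounded variables $x_i\in[-1,1]$, Hoeffding's inequality gives $\Pr\!\big[\,\lvert\bar X - \tr(\Pi\rho)\rvert \geq \tau\,\big] \leq 2\exp(-N\tau^2/2)$. With the stated choice $N=\lceil 2\log(2/\delta)/\tau^2\rceil$ this is at most $\delta$. On the complementary event we have $\bar X > \tr(\Pi\rho)-\tau \geq (1-\epsilon)-\tau = 2^{-M(\mathcal{B})}$, so the decision rule ``declare entanglement whenever $\bar X > 2^{-M(\mathcal{B})}$'' succeeds with probability at least $1-\delta$, which is exactly the claim. I do not expect a serious obstacle here; the only mildly delicate points are recognizing $\Pi$ as the uniform Pauli average (so that a single Pauli measurement yields an unbiased $[-1,1]$-valued sample) and the lower bound $\tr(\Pi\rho)\geq 1-\epsilon$ via the variational formula rather than a cruder Hölder estimate, while matching the displayed constant $2\log(2/\delta)$ simply forces the two-sided rather than one-sided Hoeffding bound.
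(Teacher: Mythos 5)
Your proposal is correct and follows essentially the same route as the paper: expand $\Pi$ as the uniform average $\frac{1}{\abs{G}}\sum_{P\in G}P$, estimate $\tr(\Pi\rho)$ by measuring uniformly random stabilizers, lower-bound $\tr(\Pi\rho)\geq 1-\epsilon$ from $\tr(\Pi\ketbra{\psi})=1$ and the trace-distance assumption, and apply two-sided Hoeffding with resolution $\Delta=1-(\epsilon+2^{-M(\mathcal{B})})$ to get the stated $N$. The only cosmetic difference is that you make the variational characterization of trace distance and the final decision rule explicit, which the paper leaves implicit.
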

\begin{proof}
If $\frac{1}{2} \norm{\rho - \ketbra{\psi}}_1 \leq \epsilon$, we have $\abs{\tr(\mathcal{W} \ketbra{\psi}) - \tr(\mathcal{W} \rho)} = \abs{\tr(\Pi \ketbra{\psi}) - \tr(\Pi \rho)} \leq \epsilon$. Since $\tr(\Pi \ketbra{\psi}) = \tr(\Pi^2 H) = \tr(\Pi H) = \tr(\ketbra{\psi}) = 1$, we must then have $\tr(\Pi \rho) \geq 1- \epsilon$. The required resolution $\Delta$ to which we need to measure the entanglement witness, in order to rule out biseparability across any of the bipartitions in $\mathcal{B}$, is $\Delta = 1-(\epsilon+2^{-M(\mathcal{B})})$. 

To estimate $\tr(\Pi \rho)$, we use the fact that $\Pi$ can be expanded as a linear combination $\abs{S}^{-1} \sum_{P_j \in G} P_j$. We can construct an unbiased estimator $\tilde{\Pi}_j$ for $\tr(\Pi \rho)$ by measuring a random Pauli observable $P_j \in G$ for $\rho$. We note that randomly selecting $P_j \in G$ can be done simply by choosing a subset of generators from $S$, and multiplying them together. Equivalently, $P_j = \prod_{i=1}^{\abs{S}} g_i^{x_i}$, where $x_i \sim \text{Bernoulli}(1/2)$. Finally, since $-1 \leq \tilde{\Pi}_j \leq 1$, we can use Hoeffding's inequality to find that the estimator $\tilde{\Pi} \coloneqq N^{-1} \sum_{j=1}^N \tilde{\Pi}_j$ obeys
\begin{equation}
    \Pr[\abs*{\tilde{\Pi} - \tr(\Pi \rho)} \geq \Delta] \leq 2\exp(-\frac{N \Delta^2}{2}),
\end{equation}
from which we can conclude that $N = \left\lceil\frac{2 \log(2/\delta)}{\Delta^2}\right\rceil$ samples suffices to limit the failure probability to $\delta$.
\end{proof}

\section{The entanglement-dominated phase vs. magic-dominated phase}\label{sec:ed-vs-md}

We now apply all our findings to characterize the computational separation between entanglement-dominated states and magic-dominated states. To achieve this, let us examine the separation from the bottom-up. The main results of the previous section (e.g., \cref{thm:stab-approx,thm:distill,thm:dilution,thm:multipartite,th:multipartiteEentwitness}) have a common formal similarity: for almost all entanglement characterization and manipulation tasks, the stabilizer entanglement $\mathcal{E}$ can be used to replace the true entanglement entropy $S_\alpha$, up to an additive correction $\propto \nu$. This simple fact proves to be extremely useful, as everything about the stabilizer entanglement (through the generating set $S_{AB}$) can be computed and manipulated using a polynomial amount of classical or quantum resources. Of course, this comes with the drawback that we incur an error or inefficiency (for entanglement characterization or manipulation, respectively) of size $O(\nu)$. This led us to draw a fairly natural distinction between the two classes, or phases, of states, ED and MD. We find an operational distinction between these two phases. As we will show in this section, if we restrict ourselves to stabilizer protocols, we can achieve essentially optimal performance for ED states on a number of entanglement related tasks. Conversely, there are almost always clear counterexamples that prove stabilizer protocols are completely ineffective for entanglement related tasks on MD states. We should qualify this claim: we are only able to exhibit counterexamples in the magic-dominated phase for any $\nu=\omega(\log n)$. For MD states with $\nu=O(\log n)$ or $S_1=O(\log n)$, we are unable to rule out the possibility of efficient protocols for the simple reason that these states are both efficiently learnable and classically simulable. Indeed, it is reasonable to expect that for these states, almost any entanglement-related task can be done efficiently simply by brute force inspection of the classical description of the state. To begin with, let us provide a characterization of the entanglement-dominated phase in terms of the stabilizer entanglement (see \cref{def:stab-ent}), which proves to be instrumental in the efficient detection of ED states, as demonstrated later in \cref{th:distinction}.
\begin{corollary}[Characterization of the ED phase] Let $A|B$ be a bipartition. A pure state $\ket{\psi}$ is entanglement-dominated if and only if $\mathcal{E}(\psi;A|B)=\omega(\nu)$.
\begin{proof}
Recalling \cref{def:ent-dom}, ED states are defined by the condition $S_{1}(\psi_A)=\omega(\nu)$. Then, from \cref{thm:stab-approx}, it is immediate to see that an analogous condition is $\mathcal{E}(\psi;A|B)=\omega(\nu)$.
\end{proof}
\end{corollary}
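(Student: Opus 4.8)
The plan is to deduce the equivalence directly from the tight two-sided approximation established in \cref{thm:stab-approx}, which for the von Neumann entropy states that $\mathcal{E}(\psi;A|B) - \nu/2 \leq S_1(\psi_A) \leq \mathcal{E}(\psi;A|B) + \nu/2$. In words, the true entanglement entropy and the stabilizer entanglement never differ by more than $\nu/2$. Since the defining condition for the ED phase (\cref{def:ent-dom}) is $S_1(\psi_A) = \omega(\nu)$, the entire argument reduces to showing that this additive discrepancy of order $\nu$ is asymptotically irrelevant when either quantity is compared against $\nu$ itself.

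For the forward direction, I would assume $\ket{\psi}$ is entanglement-dominated, so $S_1(\psi_A) = \omega(\nu)$. The lower bound in \cref{thm:stab-approx} gives $\mathcal{E}(\psi;A|B) \geq S_1(\psi_A) - \nu/2$, hence
\begin{equation}
\frac{\mathcal{E}(\psi;A|B)}{\nu} \geq \frac{S_1(\psi_A)}{\nu} - \frac{1}{2} \longrightarrow \infty,
\end{equation}
which is precisely the statement $\mathcal{E}(\psi;A|B) = \omega(\nu)$. The converse is entirely symmetric: assuming $\mathcal{E}(\psi;A|B) = \omega(\nu)$, the upper bound rearranges to $S_1(\psi_A) \geq \mathcal{E}(\psi;A|B) - \nu/2$, so that $S_1(\psi_A)/\nu \geq \mathcal{E}(\psi;A|B)/\nu - 1/2 \to \infty$, giving $S_1(\psi_A) = \omega(\nu)$ and thus that the state is entanglement-dominated.

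There is no substantive obstacle here beyond making the asymptotic bookkeeping explicit: the only point to verify is that the class $\omega(\nu)$ is stable under additive shifts by $O(\nu)$, i.e.\ that $f = \omega(\nu)$ implies $f \pm c\nu = \omega(\nu)$ for any constant $c$, which is immediate from the definition $\lim_{n\to\infty} f/\nu = \infty$. The real work has already been done upstream in \cref{thm:stab-approx}, whose proof rests on the algebraic structure of $\nu$-compressible states (\cref{thm:pure-struct,lem:salpha-bound}); this corollary simply repackages that additive bound as a clean asymptotic dichotomy, which is what makes it a convenient criterion for the efficient phase-testing algorithm referenced in \cref{th:distinction}.
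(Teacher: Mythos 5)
Your proof is correct and follows exactly the route the paper intends: it applies the two-sided bound $\mathcal{E}(\psi;A|B)-\nu/2 \leq S_1(\psi_A) \leq \mathcal{E}(\psi;A|B)+\nu/2$ from \cref{thm:stab-approx} and notes that the $\omega(\nu)$ condition is insensitive to additive $O(\nu)$ shifts; the paper's own proof is just a one-line version of this same argument. (Only a cosmetic slip: in the forward direction the inequality $\mathcal{E} \geq S_1 - \nu/2$ comes from rearranging the \emph{upper} bound on $S_1$, not the lower one, but the mathematics is unaffected.)
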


To appreciate the significance of the ED phase, we start with an example that illustrates the extreme robustness of entanglement within the entanglement-dominated phase.

\begin{example}[Robustness of entanglement for entanglement-dominated states]
Consider a volume-law ED state $\ket{\psi}$. Let us fix a cut $A|B$ with $n_A,n_B=\Omega(n)$. By definition, this means that for some constant $\kappa>0$ and a large enough $n$, $S_{1}(\psi_A) \geq \kappa n$. For any (possibly mixed) state $\rho$, if
\be
\frac{1}{2}\|\rho-\ketbra{\psi}{\psi}\|_1<1-o\left(\frac{1}{\poly n}\right) \implies \rho\not\in\sep^{(E)}(\mathscr{H}_{A}; \mathscr{H}_{B}),
\ee
for any $E=\kappa n-o(n)$. This is to say, $\rho$ must also be a volume-law entangled state.

\begin{proof}
The proof is a straightforward consequence of \cref{cor:Erobustnessmultipartite}. Since $\mathcal{E}(\psi;A|B)\ge \kappa n-\nu/2$, so $M \ge \kappa n-\nu$. Let $\delta$ be any function satisfying $\nu+\omega(\log n)<\delta<o(n)$ (since $\nu=o(n)$, this is always possible). Then, choosing $E=\kappa n - \delta$, we then have $M-E > \omega(\log n)$. In virtue of \cref{cor:Erobustnessmultipartite}, it follows that:
\be
\frac{1}{2} \norm{\rho-\ketbra{\psi}{\psi}}_1<1-o\qty(\frac{1}{\poly n}) \implies \rho\not\in\sep^{(E)}(\mathscr{H}_{A}; \mathscr{H}_{B}),\,\, E= \kappa n-o(n),
\ee 
which concludes the proof.
\end{proof}
\end{example}
In practice, this example means that for any convex decomposition of $\rho=\sum_{i}p_i\ketbra{\phi_i}{\phi_i}$, there exists at least one state $\ket{\phi_i}$ whose reduced density matrix $\phi_{i,A}$ satisfies $S_{1/2}(\phi_{i,A})\ge\kappa n-o(n)$. In other words, volume law entanglement-dominated states feature robust bipartite entanglement, as one can draw a ball of \textit{almost maximal radius} (i.e., close to $1$ up to a quickly decaying factor) around $\ket{\psi}$, inside which every state has volume-law entanglement.

In what follows, we will show that almost any entanglement-related task can be solved efficiently and is nearly optimal for any state in the ED phase. Given the desirable features of these ED states, we anticipate that one might wonder whether it is possible to detect whether a particular state is ED or not. We resolve this question later in \cref{th:distinction}, where we show that indeed we can verify whether an unknown state is in the ED phase or not. 

\medskip 
\parhead{Entanglement characterization in the ED phase.} A common task in entanglement characterization is checking the scaling of a state's entanglement. For instance, volume law states are characterized by an entanglement that scales with $\Theta(n_A)$. This task of \emph{entanglement characterization} can then be formalized as follows. Fix a bipartition $A|B$, and consider an entanglement class $f(n_A)$. Given a state $\ket{\psi}$, the aim is simply to check whether $S_\alpha(\psi_A) = \Theta(f(n_A))$. In general, this is not possible to do efficiently. Consider for instance setting $\alpha=2$ and $f(n_A) \sim n_A$ (which simply means we are checking whether a state is volume-law entangled or not). Without any prior knowledge about the state $\psi$, this necessitates checking whether $\Tr(\psi_A^2) = \exp(-\Omega(n_A))$. Although $\Tr(\psi_A^2)$ can be estimated using just two copies of $\psi$ and a simple swap test, exponentially many (in $n_A$) repeated measurements are necessary to resolve $\Tr(\psi_A^2)$ to the required level of accuracy to determine whether it is exponentially small. For anything but a small bipartition $n_A$, this quickly becomes infeasible. In contrast, we show below that for entanglement-dominated states, the task of entanglement characterization comes easy. 

\begin{lemma}[Efficient entanglement characterization]\label{eq:detectability}
Let us fix a bipartition $A|B$, and take an entanglement class $f(n_A)$. For any ED state $\psi$, we can check whether $S_\alpha(\psi_A) = \Theta(f(n_A))$ using polynomially many copies of $\psi$ and polynomial classical computing resources. Moreover, if indeed $S_\alpha(\psi_A) = c f(n_A) + o(f(n_A))$ for some $c$, we can estimate the coefficient of proportionality $c$ with an asymptotically vanishing error as $n_A \to \infty$.
\end{lemma}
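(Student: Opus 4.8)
The plan is to reduce the estimation of $S_\alpha(\psi_A)$ to the computation of the stabilizer entanglement $\mathcal{E}(\psi;A|B)$, which is an efficient and purely classical calculation once the stabilizer group is known, and then to invoke \cref{thm:stab-approx} to control the resulting error. The conceptual engine is that \cref{thm:stab-approx} converts an \emph{additive} $O(\nu)$ approximation into a \emph{relative} $o(1)$ approximation exactly in the ED regime, since there $\nu = o(\mathcal{E}(\psi;A|B))$.

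First I would obtain a generating set $S$ for the stabilizer group $G$ of the unknown state $\psi$. As discussed in \cref{sec:unknownstates} and \cite{grewal_efficient_2023}, this can be accomplished from polynomially many copies of $\psi$ with polynomial classical post-processing; this is the only step that consumes copies of the state, and it simultaneously yields the nullity $\nu = n - \abs{S}$. From $S$, using the tableau projection-and-kernel computations described in the preliminaries (which produce tableaux for the local generating sets $S_A$ and $S_B$), I can compute $\abs{S_A}$ and $\abs{S_B}$ and hence
\begin{equation}
    \mathcal{E}(\psi;A|B) = \frac{\abs{S} - \abs{S_A} - \abs{S_B}}{2}
\end{equation}
in $O(n^3)$ classical time. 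I then output the estimate $\tilde{S}(\psi_A) \coloneqq \mathcal{E}(\psi;A|B)$.

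To bound the error I would apply \cref{thm:stab-approx}, which gives $\abs{S_\alpha(\psi_A) - \mathcal{E}(\psi;A|B)} \leq 3\nu/2 = O(\nu)$ for every $\alpha$. Since $\psi$ is entanglement-dominated, $S_1(\psi_A) = \omega(\nu)$ by \cref{def:ent-dom}, and the von Neumann bound $\abs{S_1(\psi_A) - \mathcal{E}(\psi;A|B)} \leq \nu/2$ then forces $\mathcal{E}(\psi;A|B) = \omega(\nu)$, i.e. $\nu = o(\mathcal{E}(\psi;A|B))$. Consequently $S_\alpha(\psi_A) = \mathcal{E}(\psi;A|B)(1 + o(1))$, so $\tilde{S}(\psi_A)$ has $o(1)$ relative error. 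This settles the characterization task, because $S_\alpha(\psi_A) = \Theta(f(n_A))$ if and only if $\mathcal{E}(\psi;A|B) = \Theta(f(n_A))$, a comparison I can decide exactly from the classically computed value of $\mathcal{E}(\psi;A|B)$. For the coefficient, if $S_\alpha(\psi_A) = c f(n_A) + o(f(n_A))$ with $c > 0$, then $\mathcal{E}(\psi;A|B) = \Theta(f(n_A))$, so $\nu = o(f(n_A))$, and the estimator $\tilde{c} \coloneqq \mathcal{E}(\psi;A|B)/f(n_A)$ satisfies $\abs{\tilde{c} - c} \leq (3\nu/2)/f(n_A) + o(1) \to 0$ as $n_A \to \infty$.

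The hard part is not the calculation: once $S$ is in hand, everything is deterministic $O(n^3)$ linear algebra over $\mathbb{F}_2$ followed by a single appeal to \cref{thm:stab-approx}. The genuine obstacle is justifying the very first step, namely that the stabilizer group can be extracted from copies of an \emph{arbitrary} ED state using only polynomially many samples, \emph{even when} $\nu = \omega(\log n)$ lies far beyond the classical-simulability threshold. I would address this by invoking the learning guarantees of \cref{sec:unknownstates} and \cite{grewal_efficient_2023}, which target the stabilizer group itself rather than a full classical description of the state; granting these, the remaining content of the lemma is precisely the additive-to-relative error conversion of \cref{thm:stab-approx} combined with the defining inequality $\nu = o(\mathcal{E}(\psi;A|B))$ of the ED phase.
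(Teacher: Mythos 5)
Your proposal is correct and follows essentially the same route as the paper's proof: learn the stabilizer group via \cref{thm:stab-learn}, compute $\mathcal{E}(\psi;A|B)$ classically, and use \cref{thm:stab-approx} together with the ED condition $\nu = o(\mathcal{E}(\psi;A|B))$ to convert the additive $O(\nu)$ error into an $o(1)$ relative error, from which both the $\Theta(f(n_A))$ test and the coefficient estimate follow. The extra detail you supply on the tableau computations and on why the learning step works beyond the classical-simulability threshold is consistent with what the paper defers to \cref{sec:unknownstates}.
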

\begin{proof}
This follows simply by combining two facts. First, the stabilizer entanglement $\mathcal{E}(\psi; A|B)$ is measurable with polynomially many copies of $\psi$ and polynomial classical computing resources (see \cref{thm:stab-learn}). Second, by \cref{thm:stab-approx}, the difference between stabilizer entanglement and entanglement entropy is at most $O(\nu)$. Since $\mathcal{E}(\psi;A|B) = \omega(\nu)$ by assumption, see \cref{def:ent-dom}, we have that $S_\alpha(\psi_A) = \Theta(f(n_A)) \iff \mathcal{E}(\psi;A|B) = \Theta(f(n_A))$. That is, $\mathcal{E}(\psi;A|B) = \Theta(f(n_A))$ is both necessary and sufficient to conclude that $S_\alpha(\psi_A) = \Theta(f(n_A))$ as well.

To show the second claim, we again apply \cref{thm:stab-approx} to conclude that 
\begin{equation}
    \abs{\frac{S_\alpha(\psi_A)}{f(n_A)} - \frac{\mathcal{E}(\psi;A|B)}{f(n_A)}} = \frac{o(f(n_A))}{f(n_A)} = o(1),
\end{equation}
so that $\frac{\mathcal{E}(\psi;A|B)}{f(n_A)}$ serves as an estimate for $c$ with asymptotically vanishing error.
\end{proof}
An immediate corollary is that we can easily distinguish between volume law entanglement and sub-volume law entanglement. We simply set $f(n_A) = n_A$ and calculate $c$ using the machinery of the above proof. When $\nu=o(n_A)$, we will have that $c=\Theta(1)$ for volume-law states and $c=o(1)$ for sub-volume law states. This sharp divide allows us to quickly determine whether a state is volume-law or not.

A considerably simpler entanglement characterization task is merely witnessing entanglement. In the ED phase, the entanglement witnessing becomes effective and efficient. This efficiency stems from \cref{th:multipartiteEentwitness}, which is extremely versatile and can be applied to the following tasks: bipartite entanglement witnessing, bipartite $E$-entanglement witnessing, multipartite entanglement witnessing, and multipartite $E$-entanglement witnessing. In all these tasks, as shown in \cref{lem:efficienwitnessing}, the entanglement can be witnessed simply by measuring the projector $\Pi$ associated with the stabilizer group $G$ of the state $\ket{\psi}$. Indeed, for some state $\rho$ that is an imperfectly prepared version of the state $\ket{\psi}$, it is sufficient to measure the overlap between $\Pi$ and the state $\rho$. If this overlap is larger than the number $2^{-M(\mathcal{B})+E}$, then we can conclude the state $\rho$ must have entanglement at least $S_{1/2} \geq E$.  Here, $M(\mathcal{B}):=\min_{i}[\mathcal{E}(\psi;A_i|B_i)-\nu/2]$ encodes the entanglement of $\ket{\psi}$ on the (possibly multipartite) partition $\mathcal{B}=\{(A_i,B_i)\,|\, i=1,\ldots, |\mathcal{B}|\}$ and $E$ is the level of entanglement of the set of states within $\mbox{SEP}^{(E)}$ defined in \cref{subsec:witnessingent}.

In the ED phase, the entanglement witness is extremely useful, and it is worth giving an explicit example below. Although we focus on the case of bipartite entanglement, the generalization to the multipartite case is straightforward.

\begin{example}
Consider a bipartition $A|B$, and let $\ket{\psi}$ be an ED state. By \cref{def:ent-dom}, one thus has $\mathcal{E}(\psi;A|B)=\omega(\nu)$ and therefore $M(\mathcal{B})=\mathcal{E}(\psi;A|B)(1-o(1))\ge\mathcal{E}(\psi;A|B)/2$. Since estimating the overlap $\tr(\Pi\rho)$ that can be done efficiently according to \cref{lem:efficienwitnessing}, one can efficiently witness any level of entanglement $E$ up to a threshold determined by the stabilizer entanglement $\mathcal{E}(\psi;A|B)$. In formula:
\begin{equation}
\tr(\Pi\rho) > 2^{E-\mathcal{E}(\psi;A|B)/2} \implies \rho\not\in \sep^{(E)}(\mathscr{H}_{A}; \mathscr{H}_{B}).\label{eq.condition}
\end{equation}
In other words, if the stabilizer entanglement $\mathcal{E}(\psi;A|B)$ is known, depending on the measured value of $\tr(\Pi\rho)$, one can rule out the possibility that $\rho$ has entanglement (as measured by $S_{1/2}$) less than $E$. The multipartite case follows easily by applying identical reasoning with $\min_i\mathcal{E}(\psi;A_i|B_i)$, that is, the minimum stabilizer entanglement across each cut $A_i|B_i$.
\end{example}

Despite our positive results for entanglement witnessing in the ED phase, unlike other entanglement-related tasks, we cannot prove any general hardness results for the MD phase. With that being said, note that in the MD phase, the entanglement witness presented in \cref{subsec:witnessingent} becomes completely ineffective: $\mathcal{M}(\mathcal{B})\le 0$ by definition of MD states. For any level of entanglement $E$, the relation $\tr(\Pi\rho)\le 1$ is always trivially satisfied and does not carry any useful information. 

\medskip 
\parhead{Entanglement manipulation in the entanglement-dominated phase.} Similar to entanglement characterization, entanglement manipulation is extremely effective for ED states.
\begin{lemma}[Optimal distillation]\label{lem:phasetransdistill}
For any state $\ket{\psi}$ in the ED phase, there exists a one-shot and zero error stabilizer LOCC protocol that distills a number $M_+$ of Bell pairs that satisfies
\begin{equation}
    \frac{M_+}{S_1(\psi_A)} = 1 - o(1).
\end{equation}
\end{lemma}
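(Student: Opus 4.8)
The plan is to obtain this lemma as an essentially immediate corollary of the distillation protocol of \cref{thm:distill} together with the approximation bound of \cref{thm:stab-approx}; the substantive work has already been done in those results, and what remains is only the asymptotic bookkeeping specific to the ED phase.

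First I would apply \cref{thm:distill} to the pure state $\ket{\psi}$. Since $\psi$ is pure we have $\rank \psi = 1$, hence $S_0(\psi)=0$, and the protocol produces
\[
M_+ = \left\lfloor \mathcal{E}(\psi;A|B) - \nu/2 \right\rfloor
\]
Bell pairs via a single pair of local Clifford unitaries $U_A \otimes U_B$. This already supplies the qualitative claims of the lemma for free: the protocol consumes only one copy of $\psi$ and outputs exact Bell pairs, so it is one-shot and zero-error, and it is manifestly a stabilizer LOCC protocol.

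Next I would convert the stabilizer entanglement into von Neumann entanglement using the tight bound from \cref{thm:stab-approx}, namely $\mathcal{E}(\psi;A|B) - \nu/2 \le S_1(\psi_A) \le \mathcal{E}(\psi;A|B) + \nu/2$. The upper bound on $S_1$ rearranges to $\mathcal{E}(\psi;A|B) \ge S_1(\psi_A) - \nu/2$, so that
\[
M_+ = \left\lfloor \mathcal{E}(\psi;A|B) - \nu/2 \right\rfloor \ge S_1(\psi_A) - \nu - 1 .
\]
The lower bound on $S_1$ rearranges to $\mathcal{E}(\psi;A|B) \le S_1(\psi_A) + \nu/2$, which gives $M_+ \le \mathcal{E}(\psi;A|B) - \nu/2 \le S_1(\psi_A)$, consistent with the distillability ceiling of \cref{cor:bounds}. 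Combining the two inequalities yields the sandwich
\[
1 - \frac{\nu + 1}{S_1(\psi_A)} \;\le\; \frac{M_+}{S_1(\psi_A)} \;\le\; 1 .
\]

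Finally, I would invoke the defining property of the ED phase (\cref{def:ent-dom}), $S_1(\psi_A) = \omega(\nu)$, which forces $(\nu+1)/S_1(\psi_A) = o(1)$; substituting this into the sandwich above gives $M_+/S_1(\psi_A) = 1 - o(1)$, as claimed. There is no genuine obstacle beyond confirming that the floor function and the additive $O(\nu)$ slack are both dominated by $S_1(\psi_A)$ in the ED regime — the heavy lifting lives entirely in \cref{thm:distill,thm:stab-approx}, and this lemma simply stitches them together.
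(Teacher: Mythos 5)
Your proposal is correct and follows essentially the same route as the paper: apply \cref{thm:distill} with $S_0(\psi)=0$ to get $M_+=\lfloor\mathcal{E}(\psi;A|B)-\nu/2\rfloor$, convert to $S_1(\psi_A)$ via \cref{thm:stab-approx}, and invoke $S_1(\psi_A)=\omega(\nu)$ from the ED definition. Your constants are in fact slightly tighter than the paper's ($M_+\ge S_1(\psi_A)-\nu-1$ versus the paper's $S_1(\psi_A)-2\nu-1$), and the explicit upper bound $M_+\le S_1(\psi_A)$ is a harmless extra check; neither difference is substantive.
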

\begin{proof}
We have already shown in \cref{thm:distill} that there always exists a stabilizer protocol that distills $M_+ = \lfloor \mathcal{E}(\psi;A|B) - \nu/2 \rfloor$ Bell pairs from any state. Using \cref{thm:stab-approx}, we see that $M_+ \geq \lfloor S_1(\psi_A) - 2\nu \rfloor \geq S_1(\psi_A) - 2\nu-1$. Since $\frac{\nu}{S_1(\psi_A)} = o(1)$ by definition for ED states, we have
\begin{equation}
    \frac{M_+}{S_1(\psi_A)} = 1 - o(1)
\end{equation}
in the ED phase.
\end{proof}
One can easily show a simple converse claim for the above lemma. Namely, there exist MD states $\ket{\psi}$ for which no stabilizer protocols can distill \emph{any} Bell pairs. This begins to suggest a computational phase transition between ED and MD states, but still relies on the restriction to \emph{stabilizer} LOCC protocols. Later, in \cref{thm:no-go-magic-dom}, we show a much stronger converse result that holds for \emph{all} LOCC protocols. For now, the proof idea for the simpler claim is as follows: if the circuit used to prepare $\ket{\psi}$ uses non-Clifford resources which destroy all the stabilizers in $S_{AB}$, no entanglement can be distilled via stabilizer LOCC. Concretely, consider a circuit where a Haar random unitary acts on $\min(n_A,n_B)<s<n$ qubits on both sides of the bipartition. This is an ensemble consisting almost entirely of MD states, as $\nu=\Theta(s)$ for all but a measure-zero set of states in this ensemble, while the entanglement is also $\Theta(s)$ with probability $1-o(1)$. Also, $\abs{S_{AB}}=0$ for all but a measure-zero set of states $\psi$ in this ensemble, while $S_1(\psi_A) \geq 1$ asymptotically almost surely. This is all to say, that there must exist some MD state $\ket{\psi}$ in the ensemble with $\abs{S_{AB}}=0$ but $S_1(\psi_A) \geq 1$. Since no stabilizer protocol can distill more than $\frac{\abs{S_{AB}}}{2}$ Bell pairs from $\ket{\psi}$ (\cref{cor:bounds}), there are no distillable Bell pairs from $\ket{\psi}$.

\begin{lemma}[Optimal dilution]\label{lem:phasetransdilu}
For any state $\ket{\psi}$ in the ED phase, there exists a stabilizer LOCC for dilution that achieves
\begin{equation}
    \frac{M_-}{S_1(\psi_A)} = 1 + o(1),
\end{equation}
where $M_-$ is the required number of Bell pairs.  Furthermore, the required number of bits of classical communication $N_{\mathrm{CC}}$ and local magic resource states $N_{\mathrm{magic}}$ both satisfy
\begin{equation}
    \frac{N_{\mathrm{CC}}}{S_1(\psi_A)},\, \frac{N_{\mathrm{magic}}}{S_1(\psi_A)} = o(1).
\end{equation}
\end{lemma}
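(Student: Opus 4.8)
\emph{Proof proposal.} The plan is to sandwich $M_-$ between $S_1(\psi_A)$ and $S_1(\psi_A) + \nu$, and then invoke the defining property of the ED phase, namely $\nu = o(S_1(\psi_A))$, to collapse the ratio to $1 + o(1)$. The entire argument is a direct substitution into results already proven, so no new machinery is required; the only point needing a moment of care is the lower bound.

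For the \emph{upper bound}, I would simply run the dilution protocol of \cref{thm:dilution}, which prepares $\ket{\psi}$ across $A|B$ using at most $\mathcal{E}(\psi;A|B) + \nu/2$ shared ebits. Feeding in the von Neumann approximation bound from \cref{thm:stab-approx}, $\mathcal{E}(\psi;A|B) \leq S_1(\psi_A) + \nu/2$, gives $M_- \leq S_1(\psi_A) + \nu$. For the \emph{lower bound}, I would appeal to entanglement monotonicity: the initial resources consist of $M_-$ Bell pairs, contributing exactly $M_-$ units of entanglement entropy across $A|B$, together with a magic resource state held entirely locally by one party, which contributes zero entanglement across the cut, plus classical communication, which cannot increase entanglement. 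Since the protocol is a pure-to-pure LOCC transformation whose output $\ket{\psi}$ has entanglement $S_1(\psi_A)$, and entanglement entropy is non-increasing under such transformations, we must have $M_- \geq S_1(\psi_A)$. Combining the two bounds yields $S_1(\psi_A) \leq M_- \leq S_1(\psi_A) + \nu$, so that
\begin{equation}
    1 \leq \frac{M_-}{S_1(\psi_A)} \leq 1 + \frac{\nu}{S_1(\psi_A)} = 1 + o(1),
\end{equation}
where the final equality uses the ED condition $S_1(\psi_A) = \omega(\nu)$ from \cref{def:ent-dom}.

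For the two auxiliary claims, I would read the costs directly off the dilution protocol. The classical communication is $N_{\mathrm{CC}} = \nu$ from \cref{thm:dilution}, and the local magic consumed is the residual state $\ket{\sigma'}$ supported on $O(\nu)$ qubits, so $N_{\mathrm{magic}} = O(\nu)$. Dividing each by $S_1(\psi_A) = \omega(\nu)$ gives $N_{\mathrm{CC}}/S_1(\psi_A),\, N_{\mathrm{magic}}/S_1(\psi_A) = o(1)$, as claimed. The main (mild) obstacle is the monotonicity step of the lower bound: one must confirm that the locally held magic state genuinely contributes no entanglement across $A|B$ and that the protocol can be framed as a pure-to-pure LOCC map, so that the entanglement entropy is a legitimate monotone; once that is granted, the result follows immediately from the substitutions above.
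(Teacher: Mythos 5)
Your proposal is correct and follows essentially the same route as the paper, whose proof is a one-line deferral to \cref{thm:dilution} and \cref{def:ent-dom}: you simply make explicit the substitution $M_- \leq \mathcal{E}(\psi;A|B)+\nu/2 \leq S_1(\psi_A)+\nu$ via \cref{thm:stab-approx}, the standard monotonicity lower bound $M_- \geq S_1(\psi_A)$ (which the paper also invokes elsewhere, citing Bennett et al.), and the ED condition $S_1(\psi_A)=\omega(\nu)$. The reading of $N_{\mathrm{CC}}=\nu$ and $N_{\mathrm{magic}}=O(\nu)$ off the dilution protocol likewise matches the intended argument.
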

\begin{proof}
    The proof follows straightforwardly from \cref{thm:dilution} and the definition of ED states (\cref{def:ent-dom}).
\end{proof}

\parhead{Hardness results in the magic-dominated phase.} While in previous sections, we showed the feasibility of basically every entanglement task in the ED phase, this section shows the converse: namely, entanglement tasks become hard in the MD phase. Specifically, we show that there is no sample-efficient algorithm for entanglement estimation in the MD phase, and furthermore that there is no distillation protocol that can distill even a constant fraction of the entanglement into Bell pairs. Remarkably, while for the ED phase, one-shot and zero-error stabilizer LOCC protocols have near-optimal performance, below we rule out any sample-efficient LOCC protocol for the MD phase, even those that are allowed multiple rounds of classical communication or multiple copies of the input state. These findings suggest an information-theoretic separation between the two phases. We note that states with $\nu=O(\log n)$ can be learned efficiently, and therefore the computation of the entanglement entropy for these states is efficient. This means that our hardness results, while they rule out entanglement manipulation protocols that work well on \emph{all} MD states, do not necessarily rule out efficient protocols which are specifically tailored to work on the class of MD states with $\nu=O(\log n)$.
\begin{lemma}[Hardness of entanglement tasks for the magic-dominated phase]\label{thm:no-go-magic-dom}
Let $A|B$ be any bipartition satisfying $n_A,n_B=\omega(\log n)$. Let $S_1(\psi_A)=\omega(\log n)$.
\begin{enumerate}
    \item There is no sample-efficient protocol which can estimate $S_1(\psi_A)$ to within $o(1)$ relative error for all MD states $\psi$.
    \item There is no efficient LOCC protocol which can distill even a constant fraction of an arbitrary MD state into Bell pairs. In other words, the number of Bell pairs $M$ it distills from a general MD state must be $M/S_1(\psi_A) = o(1)$. This holds even if the protocol is given access to polynomially many copies of the input state.
    \item There is no efficient LOCC protocol which requires $M_-/S_1(\psi_A)=O(1)$ Bell pairs to prepare a general MD state $\ket{\psi}$ across $A|B$, even if either Alice or Bob (i.e., the parties on either side of the bipartition) has \emph{local} access to polynomially many copies of $\ket{\psi}$. In other words, any efficient LOCC dilution protocol requires $M_-/S_1(\psi_A) = \omega(1)$. This holds even if the protocol is given access to polynomially many copies of the input state.
\end{enumerate}
Moreover, these no-go results hold even if the protocols are guaranteed a priori that the input MD states all obey $\nu=\Theta(f(n))$, for any $f(n) =\omega(\log n)$.
\end{lemma}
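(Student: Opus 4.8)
The plan is to prove all three no-go results with a single family of hard instances built from \emph{pseudoentangled} states \cite{aaronson_quantum_2023}, reducing each hypothetical efficient protocol to a distinguisher that violates the computational indistinguishability of these states from Haar-random states.

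\textbf{Construction of the hard instances.} First I would invoke the random subset-phase pseudoentangled ensembles of \cite{aaronson_quantum_2023}: for a parameter $m$ these furnish two families of pseudorandom $m$-qubit states that are computationally indistinguishable (given any polynomial number of copies) from each other and from Haar-random states, yet whose entanglement across a fixed middle cut can be tuned to a ``high'' value $\Theta(m)$ or to any ``low'' value $g$ in the range $\omega(\log m) \le g = o(m)$. Being pseudorandom, these states carry no nontrivial Pauli stabilizers outside a measure-zero set, so their stabilizer nullity is $\Theta(m)$. I set $m=\Theta(f(n))$, place $\lceil m/2\rceil$ core qubits in $A$ and $\lfloor m/2\rfloor$ in $B$, choose the low value to be some $g(n)$ with $\omega(\log n)\le g(n)=o(f(n))$ (which exists precisely because $f(n)=\omega(\log n)$), and pad with a fixed product stabilizer state $\ket{0}^{\otimes(n-m)}$ split across both sides so that $n_A,n_B=\omega(\log n)$. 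The padding contributes neither magic nor cross-cut entanglement, and a distinguisher can always discard it, so the resulting $n$-qubit states inherit both the entanglement values and the indistinguishability of the core while having $\nu=\Theta(f(n))$ and $S_1(\psi_A)$ equal to the core entanglement. Since $S_1(\psi_A)=O(f(n))=O(\nu)$ in both families, every such state is magic-dominated with $\nu=\Theta(f(n))$, as demanded by the ``moreover'' clause, and both families satisfy $S_1(\psi_A)=\omega(\log n)$. Crucially I do \emph{not} pad with Bell pairs: adding stabilizer entanglement would make the efficiently measurable quantity $\mathcal{E}(\psi;A|B)$ (see \cref{thm:stab-learn}) dominate $S_1(\psi_A)$ and wash out the pseudoentanglement gap, which is exactly why the encoding succeeds only in the MD phase.

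\textbf{The three reductions.} For claim 1, suppose an efficient protocol estimates $S_1(\psi_A)$ to $o(1)$ relative error on all such MD states. Running it yields $\Theta(f(n))$ on the high family and $g(n)=o(f(n))$ on the low family; since these differ by more than any constant factor, the estimator separates the two families, contradicting indistinguishability. For claim 2, I use that the distillable entanglement of a pure state is at most its entanglement entropy \cite{bennett1996concentrating,horodecki_quantum_2009}, so on the low family no LOCC can extract more than $S_1(\psi_A)=o(f(n))$ genuine Bell pairs. A protocol distilling a constant fraction on all MD states would designate $M=\Omega(f(n))$ output pairs on the high family; verifying them by measuring the $2M$ Bell-pair stabilizers $X_{A_i}X_{B_i},Z_{A_i}Z_{B_i}$ with the efficient Pauli-sampling estimator of \cref{lem:efficienwitnessing} accepts on the high family but must reject on the low one, again an efficient distinguisher. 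For claim 3, the dilution no-go, I invoke the dual bound that the entanglement cost of a pure state is at least $S_1(\psi_A)$, so \emph{any} LOCC preparation of $\ket{\psi}$ across $A|B$ consumes $M_-\ge S_1(\psi_A)$ ebits. If an efficient protocol achieved $M_-=O(S_1(\psi_A))$ for all MD states, then $M_-=\Theta(S_1(\psi_A))$, whence the observable number of consumed ebits equals $\Theta(f(n))$ on the high family and $o(f(n))$ on the low family; reading it off distinguishes the families. As the protocol is efficient and uses polynomially many copies (even when one party holds local copies of $\ket{\psi}$), this too contradicts indistinguishability, yielding the first dilution no-go from pseudoentanglement.

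\textbf{Main obstacle.} I expect the delicate step to be the construction itself: simultaneously hitting the prescribed nullity $\nu=\Theta(f(n))$, keeping both entanglement values above $\omega(\log n)$ (to evade the swap test) yet at most $O(\nu)$ (to stay genuinely MD), and confirming that the $\ket{0}$-padding preserves pseudorandomness while leaving all cross-cut entanglement in the non-stabilizer sector. The conceptual crux is that this is possible \emph{only} in the MD regime, since inflating the instances toward the ED phase reintroduces an efficiently measurable stabilizer entanglement that closes every gap used above. The remaining novelty, specific to claim 3, is recognizing the consumed-ebit count as the correct distinguishing statistic and certifying its size via the entanglement-cost lower bound.
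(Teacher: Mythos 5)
Your proposal follows essentially the same strategy as the paper's proof: embed subset-phase pseudoentangled ensembles into the MD phase (padding with $\ket{0}$'s so that injecting stabilizer entanglement never washes out the gap), and convert any hypothetical estimator, distiller, or diluter into a distinguisher between a high- and a low-entanglement family. The three reductions you give match the paper's almost step for step, and your idea of certifying the distiller's output by measuring the $2M$ Bell-pair stabilizers is actually a more careful version of the paper's informal ``count the Bell pairs'' distinguisher. Two points, however, need repair. First, you invoke the \emph{computationally} indistinguishable pseudorandom families of \cite{aaronson_quantum_2023}, but item~1 of the lemma asserts that no \emph{sample-efficient} protocol exists; computational pseudorandomness cannot rule out a sample-efficient but time-unbounded estimator, and it would make the whole lemma conditional on cryptographic assumptions. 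The paper instead uses truly random boolean functions $f$ and random subsets $S$ of superpolynomial size, for which the $m$-copy ensemble averages are within $o(1/\poly n)$ in trace distance of the Haar ensemble, so the indistinguishability is information-theoretic and the no-go is unconditional. Your construction should be stated with the statistical version (and with the high-entanglement family taken to be Haar on the core qubits, as the paper does, or the full phase-state family).

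Second, your justification that the subset-phase states have nullity $\Theta(m)$ --- ``no nontrivial Pauli stabilizers outside a measure-zero set'' --- does not apply: the subset-phase ensemble is a finite, discrete family, so measure-zero arguments say nothing. One needs the separate fact, cited in the paper from \cite{gu2023little}, that random subset-phase states have stabilizer nullity $\Theta(m)$ (equivalently, are MD) up to a negligible failure probability. Both issues are repairable without changing your architecture, and the rest of the argument --- including the entanglement-cost lower bound $M_- \geq S_1(\psi_A)$ used to certify the consumed-ebit count in the dilution reduction, and the observation that local copies held by one party contribute no cross-cut entanglement --- agrees with the paper's reasoning.
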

\begin{proof}
Let us set $\nu=\Theta(\log^{c}(n))$ for $c>1$. Define the following ensemble of states 
\be
\mathcal{E}_{\haar(\nu)}=\qty{\ket{\phi_{\haar}} \otimes \ket{0}^{\otimes(n-\nu)} \mid \ket{\phi_{\haar}} \in \haar_\nu}
\ee
where $\haar_\nu$ is the $\nu$-qubit Haar ensemble. Similarly, we define
\be
\mathcal{E}_{f,S,\nu}=\qty{\ket{\phi_{f,S}} \otimes \ket{0}^{\otimes(n-\nu)} \mid \ket{\phi_{f,S}} \in \mathcal{E}_{f,S}(\nu)}
\ee
where $\mathcal{E}_{f,S}(\nu)$ is the ensemble of subset phase states defined on $\nu$ qubits, with $f$ being a random boolean function and $S$ being a random subset of $\qty{0,1}^\nu$~\cite{aaronson_quantum_2023}. Let us set $\abs{S}=\Theta(\exp\log^{c'}(n))$ with $1<c'<c$. 

Now, assume WLOG that our bipartition $A|B$ has $A$ being the first $\nu/2$ qubits and $B$ being the rest. Note that $\ket{\psi}\in \mathcal{E}_{\haar(\nu)}$ has $S_{1}(\psi_A)=\Theta(\nu)=\Theta(\log^c(n))$ up to negligible failure probability, while $\ket{\psi}\in \mathcal{E}_{f,S,\nu}$ obeys $S_{1}(\psi_A)=\Theta(\log \abs{S})=\Theta(\log^{c'}(n))$ up to negligible failure probability. We also note that the states in both ensembles $\mathcal{E}_{\haar(\nu)}$ and $\mathcal{E}_{f,S,\nu}$ are MD -- that is, these states obey $S_{1}(\psi_A)=O(\nu)$ up to a negligible failure probability~\cite{gu2023little}. Finally, as shown in Ref. \cite{aaronson_quantum_2023}:
\be
\norm{\mathbb{E}_{\ket{\psi} \sim \mathcal{E}_{f,S,\nu}} \qty[\ketbra{\psi}^{\otimes m}] - \mathbb{E}_{\ket{\psi} \sim \mathcal{E}_{\haar(\nu)}} \qty[\ketbra{\psi}^{\otimes m}]}_1 \le o\qty(\frac{1}{\poly n})\qc \forall m = O(\poly n),
\ee
where we used that $\abs{S}=\Theta(\exp \poly\log n)$. This is all to say that the two ensembles are statistically indistinguishable, meaning that any quantum algorithm that uses a polynomial number of copies of $\ket{\psi}$ cannot distinguish whether a given state belongs to $\mathcal{E}_{\haar(\nu)}$ or $\mathcal{E}_{f,S,\nu}$. 
\begin{enumerate}
    \item We first show that there cannot be any sample efficient algorithm estimating $S_{1}(\psi_A)$ up to a relative error $o(1)$. Assume, for the sake of contradiction, that there does exist such an algorithm. However, this would enable one to distinguish between states drawn from the two ensembles $\mathcal{E}_{f,S,\nu}$ and $\mathcal{E}_{\haar(\nu)}$ as follows. We would simply apply the algorithm to estimate the entanglement of the given state, and if the estimated entanglement were $\Theta(\log^{c'} n)$, we could conclude the state belonged to $\mathcal{E}_{f,S,\nu}$ (since the relative error is $o(1)$) and otherwise infer it belonged to $\mathcal{E}_{\haar(\nu)}$.
    \item We now show that there is no sample-efficient (hence no efficient) algorithm that distills more than $M$ Bell pairs obeying $M/S_1(\psi_A)=\Omega(1)$. Assume for the sake of contradiction that such an algorithm exists. One could use such an algorithm as a distinguisher, simply by counting the number of Bell pairs, as for any $\ket{\phi}\in\mathcal{E}_{f,S,\nu}$, this distiller must produce $\Theta(\log^{c'} n)$ Bell pairs, while for $\ket{\psi}\in\mathcal{E}_{\haar(\nu)}$, it must produce $\Theta(\log^c n)$ Bell pairs. Given that $c^{\prime}<c$, the distinguisher can actually discern between the two ensembles, thus we have arrived at a contradiction. Therefore, the maximal number $M$ of distillable Bell pairs obeys $M/S_1=o(1)$. In the extreme case when $\nu=n$, this inequality grows to $M/S_1=O(\log^c n/n)$ for any $c>1$.
    \item Finally, to show that any entanglement dilution protocol requires $M/S_1=\omega(1)$ Bell pairs, assume for the sake of contradiction that there exists a LOCC dilution protocol that requires $M_-/S_1=O(1)$ Bell pairs to prepare a general MD state $\psi$. This implies that, if Alice is given access to polynomial copies of a pseudorandom state $\psi \sim \mathcal{E}_{f,S}$ (with $\nu=n$ for simplicity -- the general case of $\nu=\Theta(f(n))$ follows similarly), Alice and Bob can together prepare $\psi$ across the bipartition $A|B$ using $M=O(S_1(\psi))=O(\poly \log n)$ Bell pairs. Similarly, if Alice is given access to polynomial copies of a Haar random state $\psi \sim \mathcal{E}_{\haar}$, Alice and Bob can prepare $\psi$ across the bipartition $A|B$ using $M=\Theta(n)$ Bell pairs (the lower bound is because $\psi$ has $\Omega(n)$ entanglement with overwhelming probability). However, this leads to a contradiction; we can construct a distinguisher between $\mathcal{E}_{f,S}$ and $\mathcal{E}_{\haar}$ as follows. We run the dilution protocol, supplied with $\Theta(n^2)$ Bell pairs. If the dilution protocol is given a pseudorandom state $\psi$ (or polynomially copies thereof), only $O(\poly \log n)$ of the Bell pairs should be consumed, while if it is given a Haar random state, $\Theta(n)$ Bell pairs should be consumed. The number of consumed Bell pairs at the end of the protocol can be found using a simple swap test on each of the supplied Bell pairs. This would then allow us to distinguish whether a pseudorandom state or Haar random state was input, leading to a contradiction.
\end{enumerate}
\end{proof}

As a consequence of the above proof, we can give strong bounds for general entanglement dilution protocols, analogous to the bounds for entanglement distillation given in Ref.~\cite{aaronson_quantum_2023}.
\begin{corollary}[No-go for efficient entanglement dilution]\label{cor:hardnessidulution}
Given local query access to polynomially many copies of an unknown state $\ket{\psi}$, the minimum number of Bell pairs needed to dilute $\ket{\psi}$ across $A|B$ is $M_{-}=\Omega(\exp(S_1(\psi_A)^{1/c}))$ for any $c > 1$.
\begin{proof}
In the extreme case of the proof of Item 3 in \cref{thm:no-go-magic-dom}, when we set $\nu=n$, we must have $M_{-}=\Theta(n)$, while $S_1(\psi_A)=\Theta(\log^c n)$ with $c>1$ for pseudoentangled states, so $M_- = \Omega(\exp(S_1(\psi_A)^{1/c}))$ for $c>1$.
\end{proof}
\end{corollary}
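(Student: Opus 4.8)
The plan is to specialize the hardness argument of Item 3 in \cref{thm:no-go-magic-dom} to its extreme case $\nu = n$ and then invert the scaling relationship between the entanglement entropy $S_1(\psi_A)$ and the system size $n$ for the pseudoentangled ensemble. Recall that the subset phase states $\mathcal{E}_{f,S}$ can be parameterized --- by choosing the subset size $\abs{S} = \exp(\Theta(\log^c n))$ --- so that they have entanglement $S_1(\psi_A) = \Theta(\log^c n)$ for a tunable exponent $c > 1$, while remaining statistically indistinguishable from Haar random states under any polynomial-copy measurement. This is exactly the regime in which Item 3 produces a contradiction.

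First I would establish that any efficient dilution protocol with local query access to polynomially many copies must consume $\Omega(n)$ Bell pairs even on a pseudoentangled input. The argument runs by contradiction through a Bell-pair-counting distinguisher: supply the protocol with $\Theta(n^2)$ Bell pairs, run it, and then use a swap test on each supplied pair to count how many were actually consumed. A Haar random state carries $\Omega(n)$ entanglement with overwhelming probability, and since for pure states the entanglement cost equals $S_1$, any LOCC dilution of such a state must consume $\Theta(n)$ Bell pairs. If the protocol could instead dilute a pseudoentangled state using asymptotically fewer Bell pairs, the consumed-pair count would separate the two ensembles, contradicting their statistical indistinguishability. Hence the protocol is forced to consume $M_- = \Omega(n)$ Bell pairs on the pseudoentangled states as well.

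Finally, I would translate this lower bound into the stated form by inverting the scaling. From $S_1(\psi_A) = \Theta(\log^c n)$ we obtain $\log n = \Theta(S_1(\psi_A)^{1/c})$, so that $n = \Theta(\exp(S_1(\psi_A)^{1/c}))$, yielding $M_- = \Omega(n) = \Omega(\exp(S_1(\psi_A)^{1/c}))$ for every $c > 1$.

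The main obstacle is the second step: one must argue carefully that the number of consumed Bell pairs is an \emph{efficiently} measurable quantity, so that it genuinely defines a valid distinguisher. The subtlety is that indistinguishability of $\mathcal{E}_{f,S}$ from the Haar ensemble only constrains polynomial-copy measurements, so the swap-test verification must itself be efficient --- which it is, since a swap test on each of the $\Theta(n^2)$ supplied pairs uses only polynomially many operations. Once this is in place, the lower bound $M_- = \Omega(n)$ for pseudoentangled inputs follows immediately from indistinguishability together with the $\Theta(n)$ entanglement cost of Haar states, and the inversion of the entropy scaling is routine.
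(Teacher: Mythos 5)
Your proposal is correct and follows essentially the same route as the paper: it invokes the extreme case $\nu=n$ of the Item-3 hardness argument (Bell-pair-counting distinguisher between the pseudoentangled and Haar ensembles forcing $M_-=\Omega(n)$), and then inverts $S_1(\psi_A)=\Theta(\log^c n)$ to obtain $M_-=\Omega(\exp(S_1(\psi_A)^{1/c}))$. The only difference is that you spell out the distinguisher details that the paper delegates to the proof of \cref{thm:no-go-magic-dom}.
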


\parhead{A computational phase transition between entanglement- and magic-dominated phases.} We are now in a position to synthesize all of our results to summarize the operational distinction between ED and MD states. We have shown in \cref{eq:detectability,lem:phasetransdistill} that entanglement characterization and distillation are efficient for ED states, while we show the converse in \cref{thm:no-go-magic-dom}. These two results are summarized in \cref{tab:transition}.

Another way of understanding this computational phase transition is through the lens of entanglement reversibility. Entanglement distillation allows us to transform (multiple copies of) a state $\rho$ to a number $M_+$ of Bell pairs, and entanglement dilution allows us to use a number of Bell pairs to prepare the state $\rho$ via LOCC. In particular, the minimum amount of Bell pair $M_-$ necessary to distill $\rho$ is named as \textit{entanglement cost}. The ratio $M_{+}/M_{-}$ quantifies the \textit{reversibility} of entanglement for the state $\rho$. For pure states, without any restriction to efficient (sample-wise and computationally) LOCC protocol, it has been shown that entanglement is always reversible, meaning that the ratio is always one~\cite{PhysRevA.53.2046}. However, when considering efficient and state-agnostic LOCC protocols, it has been demonstrated in \cite{aaronson_quantum_2023} that the maximal distillable entanglement is in general $M_{+}=O(\log^{2} S_{1}(\psi_A))$. Furthermore, in \cref{cor:hardnessidulution}, we demonstrated that $M_{-}=\Omega(\exp(S_{1}(\psi_A)^{1/2}))$ Bell pairs are necessary for every efficient state-agnostic dilution protocol. Combining these two results, it immediately follows that, for efficient and state-agnostic LOCC protocol, the reversibility ratio $M_+/M_-$ for pure states is always asymptotically vanishing in $n$ $M_{+}/M_{-}=o(1)$ (for every bipartition $A|B$ with $\omega(\log n)=n_A,n_B$). This shows that, even for pure states, entanglement manipulation is not computationally reversible when no further information about the state is available.
However, given partial information about the state, one might wonder if entanglement reversibility is somewhat restored. For instance, in the case of ED states, how does the reversibility ratio behave? In stabilizer states, the reversibility ratio is always one \cite{fattal_entanglement_2004}: since the entanglement spectrum of stabilizer states is flat, essentially the same one-shot stabilizer LOCC protocol can be used for both distillation and dilution. Below, we put together all our findings and prove that the separation between entanglement- and magic-dominated phases marks a sharp phase transition in entanglement reversibility under efficient and state-agnostic LOCC protocols.

\begin{corollary}[Phase transition in entanglement reversibility]\label{threversibilityphase}
Given an unknown state $\ket{\psi}$, let $M_{+}$ be the number of Bell pairs distillable by an efficient LOCC protocol, and let $M_{-}$ be the number of Bell pairs necessary to prepare $\ket{\psi}$ via LOCC. For any ED $\ket{\psi}$,
\begin{equation}
    \frac{M_{+}}{M_{-}}\ge1-o(1),
\end{equation}
while if $\ket{\psi}$ is MD, no efficient LOCC protocol can do better than
\begin{equation}
    \frac{M_{+}}{M_{-}}\le o(1).
\end{equation}
\begin{proof}
From \cref{thm:distill}, we know that there exists a state-agnostic one-shot protocol that distills a number $M_{+}$ of Bell pair given by $M_{+}\ge S_{1}(\psi;A|B)-\nu/2$. Conversely, from \cref{thm:dilution} we know that $M_{-}\le\mathcal{E}(\psi;A|B)+\nu/2$ Bell pairs are sufficient to prepare $\ket{\psi}$ via LOCC. If $\ket{\psi}$ is ED, then
\be
\frac{M_{+}}{M_{-}}\ge \frac{S_{1}(\psi;A|B)-\nu/2}{\mathcal{E}(\psi;A|B)+\nu/2}\ge \frac{S_{1}(\psi;A|B)-\nu/2}{S_{1}(\psi;A|B)+2\nu}=1-o(1)
\ee
Conversely, let $\ket{\psi}$ be a MD state. We know that $M_{-}\ge S_{1}(\psi;A|B)$ for any dilution protocol~\cite{PhysRevA.53.2046}, and from \cref{thm:no-go-magic-dom} we know that for any state-agnostic efficient protocol $M_{+}/S_{1}(\psi;A|B)=o(1)$. Therefore:
\be
\frac{M_{+}}{M_{-}}\le \frac{M_{+}}{S_{1}(\psi;A|B)}=o(1)
\ee
which concludes the proof. 
\end{proof}
\end{corollary}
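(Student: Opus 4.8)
The plan is to treat the two phases separately, since in each case the corollary reduces to combining an upper bound on one of $M_+,M_-$ with a lower bound on the other and then invoking the relevant definition from \cref{def:ent-dom}. The substantial technical work has already been done in \cref{thm:distill,thm:dilution,thm:no-go-magic-dom}; what remains is essentially bookkeeping of the $O(\nu)$ corrections and a careful reading of which bounds hold universally versus only in the worst case.

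For the ED direction, I would first lower-bound the distillable entanglement. By \cref{thm:distill} (specialized to pure states, so $S_0=0$) there is a one-shot, zero-error stabilizer LOCC protocol achieving $M_+ \ge \mathcal{E}(\psi;A|B) - \nu/2$, and by \cref{thm:stab-approx} this is at least $S_1(\psi_A) - \nu/2 - \nu/2 \gtrsim S_1(\psi_A) - \nu$ once $\mathcal{E}$ is rewritten in terms of the von Neumann entropy. Next I would upper-bound the dilution cost using \cref{thm:dilution}, which gives $M_- \le \mathcal{E}(\psi;A|B) + \nu/2 \le S_1(\psi_A) + \nu$, again via \cref{thm:stab-approx}. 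Dividing, $M_+/M_- \ge (S_1(\psi_A)-\nu)/(S_1(\psi_A)+\nu)$, and since ED states satisfy $\nu = o(S_1(\psi_A))$ by \cref{def:ent-dom}, the right-hand side is $1-o(1)$. This is exactly the content already packaged in \cref{lem:phasetransdistill,lem:phasetransdilu}, so the ED half is immediate.

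For the MD direction the logic is reversed: I would make $M_+$ small and $M_-$ large. The lower bound $M_- \ge S_1(\psi_A)$ holds for \emph{every} LOCC dilution protocol, efficient or not, because pure-state entanglement cannot be generated by LOCC and the entanglement cost of a pure state is bounded below by its entropy~\cite{PhysRevA.53.2046}; in particular it holds in the efficient-protocol regime to which we restrict. For the numerator I would invoke item 2 of \cref{thm:no-go-magic-dom}, which guarantees that any efficient, state-agnostic distillation protocol achieves only $M_+/S_1(\psi_A) = o(1)$ on the pseudoentangled MD families constructed there. Combining, $M_+/M_- \le M_+/S_1(\psi_A) = o(1)$.

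The main obstacle is conceptual rather than computational: the MD bound is a \emph{worst-case} statement. \cref{thm:no-go-magic-dom} does not assert $M_+ = o(S_1)$ for literally every MD state --- indeed for $\nu = O(\log n)$ the state is efficiently learnable and the ratio could be close to one --- but rather exhibits explicit MD subfamilies (the encoded pseudoentangled states) on which no efficient agnostic protocol can do better. I must therefore phrase the MD half as the existence of MD states saturating $M_+/M_- \le o(1)$, and I should check that the universal lower bound $M_- \ge S_1$ is applied in the same efficient-protocol regime in which the no-go distillation bound is proved, so that the two inequalities may legitimately be divided.
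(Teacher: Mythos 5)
Your proposal is correct and follows essentially the same route as the paper: combine \cref{thm:distill} and \cref{thm:dilution} (via \cref{thm:stab-approx}) for the ED bound, and pair the universal lower bound $M_- \ge S_1(\psi_A)$ from~\cite{PhysRevA.53.2046} with item~2 of \cref{thm:no-go-magic-dom} for the MD bound. Your added caveat that the MD half is a worst-case statement over state-agnostic protocols is a correct and worthwhile clarification, consistent with the paper's own discussion of its hardness results.
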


Before concluding this section, we mention a recent result that proves the impossibility of reversible entanglement manipulation for general mixed states~\cite{lami_no_2023}, even under a class of entanglement-free operations that strictly generalize LOCC, and if we do not insist upon state-agnostic or efficient protocols.

\section{Inference on entanglement- vs. magic-dominated states}\label{sec:unknownstates}
\parhead{Stabilizer group inference.} A theme throughout this work is that the stabilizer generators $S$ of the state in \cref{eq:structure1} contain most of the relevant information about its entanglement structure, particularly when the state is within the ED phase. However, in the statement of all our results, we implicitly assumed a priori knowledge of the stabilizer group generators $S$. This assumption is not necessary: $S$ is readily available in almost any setting we care to consider. If we know the unitary that generates $\ket{\psi}$ (i.e., we have a efficient circuit description of the state), we do not even need to run the circuit on real quantum hardware; the generators $G$ can be calculated in polynomial time using the stabilizer formalism, simply by processing the circuit on classical hardware (see \cref{alg:monitor}). Another setting might be where we have black-box access to an unknown state $\ket{\psi}$. In this case, we can efficiently learn the generators $G$ for its stabilizer group using just $O(n)$ copies of $\ket{\psi}$. using the tools of Ref.~\cite{grewal2023efficient}. For the reader's convenience, we state the result below.

\begin{lemma}[Efficient stabilizer group learning \cite{grewal2023efficient}]\label{thm:stab-learn}
For any $\epsilon,\delta \in (0,1)$ there exists an algorithm that outputs an Abelian group $\hat{G} \subset \mathbb{P}_n$, that uses only $\frac{16n+8\log(1/\delta)}{\epsilon^2}$ 4-copy measurements of an unknown state $\ket{\psi}$. This algorithm takes $O(\frac{n^3+n^2 \log(1/\delta)}{\epsilon^2})$ time, and up to a failure probability of at most $\delta$, this Abelian group $\hat{G}$ satisfies two properties:
\begin{enumerate}
    \item The true stabilizer group $G$ of $\ket{\psi}$ is a subgroup of $\hat{G}$, which implies $\log_2 \abs*{\hat{G}} \geq \log_2 \abs{G} \geq n-\nu$; and
    \item $\ket{\psi}$ is $\epsilon$-close in trace distance to some state with a stabilizer group $\hat{G}$.
\end{enumerate}
\end{lemma}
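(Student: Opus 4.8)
The plan is to follow the Bell difference sampling approach of \cite{grewal2023efficient}. Write each Pauli (Weyl) operator as $W_a$ for $a\in\mathbb{F}_2^{2n}$, let $[\cdot,\cdot]$ denote the symplectic form governing commutation, and let $c_\psi(a)=\bra{\psi}W_a\ket{\psi}$ be the characteristic function of the state. The starting point is the purity identity $\sum_a c_\psi(a)^2 = 2^n$ for pure states, which shows that $p_\psi(a)\coloneqq 2^{-n}c_\psi(a)^2$ is a genuine probability distribution over $\mathbb{F}_2^{2n}$. First I would recall that \emph{Bell difference sampling}---measuring two pairs of copies of $\ket{\psi}$ in the Bell basis and returning the bitwise sum of the two outcomes, which consumes four copies and hence is a $4$-copy measurement---produces samples from the self-convolution $q_\psi = p_\psi \star p_\psi$. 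This fixes the access model and the ``$4$-copy'' bookkeeping in the statement.

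The structural fact driving guarantee (1) is that every Weyl operator in the support of $p_\psi$ commutes with every true stabilizer. Indeed, if $g\in G$ then $W_g\ket{\psi}=\ket{\psi}$ and $W_gW_aW_g = (-1)^{[g,a]}W_a$, so $c_\psi(a)=(-1)^{[g,a]}c_\psi(a)$; whenever $c_\psi(a)\neq 0$ this forces $[g,a]=0$. Hence $\supp(p_\psi)\subseteq S^\perp$, the symplectic complement of the stabilizer subspace $S\subset\mathbb{F}_2^{2n}$, and the same holds for every Bell difference sample (a sum of two support elements, using that $S$ is isotropic). Defining $\hat G$ as a maximal isotropic subgroup of the symplectic dual of the subspace $V$ spanned by the samples therefore gives $G = S\subseteq\hat G$ deterministically, which already yields $\log_2\abs{\hat G}\geq\log_2\abs{G}=n-\nu$. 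The classical post-processing is Gaussian elimination over $\mathbb{F}_2$ on the $N\times 2n$ sample matrix, costing $O(n^2 N)=O((n^3+n^2\log(1/\delta))/\epsilon^2)$ time, matching the stated runtime.

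For guarantee (2) I would certify closeness through the stabilizer fidelity $F(\hat G)=\Tr(\Pi_{\hat G}\,\psi)=\abs{\hat G}^{-1}\sum_{g\in\hat G}\epsilon_g\,c_\psi(g)$, where $\Pi_{\hat G}$ is the stabilizer projector and the signs $\epsilon_g$ are read off from estimates of $c_\psi$. Since a state with $\Tr(\Pi_{\hat G}\psi)\geq 1-\epsilon^2$ is $\epsilon$-close in trace distance to the corresponding stabilizer-group state, the task is to gather enough samples that $V$ is large---equivalently, that $\hat G$ has been pruned down to near-stabilizers---and then estimate $F(\hat G)$. Each $c_\psi(g)$ is the expectation of a $\pm 1$ observable, so Hoeffding's inequality supplies the $1/\epsilon^2$ dependence, a union bound over the $\leq 2n$ candidate generators supplies the factor $n$ in the numerator, and tracking the overall failure probability supplies the additive $\log(1/\delta)$, together reproducing the sample count $\tfrac{16n+8\log(1/\delta)}{\epsilon^2}$.

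The main obstacle is the simultaneous control of containment and closeness in the presence of the non-stabilizer ``noise'' that $q_\psi$ places on $S^\perp\setminus S$: I must show that a merely polynomial number of samples suffices both to keep $G\subseteq\hat G$ and to shrink $\hat G$ enough that the stabilizer fidelity is certifiably $\geq 1-\epsilon^2$. This requires an anticoncentration/covering argument for $q_\psi$ on $S^\perp$, together with the maximal-isotropic-subspace bookkeeping that keeps $\hat G$ a legitimate Abelian Pauli group. A secondary technical point I would need to dispatch is the complex-conjugation subtlety of Bell sampling (the raw outcome distribution naturally involves $\ket{\psi^*}$), which is precisely why Bell \emph{difference} sampling, rather than plain Bell sampling, is the correct primitive here.
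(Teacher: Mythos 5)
The paper does not actually prove this lemma: its entire ``proof'' is the citation to Algorithm~1 and Theorem~5.1 of Grewal--Iyer--Kretschmer--Liang, so you are attempting strictly more than the authors do. Your sketch correctly identifies the primitive (Bell difference sampling as the $4$-copy measurement), the mechanism behind guarantee (1) ($\supp(p_\psi)\subseteq G^{\perp}$, hence the span $V$ of the samples lies in $G^{\perp}$ and $G\subseteq V^{\perp}$), and the source of the runtime (Gaussian elimination over $\mathbb{F}_2$ on the sample matrix). That is the right skeleton, and it matches the argument the paper is deferring to.

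As a proof, however, the proposal has one genuine gap and two steps that would fail as written. The gap is the one you flag yourself: the covering argument showing that $N=(16n+8\log(1/\delta))/\epsilon^2$ Bell-difference samples suffice for their span $V$ to capture all but an $O(\epsilon^2)$ fraction of the mass of $q_\psi$, combined with the Fourier-duality inequality that converts $q_\psi(V)\geq 1-O(\epsilon^2)$ into the existence of signs making $\Tr(\Pi_{V^{\perp}}\psi)\geq 1-\epsilon^2$, hence trace distance at most $\epsilon$. This is not a technicality to be ``dispatched'' later; it is the entire content of the cited theorem, and without it guarantee (2) and the stated sample complexity remain unproven. The two faulty steps: first, defining $\hat G$ as ``a maximal isotropic subgroup of the symplectic dual of $V$'' destroys the containment $G\subseteq\hat G$ that you claim is deterministic --- an arbitrary maximal isotropic subgroup of $V^{\perp}$ need not contain $G$, and the algorithm cannot select one that does, since it does not know $G$; the correct move is to output $V^{\perp}$ itself and show that, once $V$ captures enough mass, $V^{\perp}\subseteq V$ and is therefore automatically isotropic (for $x,y\in V^{\perp}\subseteq V$ one has $[x,y]=0$). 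Second, your accounting for the $16n/\epsilon^2$ term via Hoeffding estimation of each $c_\psi(g)$ with a union bound over $\leq 2n$ candidate generators describes a different access model (single-copy Pauli measurements, contradicting the ``only $4$-copy measurements'' clause) and is unnecessary: guarantee (2) only asserts the \emph{existence} of a nearby state with stabilizer group $\hat G$, and both the factor $n$ and the $1/\epsilon^2$ come from the covering argument applied to the Bell-difference samples themselves.
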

\begin{proof}
This follows from Algorithm 1 and Theorem 5.1 of Ref.~\cite{grewal2023efficient}. 
\end{proof}
\begin{remark}
Note that this stabilizer group learning algorithm does not guarantee that we can learn the true stabilizer group of the state $\ket{\psi}$. However, the output group $\hat{G}$ is the stabilizer group of some other state $\ket{\sigma}$ that is $\epsilon$-close in trace distance to $\ket{\psi}$. If $\epsilon=\frac{1}{n}$, by the Fannes inequality, the entanglement of $\ket{\sigma}$ across any cut $A|B$ differs from the entanglement of $\ket{\psi}$ across $A|B$ by at most $2$. If $\epsilon=o(n^{-1})$, this gap is improved to $o(1)$ (i.e., it is vanishing in the limit of large $n$). This is all to say that when we do entanglement calculation, when $\epsilon$ is small enough, we can pretend $\hat{G}$ is the true stabilizer group of the state. Furthermore, for entanglement manipulation tasks (e.g., Bell pair distillation), we can design protocols that work for $\ket{\sigma}$ (i.e., we will assume that $\hat{G}$ is the true stabilizer group). Although $\ket{\psi}$ may be the true input state, when we run our protocol on $\ket{\psi}$, we are guaranteed that the output will be $\epsilon$-close in trace distance to the desired output, since the trace distance is monotonically decreasing under all CPTP maps. In summary, for all intents and purposes, when $\epsilon$ is very small, we can pretend $\hat{G}$ is the true stabilizer group of the state.
\end{remark}

\begin{corollary}[Robustness of stabilizer group learning]\label{cor:robust-learn}
If an unknown (potentially mixed) state $\psi$ is $\epsilon_1$-close in trace distance to some other unknown state $\ket{\psi'}$ with stabilizer group $G'$, the stabilizer group learning algorithm will still succeed with failure probability at most $\frac{64\epsilon_1}{\epsilon^2} \qty(2n +\log(\frac{\epsilon^2}{\epsilon_1}))$. The output stabilizer group $\hat{G}$ will satisfy $G' \subseteq \hat{G}$, and $\psi$ will be $(\epsilon+\epsilon_1)$-close in trace distance to some state with stabilizer group $\hat{G}$.
\end{corollary}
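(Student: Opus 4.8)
The plan is to reduce the whole statement to the guarantee of \cref{thm:stab-learn} applied to the \emph{nearby} state $\ket{\psi'}$, exploiting the fact that a measurement procedure consuming a bounded number of copies cannot tell $\psi$ apart from $\psi'$ by much. First I would record the copy count: run with internal failure parameter $\delta$, the algorithm of \cref{thm:stab-learn} performs $N = \frac{16n + 8\log(1/\delta)}{\epsilon^2}$ four-copy measurements, and therefore consumes $4N$ copies of its input in total. Then I would invoke subadditivity of trace distance under tensor products: since $\frac{1}{2}\norm{\psi - \psi'}_1 \leq \epsilon_1$, we have $\frac{1}{2}\norm{\psi^{\otimes 4N} - (\psi')^{\otimes 4N}}_1 \leq 4N\epsilon_1$.

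The key conceptual step is to fix, once and for all, a \emph{good event} $\mathcal{A}$ on the algorithm's classical output $\hat{G}$, defined relative to the fixed data $\ket{\psi'}$ and $G'$ as the conjunction of the two guarantees of \cref{thm:stab-learn}: namely, $G' \subseteq \hat{G}$, and $\ket{\psi'}$ is $\epsilon$-close in trace distance to some state with stabilizer group $\hat{G}$. Because $\ket{\psi'}$ and $G'$ do not depend on which physical input is actually fed into the algorithm, $\mathcal{A}$ is a well-defined event on the output distribution. Applying \cref{thm:stab-learn} directly to $\ket{\psi'}$ gives $\Pr_{\psi'}[\mathcal{A}] \geq 1-\delta$. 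Since the entire (adaptive, $N$-round) procedure is just a quantum-to-classical channel acting on its $4N$-copy input, the operational interpretation of trace distance (equivalently, its monotonicity under CPTP maps) yields $\abs{\Pr_\psi[\mathcal{A}] - \Pr_{\psi'}[\mathcal{A}]} \leq \frac{1}{2}\norm{\psi^{\otimes 4N} - (\psi')^{\otimes 4N}}_1 \leq 4N\epsilon_1$, so that running on $\psi$ we still have $\Pr_\psi[\mathcal{A}] \geq 1 - \delta - 4N\epsilon_1$.

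On the event $\mathcal{A}$ both output claims follow immediately: $G' \subseteq \hat{G}$ is the first assertion verbatim, and writing $\sigma$ for the state with stabilizer group $\hat{G}$ that is $\epsilon$-close to $\ket{\psi'}$, the triangle inequality gives closeness $\frac12\norm{\psi-\sigma}_1 \leq \frac12\norm{\psi-\psi'}_1 + \frac12\norm{\psi'-\sigma}_1 \leq \epsilon_1 + \epsilon$, which is the $(\epsilon+\epsilon_1)$-closeness claim. It then remains only to bound the failure probability $\delta + 4N\epsilon_1$. Substituting $N$ and choosing $\delta = \epsilon_1/\epsilon^2$ gives $\delta + 4N\epsilon_1 = \frac{\epsilon_1}{\epsilon^2}\qty(1 + 64n + 32\log(\epsilon^2/\epsilon_1))$, which is at most $\frac{64\epsilon_1}{\epsilon^2}\qty(2n + \log(\epsilon^2/\epsilon_1))$ for every $n \geq 1$ (using $1 + 64n \leq 128n$ and $32 \leq 64$), matching the stated bound.

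I do not expect a genuine obstacle here, as the argument is a routine reduction once set up. The one point demanding care is the copy-count bookkeeping: each of the $N$ rounds is a \emph{four}-copy measurement, so the correct tensor power in the transfer bound is $4N$ rather than $N$, and dropping this factor would give a bound weaker by $4\times$. A minor caveat worth flagging is that the optimal choice $\delta = \epsilon_1/\epsilon^2$ is only a legitimate probability when $\epsilon_1 < \epsilon^2$ (which also makes the logarithm positive); in the complementary regime the asserted bound already exceeds $1$ and is therefore vacuous, so no separate treatment is needed.
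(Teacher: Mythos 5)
Your proposal is correct and follows essentially the same route as the paper's proof: apply \cref{thm:stab-learn} to the nearby state $\ket{\psi'}$, define the success event relative to $\ket{\psi'}$, and transfer the guarantee to $\psi$ via monotonicity of the trace distance on the multi-copy input, then tune $\delta$ to match the stated bound. The only differences are cosmetic — you choose $\delta=\epsilon_1/\epsilon^2$ where the paper chooses $\delta=64\epsilon_1/\epsilon^2$ and the paper is slightly more conservative with the distinguishing-advantage factor, and you make explicit the triangle-inequality step for the $(\epsilon+\epsilon_1)$-closeness claim that the paper leaves implicit.
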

\begin{proof}
The stabilizer group learning algorithm uses exactly $N = \frac{64n+32\log(1/\delta)}{\epsilon^2}$ copies of $\psi$. If $\psi$ is $\epsilon_1$-close in trace distance to some $t$-doped stabilizer state $\ket{\psi'}$, the maximum probability of distinguishing the two is $N \epsilon_1$ (this is the maximum possible trace distance between $\ket{\psi}^{\otimes N}$ and $\ket{\psi'}^{\otimes N}$). So, we define a distinguisher as follows. If the output of our algorithm is one of the `good' $\hat{G}$ (i.e., $G \subseteq \hat{G}$ and $\ket{\psi'}$ is $\epsilon$-close in trace distance to some state with stabilizer group $\hat{G}$) that could be output assuming the input state was $\ket{\psi'}$, we output $1$, and $0$ otherwise. By definition, this distinguisher outputs $1$ with probability at least $1-\delta$ for an input $\ket{\psi'}$. The properties of the trace distance tell us that the distinguisher will output $1$ with probability at least $1-\delta-2N \epsilon_1$ when the input is $\psi$. That is, the failure probability is at most $\delta + 2N \epsilon_1$. If we set $\delta = \frac{64 \epsilon_1}{\epsilon^2}$, the overall failure probability (i.e., the probability that one of the `good' $\hat{G}$ is \emph{not} output) is at most
\begin{equation}
\frac{64\epsilon_1}{\epsilon^2} \qty(1 + 2n +\log(\frac{\epsilon^2}{64\epsilon_1})) \leq \frac{64\epsilon_1}{\epsilon^2} \qty(2n +\log(\frac{\epsilon^2}{\epsilon_1})).
\end{equation}
\end{proof}

\parhead{Testing for the entanglement-dominated phase.} Having shown the ability to learn a state's stabilizer group, we now apply this to answer even more specific questions about particular states. Given access to an unknown state $\ket{\psi}$, can we tell whether it is ED or MD? In complete generality, this task is not efficiently possible as it would necessitate efficient measurement of entanglement for general states. However, we can reformulate this task as a property testing problem that is efficiently solvable. Recall that a property tester for a class $C$ of quantum states takes copies of a state $\ket{\psi}$ as input and determines whether $\ket{\psi}\in C$ or $\ket{\psi}$ is $\epsilon$-far in trace distance from all states in $C$, \emph{promised that one of these is the case}. Our algorithm efficiently tests whether an input state $\ket{\psi}$ belongs to the ED phase or is $\epsilon$-far from all such states and, as such, it is MD.

\begin{theorem}[Property testing for entanglement-dominated states]\label{th:distinction}
Let $\ket{\psi}$ be an unknown quantum state. For any $\epsilon=\Theta(n^{-\alpha})$ with $\alpha \geq 1$, there is a ED property tester that uses polynomially many copies of $\ket{\psi}$, polynomial classical computing time, and has a failure probability of at most $\delta=O(\exp(-n))$ to distinguish between the following cases:
\begin{itemize}
    \item $\ket{\psi}$ is ED;
    \item $\ket{\psi}$ is $\varepsilon$-far in trace distance from any ED state, hence $\ket{\psi}$ is MD.
\end{itemize}
\begin{proof}
As shown in \cref{thm:stab-learn}, it is possible through $O(\poly n)$ queries to $\ket{\psi}$ to determine an Abelian group $\hat{G}$ such that $G \subseteq \hat{G}$ (where $G$ is the true stabilizer group) and $\ket{\psi}$ is $\epsilon$-close in trace distance from a state $\ket*{\hat{\psi}}$ with stabilizer group $\hat{G}$, with a failure probability that is exponentially small in $n$. 

Assume that $\ket{\psi}$ falls into case A) -- that is, it is an ED state. We know that $n-\log_2 \abs*{G} \leq o(\mathcal{E}(\psi;A|B))$ by definition of the ED phase. Since $\abs*{\hat{G}} \geq \abs{G}$ and $\abs*{S_1(\psi_A) - S_1(\hat{\psi}_A)} \leq O(1)$ by the Fannes inequality, we must then also have
\begin{equation}
    n-\log_2 \abs*{\hat{G}} \leq o(\mathcal{E}(\hat{\psi};A|B)).\label{eq:ent-dom-case}
\end{equation}
This is to say, $\ket*{\hat{\psi}}$ is within the ED phase as well. 

Take the other possibility, which is that $\ket{\psi}$ falls into case B). Then, $\ket*{\hat{\psi}}$ must also be MD by the assumption that $\ket{\psi}$ is $\epsilon$-far from all ED states. In this case, we immediately have
\begin{equation}
    n-\log_2 \abs*{\hat{G}} \geq \Omega(\mathcal{E}(\hat{\psi};A|B)).\label{eq:magic-dom-case}
\end{equation}
Since distinguishing between \cref{eq:ent-dom-case,eq:magic-dom-case} is efficient (calculating both $\log_2 \abs*{\hat{G}}$ and $\mathcal{E}(\hat{\psi};A|B)$ is efficient), the property tester is overall efficient.
\end{proof}
\end{theorem}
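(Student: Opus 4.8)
The plan is to reduce the testing problem to a comparison of two efficiently computable classical quantities extracted from a learned stabilizer group. First I would invoke the stabilizer group learning algorithm of \cref{thm:stab-learn} with accuracy parameter set to the given $\epsilon = \Theta(n^{-\alpha})$ and internal failure probability set to $\delta' = \exp(-n)$. Its sample complexity $\frac{16n + 8\log(1/\delta')}{\epsilon^2} = O(n^{1+2\alpha})$ and its runtime are both polynomial, so this step is efficient; except with probability $\exp(-n)$ it returns an Abelian group $\hat{G}$ with $G \subseteq \hat{G}$, together with the guarantee that $\ket{\psi}$ is $\epsilon$-close in trace distance to some state $\ket{\hat{\psi}}$ whose stabilizer group is exactly $\hat{G}$. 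The whole tester will inherit this $\exp(-n)$ failure probability.

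Next I would compute, purely classically from a tableau for $\hat{G}$, the two quantities that decide the phase: the learned nullity $\hat{\nu} \coloneqq n - \log_2 \abs{\hat{G}}$ and the stabilizer entanglement $\mathcal{E}(\hat{\psi};A|B) = \abs{S_{AB}}/2$. Both follow from the canonical decomposition $\hat{G} = G_A \cup G_B \cup G_{AB}$ obtained by the Gaussian-elimination procedures of \cref{sec:stabnullity}, so each costs $O(n^3)$ time. Note that $\hat{G} \supseteq G$ forces $\hat{\nu} \le \nu$, which is the convenient direction: learning can only over-count stabilizers, never under-count them.

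The classification then rests on the characterization that (by \cref{def:ent-dom} together with \cref{thm:stab-approx}) a state is ED exactly when $\mathcal{E}(\cdot;A|B) = \omega(\nu)$, combined with the Fannes inequality applied to the $\epsilon$-close pair $\ket{\psi},\ket{\hat{\psi}}$. In the ED case, $\abs{S_1(\psi_A) - S_1(\hat{\psi}_A)} = O(1)$ (in fact $o(1)$ once $\alpha > 1$), so \cref{thm:stab-approx} propagates the defining inequality of the ED phase over to $\ket{\hat{\psi}}$, giving $\hat{\nu} = o(\mathcal{E}(\hat{\psi};A|B))$. In the far case, $\ket{\hat{\psi}}$ cannot itself be ED --- otherwise $\ket{\psi}$ would be $\epsilon$-close to the ED state $\ket{\hat{\psi}}$, contradicting the promise --- so $\ket{\hat{\psi}}$ is MD and $\hat{\nu} = \Omega(\mathcal{E}(\hat{\psi};A|B))$. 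The tester therefore just thresholds the ratio $\mathcal{E}(\hat{\psi};A|B)/\hat{\nu}$, both of whose inputs are already in hand.

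The main obstacle is the bookkeeping that keeps the $\omega(\nu)$-versus-$O(\nu)$ dichotomy intact across the learning step. Two error sources accumulate: the Fannes gap between $\ket{\psi}$ and $\ket{\hat{\psi}}$, and the $O(\nu)$ gap between stabilizer entanglement and true entropy from \cref{thm:stab-approx}. I would need to confirm that both are $o(\mathcal{E})$ in the ED case so that the separation survives --- which is precisely why the hypothesis restricts to $\epsilon = \Theta(n^{-\alpha})$ with $\alpha \ge 1$ (bounding the Fannes contribution by $O(1)$) and why the ED phase is defined by a strict $\omega(\nu)$ gap rather than a constant-factor gap. Turning the asymptotic $\omega$/$O$ statements into a single concrete finite-$n$ threshold rule is the only genuinely delicate point; the remainder is routine tableau linear algebra.
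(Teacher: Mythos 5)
Your proposal matches the paper's proof essentially step for step: learn $\hat{G}$ via \cref{thm:stab-learn} with exponentially small failure probability, transfer the ED property from $\ket{\psi}$ to $\ket{\hat{\psi}}$ via the Fannes inequality and $G \subseteq \hat{G}$ in case A, use the $\epsilon$-far promise to force $\ket{\hat{\psi}}$ to be MD in case B, and then distinguish the two regimes by the efficiently computable quantities $n-\log_2\abs{\hat{G}}$ and $\mathcal{E}(\hat{\psi};A|B)$. The only difference is that you explicitly flag the asymptotic-to-finite-$n$ thresholding as the delicate point, which the paper's proof also leaves implicit.
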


\section{Physical applications}\label{sec:physical}

Entanglement is a fundamental physical quantity, and we therefore expect the tools we developed here to have applications for a wide array of physics problems. Of course, for our tools to apply, we need to be able to find entanglement-dominated states in nature. Happily, we are able to identify a number of settings in which these states appear quite naturally, such as quantum error correcting codes and many-body systems.

The ubiquity of entanglement-dominated states in these diverse physical settings allows us to leverage the ED-MD framework developed in this work to gain new insights into various subfields of physics. Our tools not only provide a deeper understanding of the role of entanglement in these systems but also offer practical applications, such as efficient entanglement estimation protocols that can be used to study scrambling dynamics and characterize quantum chaos. By bridging the gap between these seemingly disparate fields, our work highlights the fundamental role of the entanglement-magic interplay in understanding the complex behavior of quantum systems. In this section, we showcase the power and versatility of our approach by applying it to three distinct areas: quantum error correction, many-body physics, and the characterization of scrambling dynamics. These applications demonstrate the unifying nature of the ED-MD perspective and its potential to provide fresh insights into seemingly unrelated physical phenomena.

\subsection{Entanglement-dominated states as energy eigenstates}\label{sec:manybodyphysics}

In this section, we show the emergence of ED states in the contexts of many-body Hamiltonians.

\medskip 
\parhead{Entanglement of perturbed stabilizer Hamiltonians.} To establish this connection between energy eigenstates and ED states, it is more convenient to begin with stabilizer states. Stabilizer states also find relevance in many-body physics as eigenstates of so-called stabilizer Hamiltonians. A stabilizer Hamiltonian $H_{G}$ is defined as
\be\label{eq:hstab}
H_{G}=\sum_{P\in G}\alpha_P P
\ee
where $G\subset\mathbb{P}_n$ is an Abelian subgroup of the Pauli group. It is well known~\cite{coble2023local} that all the eigenstates of $H_{G}$ correspond to stabilizer states associated with the stabilizer group $G$. Recently, we have shown that perturbation to $H_G$ leads to $\nu$-compressible states, see Ref.~\cite{gu_hamiltonian_2024}. In particular, let us introduce a perturbation to $H_G$ formed by $k$ generic Pauli operator $P_{1},\ldots, P_k$ as
\be\label{eq:hplhstab}
H=H_{G}+\sum_{i=1}^{k}\gamma_i P_i
\ee
with $\gamma_i$ real coefficients. Letting $K = \expval{\qty{P_1,P_2,\ldots,P_k}}$, we have shown that there is an Abelian subgroup $J \subset \mathbb{P}_n$ of dimension at least $n-\dim K$ (which is in turn lower bounded by $n-k$), such that there is an eigenbasis for $H$ in which every eigenstate is stabilized by the elements of $J$ (up to $\pm$ phases). This is to say that they are low-magic states with $\nu \leq \dim K \leq k$. Below, we analyze the implications of the existence of the separation between ED and MD states in many-body physics, employing the fact that the Hamiltonian \eqref{eq:hplhstab} features $\nu$-compressible eigenstates.  

\begin{corollary}[Near-identical entanglement for all eigenstates]\label{cor:eig-sim}
Let $H=H_{G}+\sum_{i=1}^{k}\gamma_i P_i$ with $H_G$ being a stabilizer Hamiltonian corresponding to the stabilizer group $G$. In the $\nu$-compressible eigenbasis of $H$ corresponding to the stabilizer group $J$, the von Neumann entanglement entropy (across any cut) of any two eigenstates differs by at most $\dim K$.
\end{corollary}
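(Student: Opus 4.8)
The plan is to sandwich the von Neumann entanglement of every eigenstate inside a single interval whose endpoints depend only on the common stabilizer group $J$, and whose width is at most $\dim K$. First I would record the two facts supplied by Ref.~\cite{gu_hamiltonian_2024}: every eigenstate $\ket{\psi}$ in the distinguished eigenbasis is stabilized (up to signs) by the \emph{same} Abelian Pauli group $J$, with $\dim J \geq n - \dim K$; writing $\delta \coloneqq n - \dim J \leq \dim K$, this means each such $\ket{\psi}$ has stabilizer group $G_\psi \supseteq J$ and hence nullity $\nu_\psi = n - \log_2\abs{G_\psi} \leq \delta$. The only part of this data that is genuinely common to all eigenstates is the \emph{Pauli content} of $J$ (the signs vary from eigenstate to eigenstate), so the quantity I would use as a fixed reference is $j_A \coloneqq \log_2\abs{J_A}$, where $J_A$ is the subgroup of $J$ acting trivially on $B$. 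This number is manifestly identical for every eigenstate and for each fixed cut.

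Next I would invoke the von Neumann form of \cref{lem:salpha-bound}, namely \cref{eq:vn-tight}, which gives $n_A - \abs{S_{\psi,A}} - \nu_\psi \leq S_1(\psi_A) \leq n_A - \abs{S_{\psi,A}}$ for each eigenstate, where $S_{\psi,A}$ generates the local stabilizer group $G_{\psi,A}$ and $\abs{S_{\psi,A}} = \log_2\abs{G_{\psi,A}}$. The crux is then to pin down $\abs{S_{\psi,A}}$ in terms of the common $j_A$. Since $J \subseteq G_\psi$ we have $J_A = G_{\psi,A} \cap J$, so $\abs{S_{\psi,A}} \geq j_A$; and by the second isomorphism theorem $\log_2\abs{G_{\psi,A}} - \log_2\abs{J_A} = \log_2\abs{G_{\psi,A} J / J} \leq \log_2\abs{G_\psi / J} = \delta - \nu_\psi$, giving $\abs{S_{\psi,A}} \leq j_A + (\delta - \nu_\psi)$. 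Substituting both inequalities into \cref{eq:vn-tight}, the $\nu_\psi$-dependence cancels and I obtain the eigenstate-independent enclosure
\[
    n_A - j_A - \delta \;\leq\; S_1(\psi_A) \;\leq\; n_A - j_A.
\]

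Since this interval has width $\delta \leq \dim K$ and its endpoints depend only on $J$ (through $n_A$ and $j_A$), the entropies of any two eigenstates across the cut $A|B$ lie in the same interval and therefore differ by at most $\dim K$, as claimed; as the cut was arbitrary, this holds across any cut. I expect the main obstacle to be precisely the step controlling $\abs{S_{\psi,A}}$: a naive argument would assume $G_\psi = J$ for every eigenstate (in which case $\abs{S_{\psi,A}} = j_A$ exactly and the result is immediate), but in general an eigenstate may carry accidental extra stabilizers beyond $J$, so one must show that these extra stabilizers cannot shift $n_A - \abs{S_{\psi,A}}$ by more than the simultaneous reduction they produce in $\nu_\psi$. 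The isomorphism-theorem bound above is exactly what makes these two effects cancel, and I would double-check the identity $G_{\psi,A}\cap J = J_A$ (immediate, since $J \subseteq G_\psi$) so that the counting is valid.
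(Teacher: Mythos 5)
Your proof is correct, but it takes a more granular route than the paper's. The paper argues in one line: every eigenstate in the distinguished basis shares the stabilizer group $J$, hence shares the same stabilizer entanglement $\mathcal{E}(\cdot\,;A|B)$, and \cref{thm:stab-approx} pins $S_1$ to within $\pm\nu/2$ of that common value, giving a spread of at most $\nu\le\dim K$. You instead work directly from the one-sided von Neumann bound \cref{eq:vn-tight} and anchor everything to the eigenstate-independent quantity $j_A=\log_2\abs{J_A}$, using the second isomorphism theorem to show that any accidental stabilizers an eigenstate carries beyond $J$ shift $n_A-\abs{S_{\psi,A}}$ by no more than they reduce $\nu_\psi$, so the two effects cancel in the enclosure $n_A-j_A-\delta\le S_1(\psi_A)\le n_A-j_A$. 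What your version buys is precisely the point you flag at the end: the paper's phrase ``all eigenstates have the same stabilizer group'' is really the statement that they are all stabilized by $J$, and an individual eigenstate may have a strictly larger stabilizer group, in which case its $\mathcal{E}$ and $\nu$ as defined in \cref{def:stab-ent} are computed relative to that larger group rather than to $J$; your isomorphism-theorem step closes this small gap explicitly. What the paper's version buys is brevity and symmetry: by using $\mathcal{E}$, which is manifestly symmetric in $A|B$, it avoids any bookkeeping of local subgroups. Both arguments yield the same $\dim K$ bound.
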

\begin{proof}
All of the eigenstates have the same stabilizer group, hence have the same stabilizer entanglement. Since \cref{thm:stab-approx} says that the stabilizer entanglement differs from the true entanglement by at most $\frac{\nu}{2}$ (which is at most $\frac{\dim K}{2}$), this shows the desired claim.
\end{proof}
This may seem to contradict physical intuition, as we expect that for local Hamiltonians, generic high energy states will be volume law entangled while low energy states are area law states. However, note that \cref{cor:eig-sim} simply says that there exists \emph{one particular} eigenbasis for which all the eigenstates have a similar entanglement. It does not rule out the possibility that most of the states in a high energy subspace have high entanglement, rather it merely says that a select few of these states have the same entanglement as the ground states.

\begin{lemma}[Robust ground state entanglement entropy]\label{lem:pert-ent}
Consider a Hamiltonian $H=H_{G}+\sum_{i=1}^{k}\gamma_{i}P_i$. The ground state entanglement entropy (across any cut) of $H$ differs from the ground state entanglement entropy of $H_G$ by at most $\frac{3}{2} \dim K$.
\end{lemma}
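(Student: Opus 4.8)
The plan is to sandwich both von Neumann entropies between quantities that are manifestly computable from the stabilizer group, and then control the (small) discrepancy that the perturbation introduces. Fix a cut $A|B$, write $\ket{\psi_0}$ for a ground state of $H_G$ and $\ket{\psi}$ for the ground state of $H$ living in the $\nu$-compressible eigenbasis produced in the discussion following \cref{eq:hplhstab}. From that discussion $\ket{\psi}$ is stabilized by an Abelian group $J$ with $\dim J \ge n - \dim K$, so its nullity satisfies $\nu = n - \dim J \le \dim K$; moreover $J$ is exactly the subgroup of $G$ whose elements commute with every $P_i$, hence $J \subseteq G$. Since $\ket{\psi_0}$ is stabilized by all of $G$, it is in particular stabilized (up to signs, which do not affect entanglement) by $J$, so the two states share this common stabilizer core.

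First I would establish the two anchors. Because $\ket{\psi_0}$ is an exact stabilizer state, \cref{lem:salpha-bound} with $\nu=0$ (equivalently the Fattal et al. formula \cite{fattal_entanglement_2004}) gives the exact value $S_1(\psi_{0,A}) = n_A - \abs{S_A^{(G)}}$, where $S_A^{(G)}$ generates the local subgroup $G_A$. For the perturbed state, taking its stabilizer group to be $J$, the von Neumann bound of \cref{lem:salpha-bound} gives $n_A - \abs{S_A^{(J)}} - \nu \le S_1(\psi_A) \le n_A - \abs{S_A^{(J)}}$, where $S_A^{(J)}$ generates $J_A$, the subgroup of $J$ acting trivially on $B$. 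Thus the whole problem reduces to comparing the two local-stabilizer counts $\abs{S_A^{(J)}}$ and $\abs{S_A^{(G)}}$.

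This comparison is the heart of the argument and the step I expect to be the main obstacle. The cleanest route is the symplectic $\mathbb{F}_2$ picture: stabilizer groups are isotropic subspaces of $\mathbb{F}_2^{2n}$, and the local subgroup on $A$ is the intersection with the coordinate subspace $V_A$ supported on $A$. From $J \subseteq G$ we get $J_A = J \cap V_A \subseteq G \cap V_A = G_A$, so $\abs{S_A^{(J)}} \le \abs{S_A^{(G)}}$; and since the quotient $G_A/J_A$ injects into $G/J$, which has dimension $\nu$, we also get $\abs{S_A^{(G)}} - \abs{S_A^{(J)}} \le \nu$. Writing $\Delta_A \coloneqq \abs{S_A^{(G)}} - \abs{S_A^{(J)}} \in [0,\nu]$ and subtracting the two anchor bounds,
\begin{equation}
    -\nu \le \Delta_A - \nu \le S_1(\psi_A) - S_1(\psi_{0,A}) \le \Delta_A \le \nu,
\end{equation}
so $\abs{S_1(\psi_A) - S_1(\psi_{0,A})} \le \nu \le \dim K$, which is in fact tighter than the stated $\tfrac{3}{2}\dim K$; routing the comparison instead through the stabilizer entanglement $\mathcal{E}$ and the (lossier) additive error of \cref{thm:stab-approx} reproduces the stated constant. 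Since nothing in the argument referred to the particular cut, the bound holds across any cut, giving the claim.

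The only delicate points to verify are that $J \subseteq G$ (which follows from identifying $J$ with the $P_i$-commuting subgroup of $G$ in \cite{gu_hamiltonian_2024}) and that the perturbed ground state can be taken to have stabilizer group exactly $J$; if its nullity is strictly smaller the bounds only improve, and the same local-count comparison goes through with $J_A$ replaced by the genuine local subgroup. The rest is the subspace-dimension count above together with the two sandwich inequalities already proved in \cref{lem:salpha-bound}.
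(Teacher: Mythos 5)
Your proof is correct, but it takes a genuinely different route from the paper's. The paper anchors both entropies to the stabilizer entanglement $\mathcal{E}$: it notes $J\subseteq G'$ implies $\abs{\mathcal{E}(\psi;A|B)-\mathcal{E}(\psi_G;A|B)}\le \dim K$ and then invokes \cref{thm:stab-approx} once for each state, paying the extra $\nu/2$ that yields the stated $\tfrac{3}{2}\dim K$. You instead work directly with the local stabilizer counts of \cref{lem:salpha-bound} and control $\Delta_A=\abs{S_A^{(G')}}-\abs{S_A^{(J)}}$ by the injection $G'_A/J_A \hookrightarrow G'/J$; this is sharper, giving $\abs{S_1(\psi_A)-S_1(\psi_{0,A})}\le \nu\le \dim K$, and your closing remark correctly identifies that the paper's constant is just the lossier $\mathcal{E}$-detour. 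One point of care: the containment you need is $J\subseteq G'$, where $G'$ is a \emph{completion} of $G$ serving as the full stabilizer group of the chosen ground state of $H_G$ --- in the paper $J\coloneqq K^\perp\cap G'$, and elements of $J$ need not lie in the (generally non-maximal) group $G$ of \cref{eq:hstab} itself. Your argument goes through verbatim with $G$ replaced by $G'$, and your handling of the case where the true stabilizer group of $\psi$ strictly contains $J$ is also sound, since the same quotient-injection argument shows the sandwich bounds only tighten. So the proposal is valid and in fact proves a slightly stronger statement than the lemma as written.
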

\begin{proof}
Let $G'$ be any completion of $G$ (i.e., it will be a size-$2^n$ Abelian subgroup of $\mathbb{P}_n$ such that $G \subseteq G'$). The eigenstates $\ket{\psi_G}$ of $H_G$ are stabilizer states with a stabilizer group $G'$~\cite{coble2023local}. Define $K^\perp \coloneqq \qty{P \in \mathbb{P}_n \mid \comm{Q}{P}=0,\, \forall Q \in K}$. Consider the subgroup $J \coloneqq K^\perp \cap G'$ -- this contains all Pauli operators that commute with every element in $K$ and $G$. In Ref.~\cite{gu_hamiltonian_2024}, it was shown that $H$ admits an eigenbasis in which every eigenstate is stabilized by $J$. The group $J$ has dimension $\dim(J) \geq n-\dim K$. Since $J \subseteq G'$,
\begin{equation}
    \abs{\mathcal{E}(\psi;A|B) - \mathcal{E}(\psi_G;A|B)} \leq \dim K.
\end{equation}
Now, using the fact that $\psi_G$ is an exact stabilizer state and $\psi$ is a $\nu$-compressible state with $\nu \leq \dim K$, we apply \cref{thm:stab-approx} to conclude
\begin{equation}
    \abs{S_1(\psi) - S_1(\psi_G)} \leq\frac{3}{2}\dim K.
\end{equation}
\end{proof}
Combining the above two results is particularly useful for models where we have knowledge of the entanglement for \emph{just one} of the eigenstates, such as the ground state. In fact, thanks to \cref{lem:pert-ent}, we only need to know about the ground state entanglement of the unperturbed (i.e., stabilizer) Hamiltonian, and given the ease with which this ground state can be found analytically, this information can always be found in polynomial classic time. Then, when $k$ is dominated by the ground state entanglement, we can first use \cref{lem:pert-ent} to estimate the ground state entanglement of the perturbed model, then use \cref{cor:eig-sim} to estimate the entanglement for \emph{any} of the eigenstates of the perturbed model.

\medskip 
\parhead{A group-theoretic topological-entropy robustness proof from ED states.} We can also consider more sophisticated types of entanglement, such as the topological entanglement entropy, which can often be defined as a sum of bipartite entanglement entropies such as
\begin{equation}
    S_{\text{topo}} = S_{AB} + S_{BC} - S_B - S_{ABC},\label{eq:stopo}
\end{equation}
where $A$, $B$, and $C$ are appropriately chosen regions~\cite{zeng_quantum_2018}. This quantity is also often known as the conditional mutual information $I(A:C \mid B)$. We note that since each of the bipartite entanglement entropies in \cref{eq:stopo} can be estimated up to an additive error $O(\nu)$, it follows that $S_{\text{topo}}$ can be estimated up to an error $O(\nu)$ as well. Although $S_{\text{topo}}$ is traditionally viewed to be a small constant number (e.g., it is 2 for a GHZ state), there are topologically ordered states of matter that can have very large topological entanglement entropies. For instance, in Ref.~\cite{ma_topological_2018}, it was shown that the ground states for two archetypical models of three-dimensional topologically ordered phases of matter (namely, the X-cube model and Haah's code), have a topological entanglement entropy that scales \emph{linearly} in system size. We expect the tools we developed for estimating entanglement entropy to be very useful in settings of this sort, as the small $O(\nu)$ correction will be a very small relative correction to the estimate of $S_{\text{topo}}$. The following corollary argues that the topological entanglement entropy of a stabilizer Hamiltonian $H_G$ is also robust against perturbations. This is especially important for models where $S_{\text{topo}}$ is extensive, such as those in Ref.~\cite{ma_topological_2018}, because the following corollary implies that for perturbations to these models with up to $o(n)$ terms, the topological entanglement of the ground state will still scales extensively in system size.

\begin{lemma}[Robust topological entanglement]
The ground state topological entanglement entropy of $H$ differs from the ground state topological entanglement of $H_G$ by at most $O(\dim K)$.
\end{lemma}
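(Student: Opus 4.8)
The plan is to exploit the fact that the topological entanglement entropy in \eqref{eq:stopo} is a \emph{fixed} linear combination $S_{\text{topo}} = S_{AB} + S_{BC} - S_B - S_{ABC}$ of four ordinary bipartite von Neumann entanglement entropies, each taken across a single cut of the \emph{same} ground state. I would evaluate this expression once for the ground state $\ket{\psi}$ of the perturbed Hamiltonian $H = H_G + \sum_{i=1}^k \gamma_i P_i$ and once for the ground state $\ket{\psi_G}$ of the unperturbed stabilizer Hamiltonian $H_G$, and then bound the difference term by term rather than attempting to track $S_{\text{topo}}$ as a whole.

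The key observation is that \cref{lem:pert-ent} is proven for \emph{any} cut, so it applies verbatim to each of the four bipartitions $(AB|\cdots)$, $(BC|\cdots)$, $(B|\cdots)$, $(ABC|\cdots)$ appearing in $S_{\text{topo}}$. For each such region $X$ it gives $\abs{S_X(\psi) - S_X(\psi_G)} \leq \tfrac{3}{2}\dim K$. Applying the triangle inequality to the four-term combination then yields $\abs{S_{\text{topo}}(\psi) - S_{\text{topo}}(\psi_G)} \leq 4 \cdot \tfrac{3}{2}\dim K = 6\dim K = O(\dim K)$, which is exactly the claim.

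The reason \cref{lem:pert-ent} can be invoked uniformly on every cut is the shared group-theoretic structure established in its proof: the perturbed ground state is stabilized by $J = K^\perp \cap G'$, a subgroup of the unperturbed stabilizer group $G'$ whose dimension drops by at most $\dim K$, so the stabilizer entanglement $\mathcal{E}(\cdot; X)$ changes by at most $\dim K$ on each cut, and \cref{thm:stab-approx} then converts this into the stated entropy bound. I do not expect any genuine obstacle here; the only point requiring care is the bookkeeping --- one must use the \emph{same} pair of ground states in all four entropy terms, and one must not assume that the cancellations which make $S_{\text{topo}}$ a clean topological invariant for exact stabilizer states also control the \emph{differences} $S_X(\psi) - S_X(\psi_G)$ term by term. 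Since only an upper bound on the magnitude of the difference is needed, bounding the four contributions independently and summing them is both valid and sufficient, and any potential cancellation among the four shifts can only improve the bound.
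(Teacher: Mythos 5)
Your proposal is correct and follows essentially the same route as the paper: decompose $S_{\text{topo}}$ into its four bipartite entropies, apply the robust-ground-state-entanglement lemma (which holds across any cut) to each term, and conclude by the triangle inequality. The only difference is cosmetic bookkeeping of the constant --- you carry the $\tfrac{3}{2}\dim K$ per-cut bound to get $6\dim K$, while the paper quotes a looser $4\dim K$ per cut; both yield the stated $O(\dim K)$.
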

\begin{proof}
The topological entanglement entropy is simply a linear combination of bipartite entanglement entropies \eqref{eq:stopo}. Since each of these bipartite entanglement entropies are the same for the ground states up to an error $4 \dim K$, the topological entanglement entropies are also the same up to an error $O(\dim K)$.
\end{proof}

\subsection{Entanglement-dominated states as code states: topological error correction}\label{sec:err-correct}

The toric code Hamiltonian is one of the most widely studied models of topological order, and as its name suggests, it is intimately connected with error correction. The Hamiltonian is
\begin{equation}
    H = -\sum_v A_v - \sum_p B_p,
\end{equation}
where $A_v$ and $B_p$ are the star and plaquette operators, respectively. Formally, $A_v=\prod_{j \in \text{star}(v)} Z_j$, $B_p=\prod_{j\in \text{bd}(p)} X_j$, where $\text{star}(v)$ is a star around the vertex $v$ and $\text{bd}(p)$ is the boundary of the plaquette $p$. The ground state of this Hamiltonian is 4-fold degenerate, which allows these states to be used as logical states. That is, the four energy eigenstates can be identified with $\ket{00}_L$, $\ket{01}_L$, $\ket{10}_L$, and $\ket{11}_L$, where the $L$ subscript denotes the fact that these are logical states, rather than actual physical ones on the lattice. Each of these ground states is the $+1$ eigenstate of $2n-2$ \emph{independent} stabilizers. The remaining two stabilizers are strings of $Z$ operators that form horizontal or vertical loops around the torus -- these are referred to as the logical operators $Z_1$ and $Z_2$ of the code. Whether the state is a $+1$ or $-1$ eigenstate of these two logical operators determines the logical state (e.g., if the state is the $+1$ stabilizer of the two loop operators, it corresponds to $\ket{00}_L$). However, this implies that any state in the ground state space of the toric code Hamiltonian (i.e., an error-free state, or a \emph{code state}) can be written
\begin{equation}
    \rho = \qty(\prod_{v=1}^{n-1} \frac{\id + A_v}{2}) \qty(\prod_{p=1}^{n-1} \frac{\id + B_p}{2}) \rho_L,
\end{equation}
where the portion of the state $\rho_L$ determines the logical state represented by $\rho$. For instance, if $\rho_L = \frac{(\id + Z_1)(\id+Z_2)}{4}$, then $\rho$ is identified with $\ketbra{00}_L$. Of course, in general, if we have a superposition of logical states, $\rho_L$ will not necessarily be a stabilizer state. However, due to the $2n-2$ shared stabilizers, this form immediately makes it obvious that any state in the ground state space, even if it is a superposition of logical states, is a state with $\nu \leq 2$. We note here that this supplies a new interpretation of the nullity distillation circuit in \cref{fig:clifft}: the distillation circuit is a \emph{decoding circuit}, since the state left over after the distillation represents the logical content of the code state. 

Furthermore, these states fall into the ED phase. Previous work~\cite{hamma_bipartite_2005} has shown that entanglement for a biparition $A|B$ in the ground space of the Toric code grows as $S_{1}(\psi_A)\sim\sqrt{n_A}$, following an area-law state in 2D. Therefore, with $S_{1}(\psi_A)=\omega(\nu)\coloneqq\omega(1)$, the ground space of the Toric codes is described by ED states, cfr. \cref{def:ent-dom}.

Another implication of the fact that all code states are $\nu$-compressible states is that, using the same logic as \cref{lem:pert-ent}, the entanglement of $\rho$ across any cut can differ from any one of the logical states by at most $4\nu=8=O(1)$. Indeed, in general, any stabilizer code Hamiltonian with $k$ logical states will have $\nu \leq k$, and since $k=O(1)$ for most topological codes \cite{bravyi2020bpt}, $\nu=O(1)$ for these codes. An immediate corollary of this means that the entanglement of any code state is equal to the entanglement of the logical states up to an $O(1)$ correction. Why is this important? It tells us that we can reason about the entanglement properties of any code state simply by knowing the entanglement properties of \emph{just a single logical state}. Furthermore, since the stabilizers of this state are analytically derivable simply by looking at the Hamiltonian, these entanglement properties are readily available. This is an extremely special property. On the other hand, this reasoning breaks for the recently proposed good qLDPC codes \cite{Breuckmann_2021,Panteleev_2021,xu2023constantoverhead}. These qLDPC Hamiltonians support $\Theta(n)$ logical qubits \cite{Panteleev_2021}, hence $\nu=\Theta(n)$, in which case our bounds become too loose to say anything useful.

\subsection{Lyapunov exponent estimation with entanglement-dominated states}\label{sec:lyapunov}
Lyapunov exponents were originally introduced in classical mechanics to probe chaos. They measure the rate of divergence of nearby trajectories in phase space, effectively quantifying the sensitivity to initial conditions. This phenomenon is the well-known ``butterfly effect,'' wherein small changes in initial conditions lead to dramatically different outcomes over time. In the context of quantum mechanics, the notion of Lyapunov exponents maintains a similar role as a probe of chaos. However, due to the unitarity of quantum mechanics, they cannot measure the divergence of trajectories in phase space as in classical mechanics. Instead, in quantum mechanics, Lyapunov exponents are related to the degree of noncommutativity between two operators.

Consider two operators $V$ and $W$ supported on non-overlapping domains such that $[W,V]=0$ and take a local unitary evolution $U(\tau)$ parametrized by a real parameter $\tau$. Note that $\tau$ can be interpreted as the time in Hamiltonian evolution or, as considered below, the number of layers of an evolution from a local quantum circuit. A local operator $W(0)=W$ evolves under the Heisenberg picture as $W(\tau)=U^{\dag}(\tau)WU(\tau)$. The commutator between $W(\tau)$ and $V$ will now increase because the domain of $W$ will gradually expand and eventually, it reaches the domain of $V$. The main difference, in the choice of the unitary evolution, lies in the speed of this expansion. A way to quantify the speed of this expansion is given by the out-of-time-order correlator ($\otoc$), which is connected to the notion of a group commutator. A way to define it is via the so-called, \textit{entanglement in time}~\cite{hosur_chaos_2016}. Let us consider a system made of two bipartitions, an input bipartition $A\vert B$ and an output bipartition $C\vert D$, and Hilbert space of $2n$ qubits where $n_A+n_B=n_C+n_D$. The OTOC is then defined as
\be
\otoc\left(U(\tau)\right)\coloneqq \frac{1}{2^{n+2(n_A+n_D)}}\sum_{P_A,P_D}\tr\left(P_D(\tau)P_AP_D(\tau)P_A\right)\, ,\label{eq:s2otoc}
\ee  
where $P_X$ is a local Pauli operator, $P_D(\tau)\coloneqq U(\tau)^{\dagger}P_D U(\tau)$. The OTOC is a well-known probe to chaos~\cite{kitaev_hidden_2014}, and as shown in~\cite{hosur_chaos_2016} can be connected to the entanglement entropy of the Choi representation of the unitary $U(t)$ as
\be
S_{2}(\rho_{AC}(\tau))=n_C-n_A-\log_2\left(\otoc(U(\tau)) \right)\,,
\ee
where $\rho_{AC}(\tau):=\tr_{BD}\ketbra{U(\tau)}{U(\tau)}$. $\ket{U(\tau)}$ is the Choi representation of the unitary operator $U(\tau)$ and is defined as $    \ket{U(\tau)}\coloneqq 2^{-n/2}\sum_{i}\id \otimes U(\tau) \ket{ii}$. Note that also $\ket{U(\tau)}$ is a $\nu$-compressible state since it corresponds to the action of a $t$-doped Clifford unitary on the stabilizer state $\ket{\id}=2^{-n/2}\sum_{i}\ket{ii}$ (see~\cref{cor:nucompressibility}).

In what follows, we will show that the $\otoc$ and the entanglement entropy are both connected with the Lyapunov exponent. Below, we will focus on the class of unitary operator $U(\tau)$ generated by a local quantum circuit. Following Ref.~\cite{vonkeyserlingk_operator_2018}, we consider a discrete-time evolution, made of layers of two-site random unitary gates, acting on pairs of neighboring sites. Each time step consists of two layers, an odd layer acting on all odd neighbouring bonds, and an even layer acting on all the even neighbouring bonds. In a clear analogy to what was done in~\cite{vonkeyserlingk_operator_2018} for the entanglement, one can easily show that the average OTOC for two-qubit gates drawn according to a $2$-design distribution in a brickwork fashion behaves as 
\be 
\langle\otoc(U(\tau))\rangle &=2^{-2 n_A}+2^{-2 n_D}-2^{-2(n_A+n_D)}+\left(1-2^{-2 n_A}+2^{-2 n_D}-2^{-2(n_A+n_D)}\right)e^{-\frac{\tau}{2} \log\left(\frac{1}{1-\lambda_L^2}\right)}
\ee 
where we denote the average as $\langle\cdot\rangle$. Therefore, one has the following scaling for 
\be
\langle\otoc(U(\tau))\rangle= \exp\left(-\frac{\tau}{2} \log\left(1-\lambda_L^2\right)^{-1}\right)+O(\exp(\tau)/2^{2n_A})\,.
\ee
If we use the connection between $\otoc(U)$ and entanglement entropy \eqref{eq:s2otoc} we obtain
\be \label{eq:s2entav}
\langle S_{2}(\rho_{AC}(\tau))\rangle =n_C-n_A + c(\lambda_L)\tau+O(\exp(\tau)/2^{2n_A})
\ee 
for $c(\lambda_L)=\frac{1}{2\log 2}\log \frac{1}{(1-\lambda_L^2)}$. The result shows how in a $2$-design random circuit model, the scrambling velocity is connected to the Lyapunov exponent. Notably, \cref{eq:s2entav} holds for local random Clifford circuits (being unitary $2$-design), as well as for $t$-doped local random Clifford circuit for any value of the doping. 

Motivated by \cref{eq:s2entav}, we hypothesize that for a single instance of a typical random local $t$-doped Clifford circuit, the behaviour of $S_2(\rho_{AC}(\tau))$ is approximately the one given in \cref{eq:s2entav}. Therefore, we assume that for $\tau=o(n_A)$ one has $S_{2}(\rho_{AC}(t))=n_c-n_A+c(\lambda_L)\tau$ with the constant $c(\lambda_L)=\frac{1}{2\log 2}\log \frac{1}{(1-\lambda_L^2)}$ and some Lyapunov exponent $\lambda_L$ to be determined. Then, one can estimate the Lyapunov exponent of the circuit, by performing a linear regression of the entanglement entropy of $\rho_{AC}(\tau)$ versus the number of layers $\tau$. In the following, we will consider the case $n_A=n_C-\omega(\log n)$, with $n_C>n_A>1$. In such a case, it is clear that a simple swap test cannot be used to estimate the Lyapunov exponent because the number of measurements required to estimate $S_2(\rho_{AC})$ will be superpolynomial. However, using our results on ED phase, from \cref{eq:detectability}, we are able to estimate the entanglement entropy $S_1$, or the $2$-R\'enyi entropy $S_2$, up to a $o(1)$ relative error in the ED phase. Assuming the entanglement growth linear in $\tau$, for any $\tau\in[\omega(\max(\log n, \nu)),o(n_A)]$ the state $\rho(\tau)$ is ED in the bipartition $AB|CD$. Therefore, as long as the doping $\nu=o(n_A)$, one can always find a useful time-window to perform a linear regression and estimate the Lyapunov exponent $\lambda_L$. The only aspect that changes with the nullity, $\nu$, of $\ket{U(\tau)}$ is the useful time window during which one can estimate the $\otoc$ using stabilizer entanglement. In particular, as shown in \cref{eq:detectability}, in the ED phase, we can estimate $c(\lambda_L)$ with an asymptotically vanishing error $o(1)$, which, via standard error propagation, transfers to an asymptotic vanishing error on the Lyapunov exponent $\lambda_L$.

In summary, given query access to a unitary $U(\tau)$, an experimenter can construct its Choi representation by applying it to a maximally entangled state $2^{-n/2}\sum_{i}\ket{ii}$. Then, they can apply \cref{thm:stab-learn} to approximate the stabilizer entanglement, hence allowing them to estimate the Lyapunov exponent of the circuit. Alternatively, if the experimenter has access to a gate set, it is possible to use \cref{alg:monitor} to directly monitor the entanglement of the Choi representation $U(\tau)$ and obtain an estimate of the Lyapunov exponent efficiently through classical simulation.

\section{Understanding phenomenology with Entanglement- vs. Magic-dominated phase}\label{sec:phenomenology}

In addition to the physical applications discussed in the previous section, the tools and techniques developed in this work can also provide theoretical insights into various phenomena observed in numerical simulations of quantum circuits. In this section, we showcase the utility of the separation between ED and MD phases for gaining a theoretical understanding of previously observed phenomena in numerical simulations. 

\medskip 
\parhead{Entanglement cooling.} The first phenomenon revolves around a task known as \emph{entanglement cooling} \cite{chamon_emergent_2014,shaffer_irreversibility_2014}. The essential idea is that the complexity of a state's entanglement structure can be measured by the difficulty of disentangling a state across a given bipartition without any prior knowledge of the entangling dynamics that generated the state. To be more precise, fix $A|B$ and a entangling unitary $U$ that produces $\ket{\psi}=U\ket{0}^{\otimes n}$. The entanglement cooling task is the following: given only query access to $\ket{\psi}$ (i.e., no knowledge of $U$), design a unitary operator $U'$ such that the state $\ket{\psi'} \coloneqq U'\ket{\psi}$ is minimally entangled across $A|B$. The success of the entanglement cooling is quantified by the ratio $R$ between the entanglement entropy before and after the cooling:
\be
R=\frac{S_{1}(\psi_{A}')}{S_{1}(\psi_{A})}
\ee
The lower the ratio $R$, the more successful the protocol. In Ref.~\cite{true_transitions_2022}, the authors explore the entanglement cooling problem concerning states formed by random Clifford+T circuits. They restrict themselves to Clifford operations in the cooling process and aim to identify good unitaries via a local greedy search. Importantly, they find that the efficacy of this entanglement cooling method decreases as the number of T-gates used to prepare the state increases. In particular, their numerical results indicate that Clifford unitaries suffice to cool the entanglement of a random $t$-doped state as long as $t\lesssim n$, but fails otherwise. Notably, the tools and methods developed in this paper give us an analytical way of understanding these results.

\begin{lemma}[Entanglement cooling via Clifford operations]
For any bipartition $A|B$, let $\ket{\psi}$ be a state generated by a Clifford+T circuit with $t\leq \frac{n}{2}$ T gates. There exists a Clifford circuit $C'$ such that, when applied on the state $\ket{\psi'}=C'\ket{\psi}$, the cooling ratio is $R=0$. That is, $\ket{\psi'}$ is a product state with respect to the bipartition $A|B$. Moreover, it is possible to find $C'$ using just $O(n)$ queries to $\ket{\psi}$.
\end{lemma}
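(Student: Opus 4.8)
The plan is to read off the answer from the algebraic structure of $t$-doped states rather than from any optimization. Such a state is $\nu$-compressible (\cref{cor:nucompressibility}), so a single global Clifford strips it down to a small non-stabilizer ``core'' living on $\nu$ qubits, and the whole task reduces to placing that core entirely on one side of the cut. First I would sharpen the nullity bound for the specific gate set. Although a general single-qubit non-Clifford gate only guarantees $\nu \le 2t$, a $T$ gate is diagonal: $T Z T^\dagger = Z$, so conjugation by $T$ fixes every stabilizer generator that acts as $I$ or $Z$ on the doped qubit. The subgroup $G_0$ of such generators has index at most $2$ in the stabilizer group $G$ (the quotient injects into a single $\mathbb{Z}_2$ of $X/Y$-versus-$I/Z$ content on that qubit), so $|G_0| \ge |G|/2$ and $T G_0 T^\dagger = G_0$ still stabilizes the evolved state. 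Hence each $T$ gate raises the nullity by at most one, giving $\nu \le t$, and the hypothesis $t \le n/2$ yields $\nu \le n/2$.

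Next I would build $C'$ explicitly. Invoking \cref{cor:nucompressibility} together with the nullity-distillation circuit of \cref{alg:nullity}, I obtain a Clifford $W$ with $W\ket{\psi} = \ket{0}^{\otimes(n-\nu)}\otimes\ket{\chi}$, where $\ket{\chi}$ is the $\nu$-qubit non-stabilizer core and the remaining $n-\nu$ qubits are pinned to $\ket{0}$. Assuming without loss of generality that $n_B \ge n_A$, so that $n_B \ge n/2 \ge \nu$, I would compose $W$ with a Clifford permutation (a SWAP network) that routes the $\nu$ core qubits to sit entirely inside $B$, while the $n-\nu$ pinned qubits cover all of $A$ together with the remaining $n_B-\nu$ qubits of $B$. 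Writing $C'$ for this composite Clifford, we then have $C'\ket{\psi} = \ket{0}^{\otimes n_A}_A \otimes \ket{\omega}_B$ with $\ket{\omega} = \ket{0}^{\otimes(n_B-\nu)}\otimes\ket{\chi}$, which is manifestly a product state across $A|B$. Thus $S_1(\psi'_A) = 0$ and $R = 0$.

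To find $C'$ from black-box access alone, I would first learn the stabilizer group via \cref{thm:stab-learn}, which consumes $O(n)$ copies of $\ket{\psi}$. Given the generators, both $W$ (from \cref{alg:nullity}) and the routing permutation are produced by classical linear algebra on the tableau in $\poly(n)$ time, so no further queries are needed; this establishes the $O(n)$-query claim.

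The main obstacle is the interplay between the nullity and the cut geometry: the construction factorizes the state only when the $\nu$ core qubits fit on one side, i.e. $\nu \le \max(n_A,n_B)$, and since the cut may be balanced, $\max(n_A,n_B)$ can be as small as $\lceil n/2\rceil$. This is precisely why the refined diagonal-gate bound $\nu \le t$ (rather than $\nu \le 2t$) is indispensable, and why the hypothesis must be $t \le n/2$. A secondary subtlety is exactness: \cref{thm:stab-learn} returns a group $\hat{G}\supseteq G$ that is only $\epsilon$-approximate, whereas $R=0$ demands the \emph{true} stabilizer group so that the pinned and core qubits are identified exactly. I would handle this by noting that for $t$-doped states the learning recovers $G$ exactly with overwhelming probability (non-stabilizer Paulis have Weyl coefficients bounded away from $1$), or, when an explicit circuit description is available, $G$ is computed exactly and classically from the tableau formalism; either route delivers a genuine product state and hence $R=0$.
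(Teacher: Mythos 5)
Your proposal is correct and follows essentially the same route as the paper's proof: exploit the diagonality of the $T$ gate to get the sharpened nullity bound $\nu \le t$ (which the paper cites rather than rederives), apply nullity distillation to compress the non-stabilizer core onto $\nu \le n/2$ qubits, swap that core onto the larger side of the cut, and learn the stabilizer group with $O(n)$ queries. Your explicit index-two subgroup argument for the $T$-gate bound and your remark on the exactness of the learned stabilizer group are welcome elaborations of steps the paper leaves implicit, but they do not change the underlying argument.
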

\begin{proof}
Given query access to some state $\ket{\psi}$, we first learn the stabilizer generators of the state using $O(n)$ queries. We note that this state must have at least $n-t$ stabilizer generators (as opposed to the general bound $n-2lt$, which comes from the fact that the $T$ gate is diagonal~\cite{leone_learning_2024}). Then, using nullity distillation (\cref{alg:nullity}), we can find a Clifford unitary that maps $\ket{\psi}$ to $\ket{\psi'} 
=\ket{0}^{\otimes (n-t)} \otimes \ket{\psi''}$, where $\ket{\psi''}$ is supported on $t \leq \frac{n}{2}$ qubits. Also, at least one of the bipartitions has $\geq \frac{n}{2}$ qubits. Simply by applying swaps appropriately, we can move $\ket{\psi'}$ so that is supported only on this bipartition, with the rest of the system in a product of $\ket{0}$ states. Now, the resulting state is completely unentangled across the bipartition $A|B$.
\end{proof}

\parhead{Entanglement and magic phases in hybrid quantum circuits.} The second phenomenon we study revolves around hybrid quantum circuits, which are characterized by random circuits interspersed with measurements. These constructions have attracted significant attention in recent years~\cite{zabalo_critical_2020,gullans_dynamical_2020,oliviero_transitions_2021}, and a series of works have found a phase transition from volume-law entanglement to area-law entanglement in the states produced these hybrid quantum circuits~\cite{skinner_measurementinduced_2019,li_measurementdriven_2019,bao_theory_2020}. This motivated an inquiry into whether a similar phase transition exists for magic, and in Ref.~\cite{fux2023entanglementmagic}, the authors numerically demonstrated an ``entanglement-magic separation" in hybrid quantum circuits constructed with random local Clifford gates interspersed with both measurements and $T$-gates. While one might expect four phases corresponding to each quadrant of (volume law, area law)$\times$(entanglement, magic), only three were found: the \textit{volume law entanglement} and \textit{area law magic} phase was never realized. Intriguingly, this missing phase corresponds to ED states, where magic scales sub-extensively with entanglement. In our analysis, leveraging the toolkit developed in this paper, we provide analytical proof of why this should be the case. More precisely, we establish analytically that for a hybrid quantum circuit to exhibit entanglement-dominated states, the probability of applying $T$ gates must vanish exponentially with $n$. This provides a quantitative rationale for the absence of this phase in the numerical results of Ref.~\cite{fux2023entanglementmagic}, which used only a polynomial vanishing probability of applying $T$-gates with respect to $n$, insufficient to cool down magic towards the ED phase.

Since our focus lies in the transition ED-MD, we design a simplified version of hybrid quantum circuits characterized solely by volume law entangled states. Typically, random hybrid circuits consist of random local gates arranged in a brickwork structure. However, in our case, we employ global random Clifford circuits, pumping up entanglement to its maximum value while leaving magic invariant. This design choice allows us to specifically address the ``ratio" between magic and entanglement. In particular, we consider a hybrid quantum circuit composed of layers of \textit{random global} Clifford circuits interleaving the application of $R_z(\theta)$ gates (with random $\theta$) and measurements in the computational basis (see \cref{fig:sketch}). While we assume that the Clifford layers are applied with unit probability, at each layer we consider a probability $p_t$ for applying a $R_z(\theta)$-gate and $p_m$ for performing a measurement with $p_t+p_m\le 1$. On the other hand, we consider the probability of applying both at the same time to be zero. Thanks to the Clifford invariance of the problem, the Markov chain governing the evolution of the stabilizer nullity $\nu$ of the resultant state is extremely simple and can be mapped into a simple birth-death process.
\begin{figure}
    \centering
       {
       \graphicspath{{figures/}}
        \newcommand{\fs}{\scriptsize}
        \newcommand{\fss}{\footnotesize}
    \def\svgwidth{0.57\textwidth}
    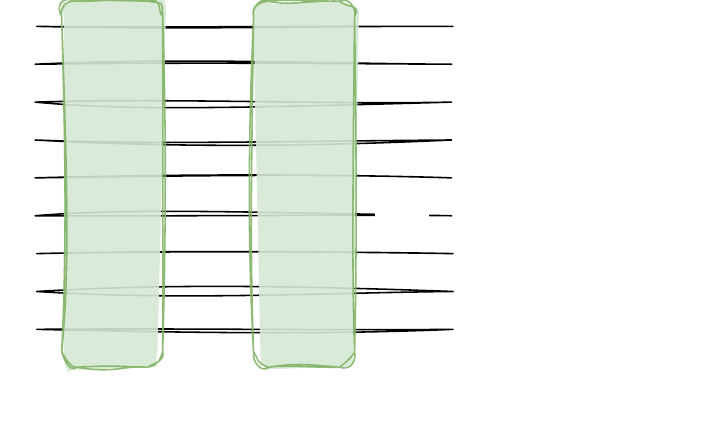
    }
    \caption{A schematic representation of a $t$-doped random circuit consists of alternating layers of random Clifford circuits, random diagonal gates $R_z(\theta)$ and measurements in the local $Z$ basis.}
    \label{fig:sketch}
\end{figure}

Before stating our result, we lay down some preliminary considerations. Consider a state $\ket{\psi}$ with a stabilizer group $G_{\psi}$. We apply a $R_z(\theta)$ gate on $\ket{\psi}$ sandwiched by a global \textit{random} Clifford circuit $\mathcal{C}_1$, so we define $\ket{\phi}=\mathcal{C}_1^\dagger R_z(\theta) \mathcal{C}_1 \ket{\psi}$. This sandwiching can always be done by mapping $\mathcal{C}_2 \to \mathcal{C}_2 \mathcal{C}_1$, which does not change the distribution of $\mathcal{C}_2$ because both $\mathcal{C}_1,\mathcal{C}_2$ are uniformly sampled from the Clifford group. Trivially, we know that $\abs{G_{\phi}} \ge \abs{G_{\psi}}/2$. However, since $\theta$ is chosen at random $\theta\in[0,2\pi]$, up to a set of measure zero, it holds that  $\abs{G_{\phi}}\le \abs{G_{\psi}}$. Therefore, we have $\abs{G_{\phi}}\in\{\abs{G_{\psi}},\abs{G_{\psi}}/2\}$ with probability 1. For a random Clifford $\mathcal{C}_1$, $\mathcal{C}_1^{\dagger}R_z(\theta) \mathcal{C}_1=e^{i\theta P}$ where $P$ is a random non-identity Pauli operator. Now, there are two cases: either $P$ commutes with all the elements in $G_{\psi}$, or $P$ anticommutes with $\abs{G_{\psi}}/2$ elements of $G_{\psi}$. In the first case, one has that $\abs{G_{\phi}}=\abs{G_{\psi}}$, while in the latter case one has $\abs{G_{\phi}}=\abs{G_{\psi}}/2$.
It is quite easy to be convinced that the probability that a random Pauli operator anticommutes with $\abs{G_{\psi}}/2$ elements of $G_{\psi}$ is $f(\nu)\coloneqq1-\frac{2^{n+\nu}-1}{4^n-1}$. Now, we analyze measurement in the computational basis. Consider a measurement sandwiched by a random Clifford $\mathcal{C}_2$ applied to a state $\ket{\psi}$ with stabilizer group $G_{\psi}$. The resultant states are $\ket{\phi_{\pm}}\propto (I\pm P)\ket{\phi}$ with $P$ being a random Pauli operator. Since we are interested only in what happens to the stabilizer group (and not its phases), without loss of generality, we can consider the $+1$ outcome. There are two situations that may arise, either $P$ commutes with all the elements of $G_{\psi}$ or $P$ anticommutes with $\abs{G_{\psi}}/2$ elements of $G_{\psi}$. In the latter case, it is easy to see that $\abs{G_{\phi}}=\abs{G_{\psi}}$~\cite{beverland_lower_2020}. In the former case, we are not yet ensured that the dimension of the stabilizer group of $\ket{\phi}$ doubles. In order for the stabilizer group to double, on top of commuting with all the elements in $G_{\psi}$, we must also have $P \not\in G_\psi$. The probability that $P$ commutes with $G_{\psi}$ is given by $1-f(\nu)$ as before, and if $P$ commutes with $G_{\psi}$ and given that both $P$ and $G_{\psi}$ are sandwiched by random Clifford operations, the probability that $P$ is generated by $G_{\psi}$ is simply $\abs{G_{\psi}}/d=2^{-\nu}$. Therefore, putting all together, the probability that a measurement doubles the size of $G_{\psi}$ is given by $(1-f(\nu))(1-2^{-\nu})$.  From this simple analysis, it becomes clear that the localizing power of the measurements, is much less than the spreading of magic given by the application of $R_z(\theta)$ gates. We make this intuition rigorous in the following theorem.
\begin{theorem}[Absence of entanglement-dominated states in hybrid quantum circuits]\label{th13} Consider infinite layers of the hybrid quantum circuit in \cref{fig:sketch} with $p_t$ the probability of applying a $R_z(\theta)$ and $p_m$ being the probability of applying a measurement in the computational basis. Define $r=p_t/p_m$, then for any $r=\omega(\exp(-n))$ the resulting state is in the MD phase.
\begin{proof}
As anticipated above, the stabilizer nullity at each layer can be modelled as a simple Markov chain -- specifically, a birth-death process. The birth probability $p_{+}(\nu)$ refers to the probability of an increasing stabilizer nullity, while the death probability $p_{-}(\nu)$ refers to the case where nullity decreases:
\begin{equation}
    \begin{aligned}
        p_{-}(\nu)&= p_m \times (1-f(\nu))(1-2^{-\nu}), \\ p_{+}(\nu)&=p_t \times f(\nu).        
    \end{aligned}
\end{equation}
Note that $p_{-}(0)=p_{+}(n)=0$, while the probability that the nullity stays the same is $1-p_{+}(\nu)-p_{-}(\nu)$. In the limit of infinite layers, we converge to the stationary distribution of a birth-death Markov process with birth probability $p_{+}$ and death probability $p_{-}$. The steady-state distribution of such a process is~\cite{karlin1957classification}:
\be
\pi(\nu)=\begin{cases}
\frac{p_{+}(0)}{Z}\quad\quad\quad\quad\quad\,\,\,\,\nu=0\\
\frac{1}{Z}\frac{p_{+}(0)\cdots p_{+}(\nu-1)}{p_{-}(1)\cdots p_{-}(\nu)}\quad \nu>0
\end{cases};\quad\quad Z=\sum_{\nu=0}^{n}\frac{p_{+}(0)\cdots p_{+}(\nu-1)}{p_{-}(1)\cdots p_{-}(\nu)}
\ee
Recalling the definition of $f(\nu)=1-\frac{2^{n+\nu}-1}{4^n-1}$, we approximate $f(\nu) \approx 1-2^{\nu-n}$ for simplicity. Therefore, we have $p_{-}(\nu)=\frac{p_{m}}{2^n}(2^{\nu}-1)$ and $p_{+}(\nu)=p_t(1-2^{\nu-n})$. Plugging the probabilities, the steady probability distribution $\pi(\nu)$ reads
\be\label{eq:pi-bound}
\pi(\nu)=\frac{r^{\nu}}{Z}\prod_{i=1}^{\nu}\frac{(2^n-2^{i-1})}{(2^{i}-1)} \geq \frac{r^\nu 2^{n \nu}}{Z \cdot 2^{(\nu+1)/2}} \prod_{i=1}^{\nu} (1-2^{i-n-1}) \geq \frac{r^\nu 2^{n \nu}}{4Z \cdot 2^{(\nu+1)/2}} \geq \frac{1}{6Z}\qty(\frac{2^n r}{\sqrt{2}})^\nu.
\ee
Since we aim to lower bound the expectation $\expval{\nu}$, it suffices to upper bound $Z$.
\begin{equation}\label{eq:zbound}
\begin{gathered}
        Z =p_{t}(1-2^{-n})+\sum_{\nu=1}^{n}r^{\nu}\prod_{i=1}^{\nu}\frac{2^n-2^{i-1}}{2^i-1} \leq p_t + \sum_{\nu=1}^n r^\nu \prod_{i=1}^\nu \frac{2^n-2^{i-1}}{2^i-1} \leq p_t + \sum_{\nu=1}^n r^\nu \prod_{i=0}^{\nu-1} \qty(2^{n-i}-1) \\
        \leq p_t + \sum_{\nu=1}^n r^\nu 2^{n\nu - \nu^2/2^2/2} \leq \sum_{\nu=0}^n r^\nu 2^{\nu(n-\nu/2)\nu^2/2} \leq O(2^{n^2/2}),
\end{gathered}
\end{equation}
where we assume $r \leq 1$ WLOG in the last line (if $r > 1$, clearly the expected stabilizer nullity can only decrease). Now, observe that we can combine \cref{eq:pi-bound,eq:zbound} to find $\pi(n) \geq \Omega(r^n 2^{n^2/2})$, so $\expval{\nu} \geq n \pi(n) \geq \Omega(n)$ when $r = \omega(\exp(-n))$. 

\end{proof}
\end{theorem}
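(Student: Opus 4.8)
The plan is to reduce the evolution of the stabilizer nullity to a one-dimensional random walk and then analyze its stationary behavior. Using the Clifford invariance of the circuit (the random global Cliffords $\mathcal{C}_1,\mathcal{C}_2$ conjugate the $R_z(\theta)$ gate into $e^{i\theta P}$ for a uniformly random Pauli $P$, and likewise randomize the measurement basis), the only relevant degree of freedom at each layer is the integer nullity $\nu\in\{0,1,\dots,n\}$, and each layer moves $\nu$ by at most one. This gives a birth--death structure: an $R_z(\theta)$ gate (probability $p_t$) can only \emph{raise} $\nu$, and does so exactly when $P$ anticommutes with half of $G_\psi$, an event of probability $f(\nu)$; a measurement (probability $p_m$) can only \emph{lower} $\nu$, and does so exactly when $P$ commutes with $G_\psi$ and is not generated by it, an event of probability $(1-f(\nu))(1-2^{-\nu})$. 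So I would set $p_+(\nu)=p_t f(\nu)$ and $p_-(\nu)=p_m(1-f(\nu))(1-2^{-\nu})$, with reflecting boundaries $p_-(0)=p_+(n)=0$.

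Next I would invoke the standard closed form $\pi(\nu)\propto\prod_{i=1}^{\nu}p_+(i-1)/p_-(i)$ for the stationary distribution of a birth--death chain, and simplify it with the approximation $f(\nu)\approx 1-2^{\nu-n}$ (exact up to $O(4^{-n})$). Substituting $p_-(\nu)=\tfrac{p_m}{2^n}(2^\nu-1)$ and $p_+(\nu)=p_t(1-2^{\nu-n})$ collapses the product to $\pi(\nu)\propto r^\nu\prod_{i=1}^{\nu}\frac{2^n-2^{i-1}}{2^i-1}$ with $r=p_t/p_m$. The entire remaining task is to show $\langle\nu\rangle=\sum_\nu \nu\,\pi(\nu)=\Omega(n)$: since the global Clifford layers pump the entanglement to volume law, $S_1(\psi_A)=\Theta(n)$, a nullity $\nu=\Omega(n)$ forces $S_1(\psi_A)=O(\nu)$, i.e.\ the MD phase by \cref{def:ent-dom}.

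For the expectation I would lower-bound $\langle\nu\rangle\ge n\,\pi(n)$ and control the two factors separately. The key observation is that $\prod_{i=1}^{\nu}\frac{2^n-2^{i-1}}{2^i-1}=2^{\nu(\nu-1)/2}\binom{n}{\nu}_2$ is, up to a power of two, a Gaussian binomial coefficient, whose value near $\nu=n$ contributes a factor $\sim 2^{n^2/2}$; pairing this with $r^n$ gives $\pi(n)\gtrsim r^n 2^{n^2/2}/Z$. It then remains to upper-bound the normalization $Z$, which I would do by the crude per-term estimate $\prod_{i=1}^{\nu}\frac{2^n-2^{i-1}}{2^i-1}\le\prod_{i=0}^{\nu-1}(2^{n-i}-1)\le 2^{\nu n-\nu^2/2}$ (valid because $\prod_{i=1}^\nu(2^i-1)\ge 2^{\nu(\nu-1)/2}$), and assuming $r\le 1$ without loss of generality so that the geometric-Gaussian sum is maximized at $\nu^\star\approx n+\log_2 r$ with exponent at most $n^2/2$, yielding $Z=O(2^{n^2/2})$ up to polynomial factors. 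Combining the two estimates gives $\pi(n)=\Omega(1)$ in the relevant regime, hence $\langle\nu\rangle=\Omega(n)$.

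The main obstacle is precisely this asymptotic bookkeeping. The stationary weights $a_\nu=r^\nu 2^{\nu(\nu-1)/2}\binom{n}{\nu}_2$ are sharply peaked, and one must locate the peak $\nu^\star\approx n+\log_2 r$, verify that it sits at $\Theta(n)$ throughout the claimed range of $r$, and simultaneously check that the low-nullity tail near $\nu=0$ does not dominate $Z$. Pinning down the correct threshold on $r$ — confirming that only an exponentially small ratio can pull the chain back into the low-nullity (ED) region, so that any $r=\omega(\exp(-n))$ leaves the steady state with extensive nullity — is the delicate quantitative heart of the argument; by contrast, the birth--death reduction and the stationary-distribution formula are routine once the per-layer transition probabilities are identified.
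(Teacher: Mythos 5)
Your proposal follows essentially the same route as the paper's proof: the same birth--death reduction with $p_+(\nu)=p_t f(\nu)$ and $p_-(\nu)=p_m(1-f(\nu))(1-2^{-\nu})$, the same stationary-distribution formula and approximation $f(\nu)\approx 1-2^{\nu-n}$, and the same strategy of bounding $\langle\nu\rangle\ge n\,\pi(n)$ by lower-bounding $\pi(n)$ via the product $r^{\nu}\prod_{i=1}^{\nu}\frac{2^n-2^{i-1}}{2^i-1}$ and upper-bounding the normalization $Z\le O(2^{n^2/2})$ assuming $r\le 1$. The only cosmetic difference is your repackaging of that product as a Gaussian binomial coefficient; the quantitative estimates you outline are otherwise the ones the paper uses.
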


The implications of the above theorem are twofold. Firstly, it provides an analytical explanation for the empirical observation in Ref.~\cite{fux2023entanglementmagic} that the fourth phase -- namely, the entanglement-dominated one -- is absent from random hybrid quantum circuits. Secondly, it firmly establishes that hybrid quantum circuits cannot generate ED states within a reasonable timeframe, since \cref{th13} says that the probability ratio must decrease exponentially with $n$, consequently the convergence time of the process scale at least exponentially with $n$. A natural question is whether it is possible to improve upon this result. The answer is yes, but only slightly. In the following, we design a slightly more optimistic scenario for which hybrid quantum circuits create ED states.  This is achieved without an exponentially small probability of applying $R_z(\theta)$ gates, but rather by adapting probabilities to the stabilizer nullity of the state at a given layer, with the ratio being $r(\nu)$ a function of $\nu$.

\begin{theorem}[Producing entanglement-dominated states via adaptive measurements]\label{th14} Consider infinite layers of the hybrid quantum circuit in \cref{fig:sketch} with $p_t(\nu)$ the probability of applying a $R_z(\theta)$ and $p_m(\nu)$ being the probability of applying a measurement in the computational basis, both depending on the nullity $\nu$ of the state at each layer. Defining $r(\nu)=p_t(\nu-1)/p_{m}(\nu)$ for $\nu>0$ and  $r(\nu)=p_t(0)$, we obtain:
\begin{itemize}
        \item For $r(\nu)=\omega(\exp(-k))$ for $k=n-\nu$, then the state is within the MD phase.
        \item For $r(\nu)=O(\exp(-k))$ for $k=n-\nu$ then the resulting state is within the ED phase.
    \end{itemize}

\begin{proof}
The proof uses the same notation and techniques of \cref{th13}. The steady-state distribution $\pi(\nu)$ is
\be
\pi(\nu)=\frac{1}{Z}\left(\prod_{i=1}^{\nu}r(i)\right)\frac{\prod_{i=0}^{\nu-1}(2^n-2^i)}{\prod_{i=1}^{\nu}(2^i-1)}=\frac{1}{Z}\prod_{i=1}^{\nu}r(i)\frac{(2^n-2^{i-1})}{(2^{i}-1)},\label{pinu}
\ee
where we defined $r(i)=p_{t}(i-1)/p_{m}(i)$ for $i\ge1$. We only need to establish that for $r(\nu)=O(\exp(-(\nu-n))$, ED states are realized. We will proceed with a constructive proof as follows. Starting from \cref{pinu}, we choose
\be
r(i)=r_{0}\frac{(2^{i}-1)}{(2^n-2^{i-1})},\quad i\in[1,n]\label{choice}
\ee
with some $r_{0}\neq 1$ to be fixed later, so we have $\pi(\nu)=r_{0}^{\nu}/Z$. With such a choice, we compute $Z$:
\be
Z=Z\pi(0)+\sum_{\nu=1}^{n}r_{0}^{\nu}=Z\pi(0)+r_0\frac{r_{0}^{n}-1}{r_0-1}\implies Z=\frac{1}{1-\pi(0)}r_0\frac{r_{0}^{n}-1}{r_0-1}.
\ee
We have the following consistency condition
\be
1=(1-\pi(0))\frac{1}{r_0}\frac{r_0-1}{r_{0}^n-1}\implies \pi(0)=1-r_{0}\frac{1-r_{0}^{n}}{1-r_0}.
\ee
If we fix $r_{0}=1/2$, we get $\pi(0)=1/2^{n}$. We can thus express the steady probability simply as
\be
\pi_n(\nu)=\begin{cases}2^{-n},\quad \nu=0\\
2^{-\nu},\quad \nu\in[1,n]
\end{cases}
\ee
from which we can bound the average stabilizer nullity $\langle\nu\rangle_{\nu\sim\pi_{n}} \leq \sum_{\nu=1}^\infty \nu 2^{-\nu} = 2$, which realizes the ED phase. In this case, recall that through \cref{choice}, we chose $r(\nu)=\frac{1}{2}\frac{2^\nu-1}{2^{n}-2^{n-\nu}}=\Theta(2^{-(n-\nu)})$. For the case where $r(\nu)=\omega(\exp(-(n-\nu)))$ and defining $r_0=\min_{\nu\in[1,n]}r(\nu)$, we have $r_{0}=\omega(\exp(-n))$, so a proof similar to the one presented in \cref{th13} follows.
\end{proof}

\end{theorem}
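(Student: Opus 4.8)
The plan is to carry the birth-death Markov chain analysis of \cref{th13} over to the adaptive setting, where the only change is that the gate and measurement probabilities depend on the instantaneous nullity. Because the global random Clifford layers pin the entanglement to its volume-law maximum $\Theta(n)$ while leaving magic untouched, the ED/MD classification is governed entirely by the steady-state expected nullity $\langle\nu\rangle$ (ED iff $\langle\nu\rangle=o(n)$, MD iff $\langle\nu\rangle=\Omega(n)$), exactly as in \cref{th13}. First I would record the stationary distribution of the chain. Writing the birth and death rates at nullity $\nu$ as $p_+(\nu)=p_t(\nu)f(\nu)$ and $p_-(\nu)=p_m(\nu)(1-f(\nu))(1-2^{-\nu})$, detailed balance gives $\pi(\nu)\propto\prod_{i=1}^{\nu}p_+(i-1)/p_-(i)$. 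Substituting $f(\nu)=1-\frac{2^{n+\nu}-1}{4^n-1}$ and absorbing the ratios $p_t(i-1)/p_m(i)$ into $r(i)$ yields the compact nullity-dependent form
\begin{equation*}
\pi(\nu)=\frac{1}{Z}\left(\prod_{i=1}^{\nu}r(i)\right)\frac{\prod_{i=0}^{\nu-1}(2^n-2^i)}{\prod_{i=1}^{\nu}(2^i-1)},
\end{equation*}
the direct analogue of the expression in \cref{th13}. Everything then reduces to estimating $\langle\nu\rangle$ under $\pi$.

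For the magic-dominated direction I would reduce directly to \cref{th13}. Since $\exp(-(n-\nu))\ge\exp(-n)$ for $\nu\ge0$, the hypothesis $r(\nu)=\omega(\exp(-(n-\nu)))$ forces $r(\nu)=\omega(\exp(-n))$, so setting $r_0=\min_{\nu}r(\nu)$ and replacing each factor $r(i)$ by $r_0$ lower-bounds $\pi$ by exactly the constant-ratio distribution analyzed in \cref{th13}. The same two estimates used there --- bounding $Z=O(2^{n^2/2})$ and $\pi(n)\ge\Omega(r_0^n 2^{n^2/2})$ --- then give $\langle\nu\rangle\ge n\,\pi(n)=\Omega(n)$, placing the state in the MD phase.

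For the entanglement-dominated direction the theorem only requires realizability, so I would argue constructively. The key observation is that the combinatorial factor $\frac{2^n-2^{i-1}}{2^i-1}$ can be cancelled by choosing $r(i)=r_0\frac{2^i-1}{2^n-2^{i-1}}$ on $i\in[1,n]$, which collapses $\pi$ to the geometric law $\pi(\nu)=r_0^\nu/Z$. Solving the self-consistency equation $\sum_{\nu=0}^{n}\pi(\nu)=1$ fixes $Z$ and $\pi(0)$; the convenient choice $r_0=1/2$ yields $\pi(0)=2^{-n}$ and $\pi(\nu)=2^{-\nu}$ for $\nu\ge1$, whence $\langle\nu\rangle\le\sum_{\nu\ge1}\nu\,2^{-\nu}=2=O(1)$, certifying the ED phase. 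A direct check shows this schedule satisfies $r(\nu)=\Theta(2^{-(n-\nu)})$, consistent with the stated $O(\exp(-(n-\nu)))$ regime, and the general upper-bound case follows from the same domination argument as the MD direction, now applied from above.

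I expect the delicate step to be the MD reduction, specifically the claim that the pointwise asymptotic bound $r(\nu)=\omega(\exp(-(n-\nu)))$ survives the minimization $r_0=\min_\nu r(\nu)$ over an index set that grows with $n$, since a statement holding separately for each $\nu$ need not pass to the minimum uniformly. I would address this by requiring the bound to be uniform in $\nu$ (as it automatically is for any explicitly specified schedule, including the constructive one above), so that the domination $r(i)\ge r_0=\omega(\exp(-n))$ is legitimate and the \cref{th13} estimates transfer verbatim.
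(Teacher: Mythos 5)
Your proposal is correct and follows essentially the same route as the paper: the same stationary distribution for the nullity-dependent birth-death chain, the same cancellation choice $r(i)=r_0\frac{2^i-1}{2^n-2^{i-1}}$ with $r_0=1/2$ collapsing $\pi$ to a geometric law with $\langle\nu\rangle\le 2$ for the ED direction, and the same reduction to \cref{th13} via $r_0=\min_\nu r(\nu)=\omega(\exp(-n))$ for the MD direction. Your closing remark about requiring uniformity in $\nu$ before taking the minimum is a fair point of rigor that the paper glosses over, but it does not alter the argument.
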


In \cref{th14}, we demonstrated that by adapting the probabilities $p_t$ and $p_m$ based on the outcome of the stabilizer nullity measurement (which can be conducted efficiently, see \cref{sec:unknownstates}), the outlook for the presence of an ED phase is more optimistic compared to the result in \cref{th13}. Specifically, by tuning the ratio between the probability of applying $T$-gates and measurements to decrease exponentially in the size of the stabilizer group $n-\nu$ rather than in $n$, hybrid quantum circuits can indeed generate ED states. While this scaling of the ratio is significantly better than $\exp(-n)$, particularly in the high stabilizer nullity regime, achieving convergence to the ED phase, i.e., in the regime of small nullity $\nu$, still requires the ratio to be exponentially small for assurance of convergence. Thus, although this approach slightly accelerates the convergence to ED states, the convergence time remains exponential in $n$. 

Despite these findings, the emergence of ED states in hybrid quantum circuits appears unlikely. However, Ref.~\cite{bejan2023dynamical} introduces another scenario involving correlated monitoring, where measurements are conducted immediately after a non-Clifford gate with a certain probability. This technique can evidently result in ED states. In fact, with perfect correlation (i.e., a measurement following every non-Clifford gate application), the entire circuit effectively behaves as a Clifford circuit. While this scenario provides valuable insights, it assumes perfect knowledge of the applied gates from the observer's perspective, which may not always be realistic. In contrast, our adaptive measurement strategy allows for determining nullity in polynomial time (see \cref{sec:unknownstates}), but identifying the occurrence and location of a $T$-gate remains challenging~\cite{vandewetering2023optimising}. Nonetheless, investigating the transition from MD to ED phases based on the observer's monitoring knowledge is an intriguing avenue for future research, as discussed in Ref.~\cite{bejan2023dynamical}.

\section{Pseudocode}\label{sec:Pseudocodesfornumerics}
Here we outline algorithms based on the main results of this work. These algorithms are designed for three tasks: exact calculation of the $2$-Rényi entropy, entanglement monitoring, and efficient evaluation of a multipartite entanglement witness.

\medskip

\parhead{2-R\'enyi entropy estimation for $t$-doped stabilizer states.} \cref{alg:2renyi} exactly calculates the 2-R\'enyi entanglement entropy of a state $\psi$. We assume we know the stabilizer tableau $\mathcal{T}_S$ of the state $\psi$ as well as its bad generators $(h_i, \tr(h_i \psi))$. We first apply the map $\mathcal{P}_B: g_A \otimes g_B \to \id \otimes g_B$ to each row of the tableau. Then since $G_A=\ker(\mathcal{P}_B(\mathcal{T}_S))$, to estimate $|S_A|$, one computes $\dim(\ker(\mathcal{P}_B(\mathcal{T}_S)))$. Then, the map $\mathcal{P}_B(\mathcal{T}_S)$ is applied to the bad generators $h_i$, and if they are within $\rowspan(\mathcal{P}_B(\mathcal{T}_S))$, their corresponding values $\tr(h_i\psi)$ are summed. This yields $r=\sum_{\mathcal{P}_B(h_i)\in \rowspan(\mathcal{P}_B(\mathcal{T}_S))}\tr(h_i\psi)$. Finally, the $2$-Rényi entropy is $S_2(\psi_A)=n_A-\lvert S_A\rvert-\log(r)$. The overall computational complexity is $O(n^3 2^{4lt})$, because the linear algebraic operations each have a runtime $O(n^3)$, and we loop over $i=0,\ldots,k$, where $k$ can be as large as $k=2^{4lt}$.
\begin{algorithm}[H]
\caption{Exact calculation of the $2$-R\'enyi entanglement entropy across a bipartition $A|B$}\label{alg:2renyi}
\begin{algorithmic}[1]
\Require{A representation of $\psi$ in terms of its stabilizer tableau $\mathcal{T}_S$ and bad generators $\qty{(h_i, \tr(h_i \psi)) \mid i=0,\ldots,k}$.}
\Function{2R\'enyiEntropy}{$\psi$}
\State $\abs{S_A} \gets \dim(\ker(\mathcal{P}_B(\mathcal{T}_S)))$ \Comment{$\mathcal{P}_B$ is the map that takes $g_A \otimes g_B \to \id \otimes g_B$}
\State $r \gets 0$
\For{$i \gets 0$ to $k$} 
\State{$\vec{y}_i \in \mathbb{F}_2^{2n} \gets $ symplectic representation of $h_i$}
\If{$\mathcal{P}_B(\vec{y}_i) \in \rowspan(\mathcal{P}_B(\mathcal{T}_S))$} \Comment{$\mathcal{P}_B(\vec{y}_i)$ is in the row space of $\mathcal{P}_B(\mathcal{T}_G) \iff \delta_i=1$}
\State{$r \gets r + \tr^2(h_i \psi)$}
\EndIf
\EndFor
\State \Return {$n_A - \abs{S_A} - \log(r)$}
\EndFunction
\end{algorithmic}
\end{algorithm}

\parhead{Efficient entanglement monitoring in $t$-doped Clifford circuits.}
Here, we present an algorithm for classically monitoring the entanglement of $t$-doped Clifford circuits, \emph{without running anything on quantum hardware}. The task is to compute the entanglement entropy across a bipartition $A\vert B$ of a state generated by a sequence of gates $\{U_1,U_2,\ldots U_{|C|}\}$, where $|C|$ is the number of gates in the circuit. To start, we initialize the stabilizer tableau $\mathcal{T}_S$ corresponding to the stabilizer set $S\coloneqq \{Z_1,Z_2,\ldots,Z_n\}$ of the state $\ketbra{0}{0}^{\otimes n}$, to which we will sequentially apply the gates $U_i$. If $U_i$ is a Clifford gate, we update the tableau $\mathcal{T}_S$ using the stabilizer formalism. Otherwise, letting $X_i$ be the sites on which the non-Clifford gate acts, we perform a row reduction of $\mathcal{T}_S$ to reformat the tableau such that that all generators except the first $2|X_i|$ act trivially on the sites $X_i$, then we remove the first $2|X_i|$ generators from $S$. After applying every gate, we compute $\abs{S_A}=\dim(\ker(\mathcal{P}_B(\mathcal{T}_S)))$, and finally return $n_A-\abs{S_A}+2lt$ as our entanglement estimate. The algorithm has computational complexity $O(\abs{C} n^3)$, as the linear algebraic manipulations each have runtime $O(n^3)$ and they are repeated $\abs{C}$ times. 
\begin{algorithm}[H]
\caption{Entanglement monitoring}\label{alg:monitor}
\begin{algorithmic}[1]
\Require{A $t$-doped circuit $C$ in terms of a sequence of gates $\qty{U_1, U_2, \ldots, U_{\abs{C}}}$.}
\Require{A bipartition $A|B$}
\Ensure{$\tilde{S}_\alpha(\psi_A)$: an estimate of the entanglement entropy of the state $\psi$ generated by $C$ across $A|B$, such that $\abs*{\tilde{S}_\alpha(\psi_A)-S_\alpha(\psi_A)} \leq 2lt$}
\Function{EstimateEntropy}{$C$, $A|B$}
\State $S \gets \qty{Z_1, Z_2, \ldots, Z_n}$ \Comment{Initial state is $\ketbra{0}^{\otimes n}$}
\For{$i \gets 1, \ldots, \abs{C}$}
\If{$U_i$ is a Clifford gate}
\For{$g_j \in S$}
\State $g_j \gets U_i g_j U_i^\dagger$
\EndFor
\Else
\State $X_i \gets$ the sites on which $U_i$ acts (note that $l \coloneqq \max_i \abs{X_i}$)
\State Using row reduction on $\mathcal{T}_S$, reformat the stabilizer generators such that all except the first $2\abs{X_i}$ act as the identity on $X_i$
\State Remove the first $2\abs{X_i}$ generators from $S$
\EndIf
\EndFor
\State $\abs{S_A} \gets \dim(\ker(\mathcal{P}_B(\mathcal{T}_S)))$ \Comment{$\mathcal{P}_B$ is the map that takes $g_A \otimes g_B \to \id \otimes g_B$}
\State \Return $n_A - \abs{S_A} + 2lt$ \Comment{See \cref{lem:salpha-bound}.}
\EndFunction
\end{algorithmic}
\end{algorithm}    

\parhead{Efficient estimation for multipartite entanglement witness.} To efficiently estimate a multipartite entanglement witness, we have developed \cref{alg:witness}. In words, the algorithm takes as input the stabilizer tableau $\mathcal{T}_S$ associated with the target state $\ketbra{\psi}{\psi}$, a list of bipartitions $\mathcal{B} = \qty{(A_i,B_i) \mid i=1,\ldots,\abs{\mathcal{B}}}$, a threshold error $\epsilon$, and a maximum failure probability $\delta$. The algorithm then proceeds as follows:
It computes the minimum bound on the entropy across the bipartitions, $M= \min_i \lfloor\mathcal{E}(\psi;\,A_i|B_i) - lt\rfloor$, then samples a random stabilizer $P\in G$, and measures the expectation value of the observable $P$ over $\psi$. This process is repeated $N=\left\lceil \frac{2 \log(2/\delta)}{(1-\epsilon - 2^{-M})^2}\right\rceil$ times, and one computes $\tilde{\Pi}=\sum_{i}b_i/N$. At the end of the computation, the algorithm returns $2^{-M}-\tilde{\Pi}$. The algorithm has computational complexity $O(n^3\left\lceil \frac{2 \log(2/\delta)}{(1-\epsilon - 2^{-M})^2}\right\rceil +\mathcal{B}n^3)$. Estimating $\mathcal{E}(\psi;\,A_i|B_i)$ for each partition requires at most $O(n^3)$ resources, and such a task has to be repeated $\abs{\mathcal{B}}$ times, then the runtime to find the minimum value has runtime $O(\abs{\mathcal{B}})$. The loop, on the other hand, has computational complexity $O(n^3\left\lceil \frac{2 \log(2/\delta)}{(1-\epsilon - 2^{-M})^2}\right\rceil)$ due to the repeated sampling of Pauli operators. 
\begin{algorithm}[H]
\caption{Efficient estimation protocol for a multipartite entanglement witness}\label{alg:witness}
\begin{algorithmic}[1]
\Require{A representation of the target state $\ket{\psi}$ in terms of its stabilizer tableau $\mathcal{T}_S \in \mathbb{F}_2^{\abs{S} \times 2n}$}
\Require{A list of bipartitions $\mathcal{B} = \qty{(A_i,B_i) \mid i=1,\ldots,\abs{\mathcal{B}}}$}
\Require{An upper bound on the state preparation error $\epsilon \geq \frac{1}{2} \norm{\psi-\ketbra{\psi}}_1$ and a maximum failure probability $\delta$}
\Function{MultipartiteWitness}{$\psi,\mathcal{B},\epsilon,\delta$}
\State $M \gets \min_i \lfloor\mathcal{E}(\psi;\,A_i|B_i) - lt\rfloor$ 
\State $\tilde{\Pi} \gets 0$ \Comment{Running estimate of $\tr(\Pi \rho)$}
\State $N \gets \left\lceil \frac{2 \log(2/\delta)}{(1-\epsilon - 2^{-M})^2}\right\rceil$
\For{$i \gets 1, \ldots, N$}
\State $\vec{x} \gets $ a random binary vector $\mathbb{F}_2^{\abs{S}}$
\State $P \in \mathbb{P}_n \gets \vec{x}^T \mathcal{T}_S$ \Comment{Select a random stabilizer $P \in G$}
\State $b \in \qty{-1,1} \gets$ \Call{Measure}{$\psi, P$} \Comment{Measure the two-outcome Pauli observable $P$}
\State $\tilde{\Pi} \gets \tilde{\Pi} + b/N$
\EndFor
\State \Return{$2^{-M} - \tilde{\Pi}$}
\EndFunction
\end{algorithmic}
\end{algorithm}

\let\oldaddcontentsline\addcontentsline%

\renewcommand{\addcontentsline}[3]{}%
\section*{Acknowledgments} Special thanks to Ludovico Lami for inspiring our exploration of entanglement dilution, suggesting delving into entanglement witnessing, and providing guidance through the literature of entanglement theory.  We thank Sumeet Khatri for insightful comments on birth/death chains. We express our appreciation to Jacopo Rizzo for recommending a perspective on our results through the lens of entanglement irreversibility. Additionally, we acknowledge Antonio and Francesco A. Mele for their valuable comments and engaging discussions. We also thank Susanne Yelin, Pablo Bonilla, Nazl\i \ U\u{g}ur K\"oyl\"uo\u{g}lu, and Varun Menon for helpful discussions and suggestions on potential applications of the ED-MD framework. We thank the Unitary Fund for their support. A.G. acknowledges support from the NSF through the Q-IDEAS HDR, as well as the AWS Generation Q Fund at the Harvard Quantum Initiative. S.F.E.O. acknowledges support from PNRR MUR project PE0000023-NQSTI. L.L. is funded through the Munich Quantum Valley project (MQV-K8) by Bayerisches Staatsministerium für Wissenschaft und Kunst.

\newcommand{\etalchar}[1]{$^{#1}$}

\let\addcontentsline\oldaddcontentsline%

\end{document}